\newcommand{\N}{\mathbb{N}}
\newcommand{\Z}{\mathbb{Z}}
\newcommand{\K}{\mathbb{K}}
\newcommand{\miPoSyCo}{\textsc{MinPSC}}
\newcommand{\mpsc}{\textsc{\miPoSyCo}}
\DeclareMathOperator{\dist}{dist}
\newcommand{\ceq}{\coloneqq}
\newtheorem{theorem}{Theorem}[section]
\newtheorem{proposition}[theorem]{Proposition}
\newtheorem{lemma}[theorem]{Lemma}
\newtheorem{corollary}[theorem]{Corollary}
\newtheorem{observation}[theorem]{Observation}
\theoremstyle{definition}
\newtheorem{definition}[theorem]{Definition}
\newtheorem{example}[theorem]{Example}
  \newcommand{\dectaskprob}[3]{
  \begin{center}\vspace{1em}
  \begin{minipage}{0.95\textwidth}
    \noindent\textsc{#1}
    \begin{description}
     \item[Input:] #2
     \item[Task:] #3
    \end{description}
    \end{minipage}
    \end{center}\vspace{1em}
  }
  \newcommand{\optprob}[3]{
    \pagebreak[3]
    \begin{center}\vspace{1em}
    \begin{minipage}{0.95\textwidth}
    \noindent\textsc{#1}
    \begin{description}
     \item[Input:] #2
     \item[Task:] #3
    \end{description}
    \end{minipage}
    \end{center}\vspace{1em}
  }
\newcommand{\norm}[2]{\left\lVert#1\right\rVert_{#2}}
\DeclareMathOperator{\sign}{sign}
\newcommand{\Q}{\mathbb{Q}}
\newcommand{\Qnn}{\ensuremath{\Q_{\geq 0}}}
\newcommand{\R}{\mathbb{R}}
  \DeclareMathOperator*{\argmax}{arg\,max}
\newcommand{\etal}{et~al.}
\newcommand{\prob}[1]{\textsc{#1}}
\newcommand{\wmod}[1]{\ensuremath{\widehat{#1}}}
\newcommand{\miposycoTsc}{\prob{Min-Power Symmetric Connectivity}}
\newcommand{\miposycoAcr}{\prob{MPSC}}
\newcommand{\classSty}[1]{\ensuremath{\text{#1}}}
\newcommand{\NP}{\classSty{NP}}
\newcommand{\lin}{linearizable}
\newcommand{\Klin}[1]{#1\hyp $\K$\hyp \lin}
\newcommand{\Qlin}[1]{#1\hyp $\Q$\hyp \lin}
\newcommand{\Zlin}[1]{#1\hyp $\Z$\hyp \lin}
\newcommand{\calF}{\mathcal{F}}
\newcommandx{\set}[2][1=1]{\ensuremath{\{#1,\ldots,#2\}}}
\newcommandx{\tlog}[3][1=,3=]{\log_{#1}^{#3}(#2)}
\newcommandx{\eqr}[4][3=\alpha,4=\Q^d]{\ensuremath{#1\sim_{#3}^{#4}#2}}
\newcommandx{\eqc}[3][2=\alpha,3=\Q^d]{\ensuremath{[#1]_{#2}^{#3}}}
\newcommand{\conc}[2]{\ensuremath{#1\circ#2}}
\newcommand{\uflpTsc}{Uncapacitated Facility Location}
\newcommand{\uflpAcr}{UFL}
\newcommand{\muflpTsc}{Metric Uncapacitated Facility Location}
\newcommand{\muflpAcr}{MUFL}
\newcommand{\mytitle}{%
Polynomial-Time Data Reduction for Weighted Problems Beyond Additive Goal Functions
}
\newcommand{\tuaffil}{%
Technische Universit\"at Berlin, Faculty~IV, Algorithmics and Computational Complexity, Berlin, Germany
}
\numberwithin{equation}{section}
\crefname{example}{Example}{Examples}
\Crefname{example}{Example}{Examples}
\crefname{problem}{Problem}{Problems}
\crefname{transformation}{Transformation}{Transformations}
\crefname{rrule}{Reduction Rule}{Reduction Rules}
\crefname{part}{Part}{Parts}
\crefname{proposition}{Proposition}{Proposition}
\Crefname{proposition}{Prop.}{Props.}
\crefname{theorem}{Theorem}{Theorems}
\Crefname{theorem}{Thm.}{Thms.}
\crefname{observation}{Observation}{Observations}
\journal{Discrete Applied Mathematics}
\begin{document}

\begin{frontmatter}

\title{\mytitle}

\author[ad1]{Matthias~Bentert}\ead{matthias.bentert@tu-berlin.de}
\author[ad2]{Ren\'e~van~Bevern\fnref{fn1}}\ead{rene.van.bevern@huawei.com}
\fntext[fn1]{The results obtained in this article are unrelated to the author's work at Huawei.
}
\author[ad1]{Till~Fluschnik\corref{cor1}}\ead{till.fluschnik@tu-berlin.de}
\cortext[cor1]{Corresponding author}
\author[ad1]{Andr\'e~Nichterlein}\ead{andre.nichterlein@tu-berlin.de}
\author[ad1]{Rolf~Niedermeier}

\address[ad1]{\tuaffil{}}
\address[ad2]{Huawei Technologies Co., Ltd.}

\begin{abstract}
  Dealing with NP-hard problems,
  kernelization is a fundamental notion
  for polynomial\hyp time data reduction with performance guarantees:
  in polynomial time,
  a problem instance
  is reduced to an equivalent instance
  with size upper-bounded by a function
  of a parameter chosen in advance.
  Kernelization for weighted problems
  particularly requires to also shrink weights.
  Marx and V\'egh [ACM Trans.\ Algorithms 2015]
  and Etscheid et al. [J.\ Comput.\ Syst.\ Sci.\ 2017]
  used 
  a technique of Frank and Tardos [Combinatorica 1987]
  to obtain polynomial\hyp size kernels for weighted problems,
  mostly
  with additive goal functions.
  We characterize the function types
  that the technique is applicable to,
  which turns out to contain many non-additive functions.
  Using this insight,
  we systematically obtain kernelization results for %
  natural problems
  in
  graph partitioning,
  network design,
  facility location,
  scheduling,
  vehicle routing,
  and computational social choice,
  thereby improving and generalizing
  results from the literature. %
\end{abstract}

\begin{keyword}
NP\hyp hard problems\sep 
problem kernelization\sep 
weight reduction\sep
routing\sep 
scheduling\sep 
computational social choice\sep
partitioning

\end{keyword}

\end{frontmatter}
\thispagestyle{empty}
\section{Introduction}
\label{sec:ft!intro}

In the early eighties,
Gr\"otschel~\etal~\citep{GLS81} employed the famous ellipsoid method by Khachiyan
\citep{Kha79,Kha80} 
for solving the \prob{Weighted Independent Set} (WIS) problem:
Given an undirected graph~$G=(V,E)$ with vertex weights~$w\colon V\to \Q_+$,
find a set~$U\subseteq V$ such that~$U$ is an independent set and maximizes~$\sum_{v\in U} w(v)$.
Gr\"otschel~\etal~\citep{GLS81} proved WIS to be solvable in polynomial time on perfect graphs.
The running time of their algorithm,
however,
was only \emph{weakly} polynomial,
which led to the question whether
WIS on perfect graphs is
solvable in \emph{strongly} polynomial time.\footnote{Not to be confused with pseudo\hyp polynomial and
  polynomial running time, see,
  e.g.,
  Schrijver~\citep[Section~4.12]{Sch03} or 
  Gr{\"{o}}tschel et al.~\citep[Section~1.3]{GLS1988}.
  For strongly polynomial time,
  one requires from the algorithm
  to have space polynomial in the input size
  and that the number of elementary arithmetic and other operations
  executed by the algorithm does not depend
  on the sizes of the numbers in the input.}
In their seminal work,
Frank and Tardos~\citep{ft87} affirmatively answered this question by developing a (what we call) \emph{losing\hyp weight technique}.
Their technique employs a preprocessing algorithm that,
exemplified for WIS,
does the following:

\begin{example}[\prob{Weighted Independent Set}]
 \label[example]{ex:wis}
 In strongly polynomial time,
 compute vertex weights~$\wmod{w}$ such that %
  \begin{enumerate}[(a)]
  \item the encoding length of the maximum value of~$\wmod{w}$ is upper-bounded by a polynomial in the number of graph vertices,
  \end{enumerate}
  while preserving the relative quality of all solutions and non-solutions,
  that is,
  \begin{enumerate}[(b)]
  \item for every two (independent) sets~$U,U'\subseteq V$,
    it holds that $\sum_{v\in U} w(v)\geq \sum_{v\in U'} w(v)$ 
    if and only if $\sum_{v\in U} \wmod{w}(v)\geq \sum_{v\in U'} \wmod{w}(v)$.
  \end{enumerate}
\end{example}

Thus, WIS can be solved in strongly polynomial time on perfect graphs
by first applying the losing\hyp weight technique 
and then the algorithm of Gr\"otschel \etal{}~\citep{GLS81}.
To the best of our knowledge, 
the
technique  
was used the first time in the context of parameterized algorithmics %
by Fellows~\etal~\citep{FLM+08},
where it was used
to obtain fixed-parameter algorithms running in polynomial space.
Marx and V\'egh~\citep{MV15} first observed the connection of the losing\hyp weight technique to polynomial\hyp time data reduction, %
namely kernelization:
intuitively,
in polynomial time,
a problem instance
is reduced to an equivalent instance
with size upper-bounded by a function
of a problem-specific parameter.
Notably,
their kernelization first increases the size of the instance and then introduces additional edge weights.
Marx and V\'egh~\citep{MV15} stated that %
 ``[...] this technique seems to be an essential tool for kernelization of
problems involving costs.''
Subsequently,
Etscheid~\etal~\citep{ekmr17}
and \citet{KK20}
used the technique to prove polynomial kernels for several weighted problems,
supporting Marx and V\'egh's statement.

In almost all problems studied by the four papers mentioned above,
the goal functions are additive set functions
(that is, functions~$f$ satisfying $f(A\uplus B)=f(A)+f(B)$ for sets~$A$ and~$B$).
In the two cases where they are not,
ad\hyp hoc adaptions of \citeauthor{ft87}'
theorem~\cite{ft87} are used.
We present a method of systematically recognizing
non\hyp additive functions 
(which are not necessarily set functions)
to which the losing\hyp weight technique applies.

\paragraph{Our Contributions and Structure of this Work}

In \cref{sec:prelim-losing-weight},
we introduce basic notation
and give a brief introduction to
the losing\hyp weight technique. %
In~\cref{sec:twoprobs-nonlineargf},
we show how to apply the losing\hyp weight technique to
two problems with non\hyp additive goal functions
in graph partitioning and network design.
In~\cref{sec:alphlin},
we characterize what we call $\alpha$-\emph{\lin{}} functions
to which \citeauthor{ft87}'~\citep{ft87} losing\hyp weight technique applies.
Intuitively,
the parameter~$\alpha$ associated with a \lin{} function specifies how ``far'' the function is from an everywhere-linear function.
We additionally provide some tools
that allow for a convenient computation of a linearizable function's $\alpha$-value.
In \cref{ssec:apps},
we exemplify the versatility of these tools using problems
from %
network design, %
facility location, %
scheduling, %
vehicle routing, %
and computational social choice. %

We complement or improve several results in the literature:
In \cref{ssec:qlin},
  we settle an open problem on the
  kernelizability of the \textsc{Min-Power Symmetric Connectivity} problem~\citep{bbnn17}.
  In \cref{sec:uflp},
  we show a problem kernel for the \textsc{Uncapacitated Facility Location} problem
  whose size is polynomially upper-bounded in the number of the vertices of the input graph.
  Previously, 
  only problem kernels with size exponentially upper-bounded
  in the optimal solution cost
  (which is usually larger than the number of vertices)
  were known~\citep{FF11}.
  In \cref{sec:sched}, we shrink weights in
  several classical scheduling problems.
  Polynomial problem kernels for scheduling problems
  are rare \citep{BMNW15,KK20,MB18}
  and shrinking weights will necessarily be an ingredient
  in kernels for weighted scheduling problems.
  In \cref{ssec:rpp},
  we generalize a kernelization result
  for the \textsc{Rural Postman Problem} \citep{BNSW15}
  to the \textsc{Min-Max $k$-Rural Postman Problem}.
  In \cref{ssec:pvc},
  we prove a theorem on polynomial kernelization for the \textsc{Power Vertex Cover} problem that has been stated without proof in the literature~\citep{ABEL18}.

\section{Preliminaries and the Losing-Weight Technique}
\label{sec:prelim-losing-weight}
\label{ssec:losing-weight}

\subsection{Basic Notation and Definitions}
An $n$-dimensional vector~$x\in S^n$ for some set~$S$ is interpreted as a column vector,
and we denote by~$x^\top$ its transpose.
For two vectors~$x=(x_1,\dots,x_n)\in S^n$ and~$y=(y_1,\dots,y_m)\in T^m$,
we denote by~$\conc{x}{y}\ceq (x_1,\dots,x_n,y_1,\dots,y_m)\in (S\cup T)^{n+m}$ the concatenation of~$x$ and~$y$.
The \emph{$\ell_1$-norm}
of a vector~$x\in \R^n$
is $\norm{x}{1} := \sum_{i=1}^n |x_i|$.
The~$\ell_\infty$-norm (also known as max-norm) of~$x$ is~$\norm{x}{\infty} := \max_{i\in\set{n}}|x_i|$.
For a number $x\in\R$, 
the signum of~$x$ is defined by~$\sign(x)\ceq 1$ if $x>0$, $\sign(x)\ceq 0$ if $x=0$, and $\sign(x)\ceq -1$ if $x<0$.

Let~$\Sigma$ be a finite alphabet.
A set~$P\subseteq \Sigma^*\times\N$
is called a \emph{parameterized problem}.
In an \emph{instance}~$(x,k)\in\Sigma^*\times\N$,
we call $x$~the \emph{input} and $k$~the \emph{parameter}.
\begin{definition}
 \label{def:kernel}
 A \emph{problem kernelization}
 for a parameterized problem~$P\subseteq \Sigma^*\times\N$
 is an algorithm that, 
 given an instance~$(x,k)$,
 computes in polynomial time an instance~$(x',k')$
 such that 
 \begin{enumerate}[(i)]
 \item 
   $(x,k)\in P$ if and only if~$(x',k')\in P$, and
 \item 
   $|x'|+k'\leq f(k)$ for some  computable function~$f$ only depending on~$k$. 
 \end{enumerate}
 We call~$f$ the \emph{size} of the \emph{problem kernel}~$(x',k')$.
 If $f\in k^{O(1)}$, 
 then we call the problem kernel \emph{polynomial}.
 \end{definition}

\subsection{A Useful and Central Equivalence Relation}

In this section,
with the goal in mind to replace any given weight vector~$w$ 
by a ``representative'' weight vector~$\wmod{w}$ with upper-bounded~$\norm{\wmod{w}}{1}$,
we define an equivalence relation on vectors over~$\K\in\{\Z,\Q\}$.
Its equivalence classes will be formed
by partitioning the space using hyperplanes
with coefficients from
\begin{align}
 \Z_{r} &:= \{\pm p\in\Z \mid p\in\{0,\ldots,r\}\}\subseteq \Z \text{ or} \label{eq:zr}\\
 \Q_{r} &:= \left\{\pm\frac{p}{q}\,\,\middle|\,\,p\in\{0,\ldots,r\}, q\in\{1,\ldots,r\}\right\}\subseteq \Q .\label{eq:qr}
\end{align}
Specifically,
we will say that two vectors~$u$ and~$v$ are equivalent 
if and only if for all vectors~$\beta$ from some specific subset of~$\K_r$,
their dot products~$\beta^\top u$ and~$\beta^\top v$ have the same signum.
Geometrically speaking,
$u$ and~$v$ are equivalent
if and only if for all vectors~$\beta$ from some specific subset of~$\K_r$,
there is no hyperplane~$\{x\mid \beta^\top x=0\}$ separating~$u$ and~$v$.
Formally:

\begin{definition}
 \label{def:eqr}
 Let~$\K\in\{\Z,\Q\}$ and~$r,d\in\N$.
 Then,
 the binary relation~$\eqr{}{}[r][\K^d]$ on~$\Q^d$ is given by
 \[ \eqr{w}{w'}[r][\K^d] \iff \text{$\forall\beta\in \K_r^d$ with~$\norm{\beta}{1}\leq r$ it holds that~$\sign(\beta^\top w)=\sign(\beta^\top w')$} .\]
 For every~$w\in\Q^d$,
 let~$\eqc{w}[r][\K^d]\ceq \{w'\in \Q^d\mid \eqr{w'}{w}[r][\K^d]\}\subseteq \Q^d$ be the class of~$w$ under~$\eqr{}{}[r][\K^d]$.
\end{definition}

\begin{example}\label{ex:eqc}
Consider~$\Q^2$ and~$r=2$.
Any two vectors fall into the same class under~$\eqr{}{}[2][\Z^2]$ 
if and only if 
they cannot be separated by vectors from~$\Z^2$ with entries in~$\{0,\pm 1\}$
(see~\cref{fig:ftpartition} for an illustration).
\end{example}

We prove next that the relation from~\cref{def:eqr} is an equivalence relation.
\begin{figure}[t!]
  \centering
  \begin{tikzpicture}

    \usetikzlibrary{calc}

    \def\xr{1}
    \def\yr{1}

    \tikzstyle{invn}=[inner sep=0pt]

    \node (xl) at (-4*\xr,0)[invn]{};
    \node (xr) at (4*\xr,0)[invn,label=0:{$w_1$}]{};
    \node (xz) at (0,0)[invn]{};
    \node (xt) at (0,4*\yr)[invn,label=90:{$w_2$}]{};
    \node (xb) at (0,-4*\yr)[invn]{};

    \node (xrt) at (xr|-xt)[invn]{};
    \node (xrb) at (xr|-xb)[invn]{};
    \node (xlt) at (xl|-xt)[invn]{};
    \node (xlb) at (xl|-xb)[invn]{};

    \draw[very thick,->,>=latex] (xb) to (xt);
    \draw[very thick,->,>=latex] (xl) to (xr);

    \def\xeps{0.1}
    \def\opc{0.15}

    \node at (xz)[fill=black,opacity=\opc,circle,label={[yshift=-2.25em,color=gray]$C_0$}]{};

    \newcommand{\conetr}[1]{\draw[opacity=\opc,fill=#1,rounded corners] ($(xt.center)+(2*\xeps,0)$) -- ($(xz.center)+(2*\xeps,4*\xeps)$) -- ($(xrt.center)+(-2*\xeps,0)$);}

    \newcommand{\conebr}[1]{\draw[opacity=\opc,fill=#1,rounded corners] ($(xr.center)+(0,2*\xeps)$) -- ($(xz.center)+(4*\xeps,2*\xeps)$) -- ($(xrt.center)+(0,-2*\xeps)$);}

    \newcommand{\coned}[1]{\draw[thick,dashed] (xz) to (xrt);
      \draw[opacity=\opc,fill=#1,rounded corners]  ($(xrt.center)+(-1*\xeps,0)$) -- ($(xz.center)+(\xeps,2*\xeps)$) -- ($(xz.center)+(2*\xeps,\xeps)$) -- ($(xrt.center)+(0,-1*\xeps)$);}

    \newcommand{\coneu}[1]{\draw[opacity=\opc,fill=#1,rounded corners]  ($(xt.center)+(-\xeps,0)$) -- ($(xz.center)+(-\xeps,1*\xeps)$) -- ($(xz.center)+(1*\xeps,1*\xeps)$) -- ($(xt.center)+(1*\xeps,0)$);}

    \newcommand{\spinit}[5]{
      \foreach \x/\y in {0/#2,90/#3,180/#4,270/#5}{\begin{scope}[transform canvas={xshift = -\xr*0cm,yshift=\yr*0cm,rotate=\x}]#1{\y};\end{scope}}
    }

    \spinit{\conetr}{green}{red}{blue}{yellow};
    \spinit{\conebr}{blue!60!magenta}{yellow!60!magenta}{green!60!magenta}{red!60!magenta};
    \spinit{\coned}{cyan}{magenta}{orange}{brown};
    \spinit{\coneu}{orange!60!red}{brown!60!red}{cyan!60!red}{magenta!60!red};

    \foreach[count=\i,evaluate=\i as \angle using (\i-1)*360/16] \j in {1,...,16}\node[] (node\i) at (\angle:3.5)[color=gray]{$C_{\j}$};

    \draw[very thick,dotted,gray] (2*\xr,-2*\xr) -- (2*\xr,2*\xr) -- (-2*\xr,2*\xr) -- (-2*\xr,-2*\xr) -- cycle;
    \node (w) at (2.4*\xr,3.4*\yr)[circle,fill=black,scale=0.5,draw,label=90:{$w\in\Q^d$}]{};
    \node (wh) at (0.9*\xr,2*\yr)[circle,fill=black,scale=0.5,draw,label=-90:{$\wmod{w}\in\Z^d$}]{};
    \draw[->,>=latex] (w) to (wh);

  \end{tikzpicture}
  \caption{Illustration of the equivalence classes~$C_0,C_1,\ldots,C_{16}$ regarding~$\eqr{}{}[r][\Z^d]$ partitioning~$\Q^d$ with~$d=2$ and~$r=2$.
  \cref{thm:FrankTardos} is exemplified with some~$w$ and~$\wmod{w}$,
  each of which belonging to the equivalence class~$C_4$,
  where the dotted rectangle illustratively frames all vectors fulfilling~\cref{thm:FrankTardos}\eqref{ft:i}.
  }
  \label{fig:ftpartition}
\end{figure}

\begin{observation}\label{obs:eqc}
 For every~$\K\in\{\Z,\Q\}$ and~$r,d\in\N$,
 the relation~$\eqr{}{}[r][\K^d]$ on~$\Q^d$ is an equivalence relation.
\end{observation}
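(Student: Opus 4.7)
The plan is to observe that $\eqr{}{}[r][\K^d]$ is defined as a universally quantified statement about the equality of signum values $\sign(\beta^\top w)$ and $\sign(\beta^\top w')$, and that any binary relation of the form ``$P(w) = P(w')$ for all choices from some family'' is automatically an equivalence relation because it inherits reflexivity, symmetry, and transitivity from equality on the codomain of $\sign$.

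Concretely, I would verify the three axioms directly from \cref{def:eqr}. For reflexivity, fix any $w \in \Q^d$; for every $\beta \in \K_r^d$ with $\norm{\beta}{1} \leq r$ we trivially have $\sign(\beta^\top w) = \sign(\beta^\top w)$, so $\eqr{w}{w}[r][\K^d]$. For symmetry, assume $\eqr{w}{w'}[r][\K^d]$; then for every admissible $\beta$ we have $\sign(\beta^\top w) = \sign(\beta^\top w')$, and swapping the two sides yields $\sign(\beta^\top w') = \sign(\beta^\top w)$, i.e.\ $\eqr{w'}{w}[r][\K^d]$. For transitivity, suppose $\eqr{w}{w'}[r][\K^d]$ and $\eqr{w'}{w''}[r][\K^d]$; then for every admissible $\beta$, $\sign(\beta^\top w) = \sign(\beta^\top w') = \sign(\beta^\top w'')$, so $\eqr{w}{w''}[r][\K^d]$.

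There is essentially no obstacle here; the statement is a routine sanity check that formalizes the intuition given in \cref{ex:eqc} and in the geometric description preceding \cref{def:eqr}. The slight subtlety worth flagging is that the quantification is over the \emph{same} set of test vectors $\beta$ on both sides of the relation, which is exactly what makes the chained equalities in the transitivity step go through. Since the family $\{\beta \in \K_r^d : \norm{\beta}{1} \leq r\}$ does not depend on $w$ or $w'$, no compatibility issues arise, and the proof can safely be given in a single short paragraph.
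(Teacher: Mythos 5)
Your proof is correct and matches the paper's own argument: both verify reflexivity, symmetry, and transitivity directly from the definition, with transitivity following by chaining the equalities $\sign(\beta^\top w)=\sign(\beta^\top w')=\sign(\beta^\top w'')$ over the same fixed family of test vectors $\beta$. Nothing is missing.
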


\begin{proof}
  Let~$\K\in\{\Z,\Q\}$ and~$r,d\in\N$. %
  Clearly, 
  $\eqr{w}{w}[r][\K^d]$ (reflexivity)
  and~$\eqr{w}{w'}[r][\K^d]\iff \eqr{w'}{w}[r][\K^d]$ (symmetry).
  Moreover,
  if~$\eqr{w}{w'}[r][\K^d]$ and~$\eqr{w'}{w''}[r][\K^d]$,
  then~$\eqr{w}{w''}[r][\K^d]$ (transitivity):
  For every $\beta\in \K_r^d$ with~$\norm{\beta}{1}\leq r$,
  one has~$\sign(\beta^\top w)=\sign(\beta^\top w')=\sign(\beta^\top w'')$.
\end{proof}

Next,
we prove some properties of~$\eqr{}{}[r][\K^d]$ and~$\eqc{\cdot}[r][\K^d]$.
The first property is the following:

\begin{observation}\label[observation]{obs:smallerClass}
 Let~$\K\in\{\Z,\Q\}$, $d\in \N$, and~$w\in\K^d$.
 For every~$r,r'\in \N$ with~$r\leq r'$ it holds that~$\eqc{w}[r][\K^d]\supseteq\eqc{w}[r'][\K^d]$.
\end{observation}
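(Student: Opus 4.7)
The plan is simply to unfold the definition of $\eqr{}{}[r][\K^d]$ and observe that increasing $r$ strictly enlarges the family of test vectors $\beta$ used to distinguish equivalence classes, so raising $r$ can only make the relation finer and the classes smaller.

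More concretely, I would take an arbitrary $w' \in \eqc{w}[r'][\K^d]$; by \cref{def:eqr} this means that $\sign(\beta^\top w) = \sign(\beta^\top w')$ for every $\beta \in \K_{r'}^d$ with $\norm{\beta}{1} \leq r'$. I want to conclude that the same equality holds for every $\beta \in \K_r^d$ with $\norm{\beta}{1} \leq r$, which would yield $w' \in \eqc{w}[r][\K^d]$.

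The key sub\hyp observation, read directly off of the definitions~\eqref{eq:zr} and~\eqref{eq:qr}, is that $r \leq r'$ implies $\K_r \subseteq \K_{r'}$: for $\K = \Z$ the allowed integers $\pm p$ with $p \in \set[0]{r}$ are contained in those with $p \in \set[0]{r'}$, and for $\K = \Q$ the set of admissible numerators and denominators grows in exactly the same way. Therefore $\K_r^d \subseteq \K_{r'}^d$, and moreover any $\beta$ with $\norm{\beta}{1} \leq r$ also satisfies $\norm{\beta}{1} \leq r'$.

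Putting the pieces together, every $\beta$ that qualifies as a test vector for $\eqr{}{}[r][\K^d]$ also qualifies as a test vector for $\eqr{}{}[r'][\K^d]$, so the signum equality inherited from $w' \in \eqc{w}[r'][\K^d]$ applies verbatim, giving $\eqr{w}{w'}[r][\K^d]$ and hence $w' \in \eqc{w}[r][\K^d]$. I do not foresee any real obstacle: the statement is a pure monotonicity consequence of the definitions, and the only thing worth spelling out is the two containments $\K_r \subseteq \K_{r'}$ and $\{\beta : \norm{\beta}{1} \leq r\} \subseteq \{\beta : \norm{\beta}{1} \leq r'\}$.
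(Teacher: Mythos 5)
Your proposal is correct and follows essentially the same route as the paper's own proof: fix $w'\in\eqc{w}[r'][\K^d]$, note that every test vector $\beta\in\K_r^d$ with $\norm{\beta}{1}\leq r$ also lies in $\K_{r'}^d$ with $\norm{\beta}{1}\leq r'$, and conclude the signum equality transfers. Nothing is missing.
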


\begin{proof}
 Let~$\K\in\{\Z,\Q\}$, $d\in \N$,~$w\in\K^d$,
 and~$r,r'\in \N$ with~$r\leq r'$.
 Let~$w'\in \eqc{w}[r'][\K^d]$.
 We prove that~$w'\in\eqc{w}[r][\K^d]$.
 To this end,
 let~$\beta\in\K_{r}^d$ with~$\norm{\beta}{1}\leq r$.
 Note that~$\beta\in\K_{r'}^d\supseteq \K_{r}^d$ and~$\norm{\beta}{1}\leq r\leq r'$.
 Hence,
 $\sign(\beta^\top w)=\sign(\beta^\top w')$,
 and, therefore,
 $\eqr{w}{w'}[r][\K^d]$.
\end{proof}

Reconsider~\cref{ex:eqc} to exemplify~\cref{obs:smallerClass}. 
If we change~$r$ to one,
then,
for instance,
the union of the equivalence classes~$C_2$, $C_3$, and~$C_4$ forms an equivalence class,
say~$C$,
under~$\eqr{}{}[1][\Z^2]$.
Since~$w\in C_4$,
we have that~$w\in C$.

The second property is the following:

\begin{observation}
 \label[observation]{obs:signandorder}
 Let~$w=(w_1,\dots,w_d)\in\Q^d$, $d\in\N$, and~$\K\in\{\Z,\Q\}$. Then,
 \begin{enumerate}[(i)]
  \item for every~$r\geq 1$ and~$w'=(w_1',\dots,w_d')\in\eqc{w}[r][\K^d]$ it holds that~\(\sign(w_i)=\sign(w_i')\) for all~\(i\in\set{d}\);\label{obs:signandorder:sign}
  \item for every~$r\geq 2$ and~$w'=(w_1',\dots,w_d')\in\eqc{w}[r][\K^d]$ it holds that~\(\sign(w_i-w_j)=\sign(w_i'-w_j')\) for all~\(i,j\in\set{d}\).\label{obs:signandorder:order}
 \end{enumerate}
\end{observation}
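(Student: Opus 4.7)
The plan is to prove each part by exhibiting a specific test vector $\beta$ from the appropriate $\K_r^d$ and then invoking the defining condition of $\eqr{w}{w'}[r][\K^d]$ directly.

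For part~\eqref{obs:signandorder:sign}, I would fix $i \in \set{d}$ and take $\beta \ceq e_i \in \Z^d$, the $i$-th standard unit vector. All entries of $\beta$ lie in $\{0,1\}$, so $\beta \in \Z_1^d \subseteq \K_r^d$ for any $r \geq 1$, and $\norm{\beta}{1} = 1 \leq r$. Hence $\beta$ is admissible in \cref{def:eqr}, and since $\beta^\top w = w_i$ and $\beta^\top w' = w_i'$, the condition $\sign(\beta^\top w) = \sign(\beta^\top w')$ gives precisely $\sign(w_i) = \sign(w_i')$.

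For part~\eqref{obs:signandorder:order}, I would fix $i,j \in \set{d}$. If $i=j$ the claim is trivial, so assume $i \neq j$ and take $\beta \ceq e_i - e_j \in \Z^d$. All entries of $\beta$ lie in $\{-1,0,1\}$, hence $\beta \in \Z_2^d \subseteq \K_r^d$ for any $r \geq 2$, and $\norm{\beta}{1} = 2 \leq r$. Thus $\beta$ is admissible in \cref{def:eqr}, and since $\beta^\top w = w_i - w_j$ and $\beta^\top w' = w_i' - w_j'$, the defining equality of signs yields $\sign(w_i - w_j) = \sign(w_i' - w_j')$.

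There is no real obstacle here: the observation essentially says that the equivalence relation respects coordinatewise signs (when $r \geq 1$) and pairwise comparisons (when $r \geq 2$), and both facts are immediate from plugging in the obvious test vectors. The only points to check carefully are (a) that the entries of the chosen $\beta$ lie in $\K_r$ (which holds because $\{0,\pm 1\} \subseteq \Z_r \subseteq \Q_r$ for $r \geq 1$), and (b) that $\norm{\beta}{1}$ does not exceed the budget $r$, which is why part~\eqref{obs:signandorder:order} needs $r \geq 2$ while part~\eqref{obs:signandorder:sign} already works for $r \geq 1$.
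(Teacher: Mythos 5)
Your proof is correct and is precisely the argument the paper leaves implicit (the observation is stated without an explicit proof): testing the defining condition of the equivalence class with the unit vectors $e_i$ (admissible since $\norm{e_i}{1}=1\leq r$ for $r\geq 1$) and with the differences $e_i-e_j$ (admissible since $\norm{e_i-e_j}{1}=2\leq r$ for $r\geq 2$) yields parts (i) and (ii) directly, and your checks that the entries lie in $\K_r$ and that the $\ell_1$-budget explains the thresholds $r\geq 1$ versus $r\geq 2$ are exactly the relevant points.
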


Reconsider~\cref{ex:eqc} to illustrate~\cref{obs:signandorder}.
For instance,
we have that
for every vector~$w=(w_1,w_2)\in C_2$ it holds that~$w_1,w_2>0$ and~$w_2< w_1$,
for every vector~$w=(w_1,w_2)\in C_3$ it holds that~$w_1,w_2>0$ and~$w_1=w_2$,
and for every vector~$w=(w_1,w_2)\in C_4$ it holds that~$w_1,w_2>0$ and~$w_1< w_2$.

\subsection{Losing-Weight Technique}
Our work heavily relies on the following seminal result:
\begin{theorem}[Frank and Tardos {\citep[Section~3]{ft87}}]
  \label{thm:FrankTardos}
  On inputs~$w\in\Q^d$ and integer~$N$,
  one can compute in time polynomial in the encoding length of~$w$ and~$N$
  a vector~$\wmod{w}\in \Z^d$ with
  \begin{enumerate}[(i)]
   \item $\norm{\wmod{w}}{\infty}\leq 2^{4d^3}(N+1)^{d(d+2)}$ such that\label{ft:i}
   \item $\sign(w^\top b)=\sign(\wmod{w}^\top b)$
  for all~$b\in\Z^d$ with~$\norm{b}{1}\leq N$.
  \end{enumerate}
\end{theorem}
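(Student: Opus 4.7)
The plan is to invoke simultaneous Diophantine approximation via the LLL lattice basis reduction algorithm of Lenstra, Lenstra, and Lov\'asz. The tool I would apply is the following well\hyp{}known consequence of LLL: given $w\in\Q^d$ and a rational $\epsilon\in(0,1)$, one computes in polynomial time a positive integer $q$ and $p\in\Z^d$ with $\norm{qw-p}{\infty}\leq\epsilon$ and $q\leq 2^{d(d+1)/4}\epsilon^{-d}$. The candidate output is $\wmod{w}:=p$ for an appropriately chosen $\epsilon$.

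First I would rescale $w$ so that $\norm{w}{\infty}\leq 1$; this affects no sign $\sign(w^\top b)$. For any $b\in\Z^d$ with $\norm{b}{1}\leq N$, the triangle inequality gives
\[|\wmod{w}^\top b-q\cdot w^\top b|\leq \norm{b}{1}\cdot\norm{\wmod{w}-qw}{\infty}\leq N\epsilon.\]
Since $\wmod{w}^\top b\in\Z$, either $\wmod{w}^\top b=0$ or $|\wmod{w}^\top b|\geq 1$. Choosing $\epsilon<1/(2N)$ ensures $\sign(\wmod{w}^\top b)=\sign(w^\top b)$ whenever $|qw^\top b|>N\epsilon$. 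Setting $\epsilon$ of order $(N+1)^{-(d+2)}$, the LLL bound on $q$ yields $\norm{\wmod{w}}{\infty}\leq 2^{O(d^2)}(N+1)^{d(d+2)}$, which matches the target shape up to the constant in the exponent of~$2$.

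The genuine obstacle is the ``dangerous zone'' $0<|qw^\top b|\leq N\epsilon$: there $w^\top b\neq 0$ but $\wmod{w}^\top b=0$, so the signs fail to match. My plan is to handle this by recursion on dimension: if LLL reveals a short integer relation $b\in\Z^d$ with $\norm{b}{1}\leq N$ and $b^\top w=0$ (corresponding to an unusually short vector in the underlying approximation lattice), I would restrict to the sublattice $L_b:=\{x\in\Z^d:b^\top x=0\}$ of rank~$d-1$, re\hyp{}express $w$ in an integer basis of $L_b$, and recurse in dimension~$d-1$. After at most $d$ such steps the process terminates on a ``generic'' $w$ without short zero relations, for which the direct argument above applies; the recovered vector is then lifted back through the chain of changes of basis.

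What remains is routine bookkeeping: tracking how the LLL bound on $q$ and the change\hyp{}of\hyp{}basis matrices from each $L_b$ compound across the at most $d$ recursion levels. These factors multiply to the advertised $2^{4d^3}(N+1)^{d(d+2)}$, the exponent $4d^3$ arising from the product of per\hyp{}level $2^{O(d^2)}$ growth over~$d$ levels. The one conceptually nontrivial step is recognizing that pure approximation cannot preserve zero directions and must be paired with a dimensional recursion; everything else is controlled arithmetic.
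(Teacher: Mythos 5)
First, note that the paper does not prove this statement at all: it is quoted verbatim from Frank and Tardos (Section~3 of their paper), so there is no in-paper proof to match; your sketch has to be judged against the known Frank--Tardos argument, whose main subroutine is indeed the LLL-based simultaneous Diophantine approximation you invoke (the paper even mentions this in its concluding remarks). Your first step is therefore the right starting point, and your observation that a single approximation round cannot preserve signs in the ``dangerous zone'' is the correct diagnosis of the main difficulty.

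However, your proposed repair does not work, and this is a genuine gap. The problem in the dangerous zone is \emph{not} confined to exact short integer relations $b^\top w=0$: for a $b$ with $\norm{b}{1}\leq N$ the value $w^\top b$ can be nonzero yet arbitrarily small (its size depends on the denominators of $w$, i.e.\ on the encoding length of $w$), while your $\epsilon$ and the LLL bound on $q$ may depend only on $d$ and $N$ if the output bound $2^{4d^3}(N+1)^{d(d+2)}$ is to hold. Concretely, take $d=2$, $N=1$, $w=(1,1/M)$ with $M$ huge: $w$ has no integer relation of $\ell_1$-norm at most $N$, so your dimension-reduction recursion never triggers and you are in the ``generic'' case, yet any $p$ with $q\leq 2^{d(d+1)/4}\epsilon^{-d}$ and $\norm{qw-p}{\infty}\leq\epsilon<1/2$ forces $p_2=0$ once $M>2q$, so $\sign(p^\top b)=0\neq\sign(w^\top b)$ for $b=(0,1)$. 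Thus ``no short zero relations'' does not make the one-shot argument applicable, and recursing on the sublattice $\{x:b^\top x=0\}$ attacks the wrong object. The actual Frank--Tardos construction instead recurses on the \emph{residual} $w':=qw-p$, which records exactly the low-order information discarded by rounding: one iterates the approximation on $w'$ (after renormalizing), collects integer vectors $p_1,p_2,\dots,p_k$, shows $k\leq d$ (the $p_i$ are linearly independent), and outputs a single integer vector of the form $\wmod{w}=\sum_{i=1}^k C^{\,k-i}p_i$ with $C$ chosen large enough (in terms of $d$ and $N$ only) that $\sign(\wmod{w}^\top b)$ equals the sign of the first nonzero $p_i^\top b$, which in turn equals $\sign(w^\top b)$; for $b$ orthogonal to all $p_i$ one has $w^\top b=0=\wmod{w}^\top b$. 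Without this residual recursion and the lexicographic combination step, your argument establishes the theorem only for vectors $w$ whose nonzero products $w^\top b$ are bounded away from zero by a quantity depending on $d$ and $N$, which is far weaker than the stated result.
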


We briefly explain how
\cref{thm:FrankTardos}
relates to \cref{ex:wis}:
The vertex weights in \prob{Weighted Independent Set} can be interpreted as a vector~$w\in\Q^d$
with~$d:=|V|$.
Any two vertex subsets~$U,U'\subseteq V$
can be interpreted as vectors~$u,u'\in\{0,1\}^d$,
where $u_v=1$ if and only if~$v\in U$,
and $u'_v=1$ if and only if $v\in U'$.
Then,
$\sum_{v\in U} w_v = u^\top w$ and $\sum_{v\in U'} w_v=u'^\top w$.
With~$b:=u-u'$,
the statement of \cref{ex:wis}(b) can thus
be rewritten as
$b^\top w\geq 0 \iff b^\top \wmod{w}\geq 0$.
Since~$\norm{b}{1}\leq |V|$,
applying \cref{thm:FrankTardos} to~$w$ with $N:=|V|$
yields a new weight vector $\wmod{w}$
satisfying \cref{ex:wis}(a) 
and \cref{ex:wis}(b).

\cref{thm:FrankTardos} also works for decision rather than optimization problems.
Indeed,
the application to decision problems is a direct corollary,
first stated by Marx and V\'egh~\citep[Remark~3.15]{MV15} and then formalized by Etscheid~\etal~\citep{ekmr17},
thereby observing that the value given additionally along the description of the decision problem can be ``attached'' to the weight vector.

\begin{corollary}%
 \label{cor:FrankTardos}
  Given~$w\in\Q^d$, 
  $k\in\Q$,
  and~$N \in \N$,
  in time polynomial in the encoding length of~$w$, 
  $k$, 
  and~$N$,
  one can compute 
  a vector~$\wmod{w}\in \Z^d$~and an integer~$\wmod{k}\in\Z$ such that
  \begin{enumerate}[(i)]
   \item $\norm{\wmod{w}}{\infty},|\wmod{k}|\leq 2^{4(d+1)^3}(N+1)^{(d+1)(d+3)}$ and
    \item $w^\top b\leq k\iff \wmod{w}^\top b\leq \wmod{k}$
  for all~$b\in\Z^d$ with~$\norm{b}{1}\leq N-1$.
  \end{enumerate}
\end{corollary}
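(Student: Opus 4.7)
The plan is to reduce \cref{cor:FrankTardos} to \cref{thm:FrankTardos} by absorbing the threshold~$k$ into the weight vector as an additional coordinate, at the price of raising the dimension from~$d$ to~$d+1$. Concretely, I would form the augmented vector $w' \ceq \conc{w}{(-k)} \in \Q^{d+1}$, apply \cref{thm:FrankTardos} to~$w'$ with the same parameter~$N$, and then split the resulting integer vector $\wmod{w'} \in \Z^{d+1}$ into its first~$d$ coordinates (which constitute~$\wmod{w}$) and its last coordinate, whose negative is defined to be~$\wmod{k}$. Since the encoding length of~$w'$ is polynomially bounded in the encoding lengths of~$w$ and~$k$, the running time of \cref{thm:FrankTardos} on this input is polynomial in the encoding lengths of~$w$, $k$, and~$N$, as required.

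Next, I would verify the two claims. Property~(i) falls out directly: \cref{thm:FrankTardos} in dimension $d+1$ yields $\norm{\wmod{w'}}{\infty} \leq 2^{4(d+1)^3}(N+1)^{(d+1)(d+3)}$, and both $\norm{\wmod{w}}{\infty}$ and $|\wmod{k}|$ inherit this bound. For property~(ii), fix any $b \in \Z^d$ with $\norm{b}{1} \leq N-1$ and set $b' \ceq \conc{b}{1} \in \Z^{d+1}$. Then $(w')^\top b' = w^\top b - k$, $(\wmod{w'})^\top b' = \wmod{w}^\top b - \wmod{k}$, and $\norm{b'}{1} = \norm{b}{1} + 1 \leq N$. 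The sign preservation guaranteed by \cref{thm:FrankTardos} therefore gives $\sign(w^\top b - k) = \sign(\wmod{w}^\top b - \wmod{k})$, and since $x \leq 0 \iff \sign(x) \in \{-1,0\}$, this rewrites as $w^\top b \leq k \iff \wmod{w}^\top b \leq \wmod{k}$.

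I do not expect a genuine obstacle here; the argument is a single packing trick followed by bookkeeping. The only subtlety worth flagging is the reason the statement bounds $\norm{b}{1}$ by $N-1$ rather than by~$N$: the extra coordinate~$1$ appended to~$b$ in the reduction consumes exactly one unit of the $\ell_1$-budget, so starting from $\norm{b}{1} \leq N-1$ is precisely what is needed to satisfy the hypothesis $\norm{b'}{1} \leq N$ of \cref{thm:FrankTardos}.
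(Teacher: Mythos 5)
Your proof is correct and follows essentially the same route as the paper, which (citing Marx and V\'egh and Etscheid~et~al.) attaches the threshold~$k$ as an extra coordinate of the weight vector and applies \cref{thm:FrankTardos} in dimension~$d+1$; your choice of appending~$-k$ and padding~$b$ with~$+1$ is just the mirror image of the paper's convention of appending~$k$ and padding with~$-1$ (as used later in the proof of \cref{prop:ft!alphaZlin}). Your remark that the extra coordinate consumes one unit of the $\ell_1$-budget correctly explains the $\norm{b}{1}\leq N-1$ condition.
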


Whenever we are facing a weighted problem with an additive goal function,
that is, 
for example finding some set~$S$ such that~$\sum_{s\in S} w(s)$ is minimized (or maximized),
the application of \cref{thm:FrankTardos} is  immediate.
So it is for the well-known \prob{Knapsack} problem,
as proven by Etscheid~\etal~\citep{ekmr17},
giving the affirmative answer to the open question~\citep{CLF+14,FellowsGMS12} of whether \prob{Knapsack} admits a kernel of size polynomial in the number if items:
\begin{example}
  \label{ex:ft!knapsack}
 Recall the \prob{Knapsack} problem:
 Given a set~$X=\set{n}$ of items with
 weights $w\colon X\to\Q$ and values~$v\colon X\to\Q$,
 and rational numbers~$k,\ell\in \Q$,
 the question is
 whether there is a subset~$S\subseteq X$ of items such that~$\sum_{i\in S} w(i)\leq k$ and~$\sum_{i\in S} v(i)\geq \ell$.
 Let~$w$ and~$v$ be interpreted as $n$-dimensional vectors with~$w_i\ceq w(i)$ and~$v_i\ceq v(i)$.
 Applying \cref{cor:FrankTardos} 
 once with input~$w$,
 $k$, 
 and~$N\ceq n+1$,
 and once with input
 $v$,
 $\ell$, 
 and~$N$,
 (where~$d=n$ in each application) 
 yields an equivalent instance of~\prob{Knapsack} where the weights and values are of encoding-length polynomial in~$n$.
 Hence,
 this yields a problem kernel of size polynomial in~$n$.
\end{example}

\subsection{Losing-Weight Technique and our Equivalence Relation Combined}

Note that~\cref{thm:FrankTardos}(ii) is equivalent to
$\wmod{w}$ being contained in the equivalence class~$\eqc{w}[N][\Z^d]$ of~$w$.
Hence,
\cref{thm:FrankTardos},
given a $d$-dimensional vector~$w$ and any positive integer~$N$,
efficiently computes an integral representative~$\wmod{w}$ from $w$'s equivalence class where each entry can be upper-bounded by some number only depending on~$d$ and~$N$
(see~\cref{fig:ftpartition} for an illustrative example).
Consequently,
we can rephrase \cref{thm:FrankTardos} with respect to our equivalence relation 
as follows:
\begin{theorem}[\cref{thm:FrankTardos} rephrased]
  \label{thm:FrankTardosRephr}
  On inputs~$w\in\Q^d$ and integer~$N$,
  in time polynomial in the encoding length of~$w$ and~$N$
  one can compute 
  a vector~$\wmod{w}\in \Z^d\cap \eqc{w}[N][\Z^d]$ with $\norm{\wmod{w}}{\infty}\leq 2^{4d^3}(N+1)^{d(d+2)}$.
\end{theorem}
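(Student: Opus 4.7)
The plan is to observe that the rephrased statement differs from~\cref{thm:FrankTardos} only in how the signum\hyp preservation condition~(ii) is packaged, and then invoke~\cref{thm:FrankTardos} directly. Running the Frank--Tardos algorithm on inputs $w$ and $N$ yields, in time polynomial in the encoding length of~$w$ and~$N$, a vector $\wmod{w}\in\Z^d$ satisfying $\norm{\wmod{w}}{\infty}\leq 2^{4d^3}(N+1)^{d(d+2)}$ as well as condition~(ii) of~\cref{thm:FrankTardos}. Both the running time bound and the $\ell_\infty$ bound on~$\wmod{w}$ are thus inherited verbatim from~\cref{thm:FrankTardos}, so the only thing left to verify is the membership $\wmod{w}\in\eqc{w}[N][\Z^d]$.

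To establish this, I would unfold \cref{def:eqr}: $\wmod{w}\in\eqc{w}[N][\Z^d]$ requires that $\sign(\beta^\top w)=\sign(\beta^\top\wmod{w})$ holds for every $\beta\in\Z_N^d$ with $\norm{\beta}{1}\leq N$, where $\Z_N=\{n\in\Z\colon |n|\leq N\}$ by~\eqref{eq:zr}. The key observation is that the entry\hyp wise bound encoded in $\Z_N^d$ is already implied by $\norm{\beta}{1}\leq N$: any integer vector whose coordinate absolute values sum to at most~$N$ has each individual coordinate of absolute value at most~$N$, and therefore lies in $\Z_N^d$. Hence the quantification appearing in \cref{def:eqr} coincides with a quantification over all $b\in\Z^d$ with $\norm{b}{1}\leq N$, which is exactly the quantification in~\cref{thm:FrankTardos}(ii) after renaming $\beta$ to~$b$.

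No real obstacle arises: once the two quantifications are identified, the membership $\wmod{w}\in\eqc{w}[N][\Z^d]$ is an immediate restatement of~\cref{thm:FrankTardos}(ii), and the theorem follows. The only subtle point is to keep track of the integrality of~$\beta$ while comparing the two formulations, but this is automatic because \cref{def:eqr} already ranges over~$\Z_N^d\subseteq\Z^d$ and~\cref{thm:FrankTardos}(ii) ranges over~$\Z^d$.
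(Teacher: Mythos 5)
Your proposal is correct and matches the paper's own (implicit) justification: the paper simply notes that \cref{thm:FrankTardos}(ii) is equivalent to $\wmod{w}\in\eqc{w}[N][\Z^d]$, and your observation that $\norm{\beta}{1}\leq N$ already forces $\beta\in\Z_N^d$, so the two quantifications coincide, is exactly the point that makes this equivalence hold. Nothing further is needed.
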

For convenience,
we will refer to~\cref{thm:FrankTardosRephr}
(instead of~\cref{thm:FrankTardos})
in the remainder of this work.

\section{Two Case Studies with Non-Additive Goal Functions}
\label{sec:twoprobs-nonlineargf}

In this section,
we show two applications
of \cref{thm:FrankTardosRephr}
to optimization problems with non\hyp additive goal functions.
In \cref{ex:wis} (\prob{Weighted Independent Set})
and \cref{ex:ft!knapsack} (\prob{Knapsack})
the used representation of the vectors has a one-to-one
correspondence to solution candidates: 
Any solution candidate to WIS or \prob{Knapsack} is a set of vertices or items, 
respectively. 
Such a set can clearly be represented with a (binary) vector 
and \emph{every} (binary) vector represents a solution candidate.
Yet,
is the second requirement needed?
In several of the applications that we are going to present, 
this is in fact not the case. 
Our core idea is hence as follows:
We still require that every solution candidate can be represented as a vector,
however, 
we do \emph{not} require every vector to represent a solution candidate. 
Note that this is fine since 
$\eqr{}{}[][]$
holds for all vectors~$b$ from the vector space containing vectors representing solution candidates, 
and thus, 
also for all vectors that do represent solution candidates.
We next exemplify our idea using two problems
with non\hyp additive goal functions
and formalize them later in \cref{sec:alphlin}.

\subsection[The Case of SSE]{The Case of \prob{Small Set Expansion}}
\label{ssec:SSE}

Consider the following graph partitioning problem,
which was studied %
in the context of
bicriteria approximation~\citep{BFK+14} and the unique games conjecture~\citep{RS10}.

\dectaskprob{\prob{Small Set Expansion} (SSE)}
  {An undirected graph~$G = (V,E)$ with edge weights~$w\colon E\to \Q_+$.}%
  {Find a non-empty subset~$S\subseteq V$ of size at most~$|S|\leq n/2$ that minimizes
    \begin{align}
      \frac{1}{|S|} \sum_{e\in (S,V\setminus S)} w(e),\label{gfsse}
    \end{align}
    where~$(S,V\setminus S)$ denotes the set of all edges with exactly one endpoint in~$S$.
}%

The goal function's value for a vertex set~$S$ can be represented by~$w^\top s$ for a \emph{fractional} vector~$s\in \{0,1/|S|\}^{|E|}$, 
where~$w$ is interpreted as vector and an entry of~$s$ is non-zero if and only if the corresponding edge is in the edge cut~$(S,V\setminus S)$.
Fractional numbers,
however,
are not captured by \cref{thm:FrankTardosRephr}.
Yet, 
with a scaling argument we can derive the following analog to~\cref{thm:FrankTardosRephr} dealing with fractional numbers:

\begin{proposition}%
  \label[proposition]{thm:FrankTardosRational}
  On input~$w\in\Q^d$ and integer~$r\in\N$,
  one can compute in time polynomial in the encoding length of~$w$ and~$r$
  a vector~$\wmod{w}\in \Z^d\cap \eqc{w}[r][\Q^d]$
  with
    $\norm{\wmod{w}}{\infty}\leq 2^{4d^3}(r^2+1)^{r\cdot d(d+2)}$.
\end{proposition}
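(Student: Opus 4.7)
The approach is a standard clearing-of-denominators reduction to the integer case \cref{thm:FrankTardosRephr}. Any vector $\beta\in\Q_r^d$ has entries of the form $\pm p/q$ with $q\in\{1,\dots,r\}$, so a positive integer $M$ divisible by every element of $\{1,\dots,r\}$ (for instance $M:=r!$) turns $\beta$ into an integer vector $M\beta\in\Z^d$. Since $M>0$, the sign $\sign(\beta^\top x)$ agrees with $\sign((M\beta)^\top x)$ for every $x\in\Q^d$, so a statement about the equivalence class $\eqc{w}[r][\Q^d]$ can be translated into a statement about integer sign constraints handled by \cref{thm:FrankTardosRephr}.

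Concretely, the plan is to fix $M:=r!$, note the crude bound $M\le r^r$, set $N:=Mr$, and apply \cref{thm:FrankTardosRephr} to $w$ with this $N$ to obtain in polynomial time a vector $\wmod{w}\in\Z^d\cap\eqc{w}[N][\Z^d]$ with $\norm{\wmod{w}}{\infty}\le 2^{4d^3}(N+1)^{d(d+2)}$. To verify $\wmod{w}\in\eqc{w}[r][\Q^d]$, take any $\beta\in\Q_r^d$ with $\norm{\beta}{1}\le r$ and set $\gamma:=M\beta\in\Z^d$; then $\norm{\gamma}{1}\le Mr=N$, so by \cref{thm:FrankTardosRephr},
\[
  \sign(\beta^\top w)\;=\;\sign(\gamma^\top w)\;=\;\sign(\gamma^\top\wmod{w})\;=\;\sign(\beta^\top\wmod{w}),
\]
where the outer equalities use $M>0$ and the middle one is the guarantee of \cref{thm:FrankTardosRephr}.

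The remaining step is arithmetic bookkeeping to match the claimed bound. From $M\le r^r$ one gets $N+1\le r^{r+1}+1$, and by elementary inequalities (expanding the binomial $(r^2+1)^r$ and retaining the top term $r^{2r}$) one checks $r^{r+1}+1\le r^{2r}+1\le (r^2+1)^r$ for every $r\ge 1$, giving the desired $\norm{\wmod{w}}{\infty}\le 2^{4d^3}(r^2+1)^{r\cdot d(d+2)}$. The encoding length of $N$ is $O(r\log r)$, so the call to \cref{thm:FrankTardosRephr} runs in time polynomial in the encoding length of $w$ and in $r$, as required. There is no real conceptual obstacle; the only care required is choosing the common denominator $M$ small enough (any bound $M\le r^r$ suffices) that the inflation $N=Mr$ still fits inside the exponential expression $(r^2+1)^r$ appearing in the norm bound.
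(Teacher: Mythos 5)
Your proposal is correct and follows essentially the same route as the paper: both clear denominators via $M=r!$, apply \cref{thm:FrankTardosRephr} with $N=r!\cdot r$, and translate each $\beta\in\Q_r^d$ with $\norm{\beta}{1}\le r$ into $r!\,\beta\in\Z^d$ with $\norm{r!\,\beta}{1}\le N$ to transfer the sign guarantees. Your bookkeeping $N+1\le r^{r+1}+1\le (r^2+1)^r$ matches the paper's bound, so nothing is missing.
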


  \begin{proof}
  Apply~\cref{thm:FrankTardosRephr} with~$N=r!\cdot r$ 
  to obtain a vector~$\wmod{w}\in \Z^d\cap \eqc{w}[r!r][\Z^d]$ 
  with
  \[\norm{\wmod{w}}{\infty}\leq 2^{4d^3}(N+1)^{d(d+2)} = 2^{4d^3}(r!\cdot r+1)^{d(d+2)} \leq 2^{4d^3}(r^2+1)^{r\cdot d(d+2)}.\]
  It remains to prove that~$\wmod{w}\in \Z^d\cap \eqc{w}[r][\Q^d]$.
  Let~$b^*\in\Q_r^d$ with~$\norm{b^*}{1}\leq r$,
  and let~$b'\ceq r!\cdot b^*\in \Z^d_{r!\cdot r}$.
  Note that~$\norm{b'}{1}\leq r!\cdot r = N$.
  Thus,
  due to~\cref{thm:FrankTardosRephr},
  we have that
  \begin{align*}
      \sign(w^\top b^*) =  \sign(\wmod{w}^\top b^*) 
      &\iff \sign(r!\cdot w^\top b^*) =  \sign(r!\cdot\wmod{w}^\top b^*) \\
      &\iff \sign(w^\top (r!\cdot b^*)) =  \sign(\wmod{w}^\top (r!\cdot b^*)) \\
      &\iff \sign(w^\top b') =  \sign(\wmod{w}^\top b'). \qedhere
  \end{align*}
\end{proof}

From \cref{thm:FrankTardosRational},
we get the following.

\begin{lemma}
 \label{lem:magicweightsSSE}
 For an input instance~$(G=(V,E),w)$ of \prob{Small Set Expansion} with $n:=|V|$ and $m:=|E|$,
 in time polynomial in~$|(G,w)|$ 
 one can compute 
 an instance~$(G,\wmod{w})$
 of \prob{Small Set Expansion} such that
 \begin{enumerate}[(i)]
  \item $\norm{\wmod{w}}{\infty}\leq 2^{4m^3}\cdot (n^4\cdot m^2+1)^{n^2m^2(m+2)}$ and
  \item a solution $S\subseteq V$
    for $(G,w)$ is optimal if and only if it is optimal for~$(G,\wmod{w})$.
 \end{enumerate}
\end{lemma}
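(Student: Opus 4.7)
The plan is to encode each feasible solution $S$ of \prob{Small Set Expansion} as a rational vector $s_S\in\Q^m$ indexed by edges, where $(s_S)_e\ceq 1/|S|$ if $e\in(S,V\setminus S)$ and $(s_S)_e\ceq 0$ otherwise. With this encoding, the goal function~\eqref{gfsse} evaluated on~$S$ becomes the inner product $w^\top s_S$, so two feasible solutions~$S$ and~$S'$ are ranked against each other according to the sign of $w^\top(s_S-s_{S'})$. Preserving this sign for every pair of feasible solutions preserves the set of optimal solutions, so the task reduces to finding integer weights~$\wmod{w}$ that agree with~$w$ in sign on every difference $b\ceq s_S-s_{S'}$.

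I would then invoke \cref{thm:FrankTardosRational} with dimension $d=m$ and a sufficiently large radius~$r$ to be fixed below. To choose~$r$, I would first bound the fractions occurring as entries of~$b$: each entry of $b$ has one of the forms $0$, $\pm 1/|S|$, $\pm 1/|S'|$, or $\pm(|S'|-|S|)/(|S|\cdot|S'|)$; since $1\leq |S|,|S'|\leq n$, every such entry lies in $\Q_{n^2}$. Next, I would bound the $\ell_1$-norm: the nonzero entries of~$s_S$ correspond to cut edges, of which there are at most $m$, each of value at most~$1$, so $\|s_S\|_1\leq m$ and hence $\|b\|_1\leq 2m$. Thus the choice $r\ceq n^2 m$ ensures that $b\in\Q_r^m$ with $\|b\|_1\leq r$ for every pair of feasible solutions. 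With this choice, \cref{thm:FrankTardosRational} yields $\wmod{w}\in\Z^m\cap\eqc{w}[r][\Q^m]$ satisfying
\[
\|\wmod{w}\|_\infty \;\leq\; 2^{4m^3}(r^2+1)^{r\cdot m(m+2)} \;=\; 2^{4m^3}(n^4m^2+1)^{n^2m^2(m+2)},
\]
which is exactly the bound claimed in~(i). Property~(ii) then follows immediately from the defining property of $\eqr{}{}[r][\Q^m]$ applied to each admissible~$b=s_S-s_{S'}$; positivity of the new weights (so that the instance remains well-formed) is inherited from $w>0$ via \cref{obs:signandorder}\eqref{obs:signandorder:sign}, since $r\geq 1$.

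The main conceptual obstacle, and the step worth emphasising, is that solutions of \prob{Small Set Expansion} do not correspond to arbitrary $0/1$-vectors over edges but to the strict subfamily of $\{0,1/|S|\}$-valued vectors induced by subsets $S$ with $1\leq|S|\leq n/2$. This mismatch is harmless here because \cref{thm:FrankTardosRational} preserves signs uniformly on all of $\Q_r^m$, not only on vectors representing actual solutions, so it suffices to accommodate any superset of the relevant differences at the mild cost of a slightly larger radius~$r$. This ``more vectors than solutions'' perspective, anticipated in the introductory paragraph of this section, is precisely what will later allow the losing-weight technique to reach genuinely non-additive goal functions.
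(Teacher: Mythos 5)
Your proposal is correct and follows essentially the same route as the paper's proof: encode each solution~$S$ as the fractional vector $s_S\in\{0,1/|S|\}^m$, bound the entries of $b=s_S-s_{S'}$ by $\Q_{n^2}$ and its $\ell_1$-norm by a quantity at most $n^2m$, and apply \cref{thm:FrankTardosRational} with $d=m$ and $r=n^2m$ to get both the weight bound in~(i) and the sign preservation giving~(ii). The only differences are cosmetic (your $\ell_1$-bound of $2m$ versus the paper's cruder $n^2m$, and your explicit remark on positivity via \cref{obs:signandorder}), neither of which changes the argument.
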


\begin{proof}
  Denote the edges of~$G$ as~$E=\{e_1,\ldots,e_m\}$ 
  and the weight functions~$w$ and~$\wmod{w}$ as vectors in~$\N^m$ such that~$w_i=w(e_i)$ and~$\wmod{w}_i=\wmod{w}(e_i)$ for all~$i\in\{1,\dots,m\}$.
 Apply \cref{thm:FrankTardosRational} with~$d=m$ and~$r=n^2 m$.
 Let~$S\subseteq V$ 
 and let $s\in\{0,1/|S|\}^m$ be the vector %
 such that $s_i\neq 0$ if and only if~$e_i\in (S,V\setminus S)$.
 Let~$S'\subseteq V$ be another set,
 and let~$s'\in\{0,1/|S'|\}^m$ with $s_i'\neq 0$ if and only if~$e_i\in (S',V\setminus S')$.
 Let~$b:=s-s'$. 
 Note that for each~$i\in\{1,\dots,m\}$ it holds that
 \[|s_i-s_i'|= \left|\frac{|S'|s_i}{|S'|} - \frac{|S|s_i'}{|S|}\right| \in \left\{0,\left|\frac{|S'|-|S|}{|S|\cdot |S'|}\right|,\frac{1}{|S|},\frac{1}{|S'|}\right\},\]
 and hence~$b\in \Q_{n^2}^m$ and~$\norm{b}{1}\leq n^2 m$.
 We thus get
 \begin{align*}
   s^\top w - (s')^\top w \leq 0 &\iff (s-s')^\top w\leq 0
   \\  &\stackrel{\mathclap{\text{\Cref{thm:FrankTardosRational}}}}{\iff}\  (s-s')^\top \wmod{w}\leq 0 
   \iff s^\top \wmod{w} - (s')^\top \wmod{w}\leq 0.\qedhere
 \end{align*}
\end{proof}

\subsection[The Case of MiPoSyCo]{The Case of \miposycoTsc{}}
\label{ssec:MiPoSyCo}
The previous case of \prob{Small Set Expansion} showed how \cref{thm:FrankTardosRephr}
can be applied to weighted sums.
Next we show how to deal with a non-additive functions involving maxima.
To this end, 
consider the following \NP-hard optimization problem from survivable network design~\citep{acm06,cps04},
which has also been studied in the context of parameterized complexity
with practical results \citep{BBN+xx,bbnn17} (same for the asymmetric case~\citep{BentertHHKN20}).
\dectaskprob{\miposycoTsc{} (\miposycoAcr{})}%
{A connected undirected graph~$G = (V,E)$ and edge weights~$w\colon E \to \N$.}%
{Find a connected spanning subgraph~$T=(V,F)$ of~$G$ that minimizes
\begin{align}
  \sum_{v \in V} \max_{\{u,v\}\in F}w(\{u,v\}).\label{gfmpsc}
\end{align}
}

Applying~\cref{thm:FrankTardosRephr} to the goal function~\eqref{gfmpsc}
is not obvious:
Let~$E=\{e_1,\ldots,e_m\}$ and the weight function~$w$ be represented as a vector in~$\N^m$ such that~$w_i=w(e_i)$.
Let~$b\in\{0,1\}^m$  be the vector representing the edge set~$F$ of a solution~$T=(V,F)$,
that is, 
$b_i=1$ if and only if~$e_i\in F$.
Then, the value~$w^\top b$ is \emph{not} equal to~$\sum_{v\in V} \max_{\{u,v\}\in F} w(\{u,v\})$.
See~\cref{fig:miposyco}(a) for an example.
\begin{figure}
 \centering
  \begin{tikzpicture}
      
      \tikzstyle{xnode}=[fill,circle,scale=1/2,draw]
      \tikzstyle{xedge}=[-]
      \tikzstyle{xxedge}=[-,ultra thick, blue]
      \def\xr{0.95}
      \def\yr{1}

      \begin{scope}[yshift=-\yr*1.25cm] 
        \node at (-0.5*\xr,2.6*\yr+0.25*\yr)[]{(a)};
        \node (a) at (0,0)[xnode,label=180:{$u$}]{};
        \node (b) at (1.25*\xr,0.75*\yr)[xnode,label=-90:{$v$}]{};
        \node (c) at (1.25*\xr,2.5*\yr)[xnode,label=90:{$x$}]{};
        \node (d) at (2.5*\xr,0)[xnode,label=0:{$y$}]{};
        
        \draw[xxedge] (a) -- node[above]{3}(b);
        \draw[xedge] (a) -- node[above left]{8}(c);
        \draw[xxedge] (b) -- node[left,yshift=-\yr*0.6em]{1}(c);
        \draw[xedge] (c) -- node[above right]{10}(d);
        \draw[xxedge] (b) -- node[above]{2}(d);
        \draw[xedge] (a) -- node[below]{7}(d);       
      \end{scope}

      \begin{scope}[xshift=\xr*8.75cm] 
      \node at (-4.75*\xr,2.6*\yr-\yr*1)[]{(b)};
      \node at (0*\xr,0*\yr){\small
        \setlength{\tabcolsep}{4pt}
        \renewcommand{\arraystretch}{0.85}
        \begin{tabular}{@{}rcccccc@{}}\toprule
                    & $\{u,v\}$ & $\{u,x\}$ & $\{u,y\}$ & $\{v,x\}$ & $\{v,y\}$ & $\{x,y\}$ \\\cmidrule{2-7}
          $w$ & 3 & 8 & 7 & 1 & 2 & 10 \\
          \midrule\midrule
          $u$ & \textcolor{blue}{\textbf{1}}  & 1 & 1  & 0                            & 0                            & 0 \\
          $v$ & \textcolor{blue}{\textbf{1}}  & 0 & 0  & 1                            & 1                            & 0 \\
          $x$ & 0                             & 1 & 0  & \textcolor{blue}{\textbf{1}} & 0                            & 1 \\
          $y$ & 0                             & 0 & 1  & 0                            & \textcolor{blue}{\textbf{1}} & 1 \\
          \midrule\midrule
          $b:=$ & \textcolor{blue}{\textbf{2}} & 0 & 0 & \textcolor{blue}{\textbf{1}} & \textcolor{blue}{\textbf{1}} & 0  \\
          \bottomrule
        \end{tabular}
      };
      \end{scope}
  \end{tikzpicture}
 \caption{Illustrative example for \miposycoAcr{} and the application of~\cref{thm:FrankTardosRephr}. 
 (a) depicts an edge-weighted undirected example graph with a connected spanning subgraph (indicated by thick edges) of edge-weight six, 
 and 
 (b) shows the incidence matrix of the graph in (a), 
 the vector~$w$ of edge-weights,
 and the vector~$b$ representing the solution from~(a) with goal function value~$w^\top b=9$.}
 \label{fig:miposyco}
\end{figure}
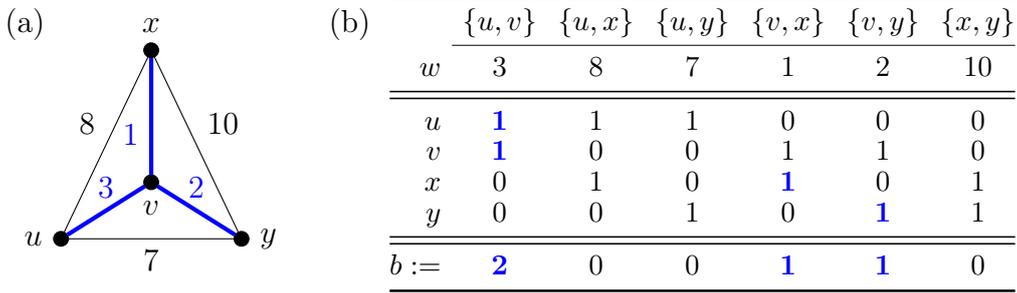

However,
we can circumvent this issue (arising from the $\max$-function in the goal function)
and still apply~\cref{thm:FrankTardosRephr}.
To this end,
observe that we only need to find a correct representation of a solution.
An edge~$e\in F$ contributes its weight to~\eqref{gfmpsc}
each time 
it appears in the (maximum in the) sum,
that is, either zero, one, or two times.
Hence,
a solution can be represented as vector~$b\in\{0,1,2\}^m$
such that the term $w(e_i)$~appears $b_i\in\{0,1,2\}$
times in the sum of the cost function regarding~$T=(V,F)$.
See~\cref{fig:miposyco}(b) for an example.
This change of the representation of a solution
only changes the domain of the vector~$b$, 
and hence increases the value of~$N$
in the application of~\cref{thm:FrankTardosRephr} by a factor of two.
Eventually,
we obtain:

\begin{lemma}
 \label{lem:magicweights}
 For an input instance~$(G=(V,E),w)$ of \miposycoAcr{} with $m:=|E|$,
 in time polynomial in~$|(G,w)|$ 
 one can compute
 an instance~$(G,\wmod{w})$
 of \miposycoAcr{} such that
 \begin{enumerate}[(i)]
  \item $\norm{\wmod{w}}{\infty}\leq 2^{4m^3}\cdot (2m+1)^{m(m+2)}$ and
  \item a connected subgraph $T=(V,F)$ of~\(G\)
    is an optimal solution for $(G,w)$ if and only if $T$ is an optimal solution for~$(G,\wmod{w})$.
 \end{enumerate}
\end{lemma}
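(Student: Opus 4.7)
The plan is to interpret both the edge-weight vector and every candidate solution as $m$-dimensional vectors and then invoke \cref{thm:FrankTardosRephr} with $d = m$ and $N = 2m$. Fix an enumeration $E = \{e_1, \dots, e_m\}$ and view $w$ as the vector with $w_i = w(e_i)$. Following the representation idea sketched right before the lemma, I would encode any candidate subgraph $T = (V, F)$ by a vector $b_T \in \{0,1,2\}^m$ as follows: for each vertex $v \in V$, fix some edge $e_v \in F$ attaining $\max_{\{u,v\} \in F} w(\{u,v\})$, and set $b_{T,i} := |\{v \in V : e_v = e_i\}|$. Since each edge has exactly two endpoints, indeed $b_T \in \{0,1,2\}^m$, and by construction $w^\top b_T$ equals the objective value of~$T$ under~$w$.

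Applying \cref{thm:FrankTardosRephr} with $d = m$ and $N = 2m$ yields, in polynomial time, a vector $\wmod{w} \in \Z^m \cap \eqc{w}[2m][\Z^m]$ with $\norm{\wmod{w}}{\infty} \leq 2^{4m^3}(2m+1)^{m(m+2)}$, which is exactly item~(i). The delicate point for item~(ii) is that $b_T$ was built from an argmax under~$w$, so it is not immediate that $\wmod{w}^\top b_T$ still equals the objective value of~$T$ under~$\wmod{w}$. Here I would invoke \cref{obs:signandorder}(ii): since $N = 2m \geq 2$ (the case $m = 0$ being trivial), the observation guarantees $\sign(w_i - w_j) = \sign(\wmod{w}_i - \wmod{w}_j)$ for all $i, j$, so the full weight order, \emph{including ties}, is preserved. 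Consequently, every edge $e_v$ attaining the maximum at~$v$ under~$w$ also attains the maximum at~$v$ under~$\wmod{w}$, and the \emph{same} vector $b_T$ correctly encodes the cost of $T$ under both weightings.

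To conclude, for any two candidate solutions $T$ and $T'$ the difference $b_T - b_{T'}$ lies in $\Z^m$ with $\norm{b_T - b_{T'}}{1} \leq 2m = N$, so the defining property of $\eqc{w}[2m][\Z^m]$ yields
\[ \sign\bigl(w^\top (b_T - b_{T'})\bigr) = \sign\bigl(\wmod{w}^\top (b_T - b_{T'})\bigr); \]
that is, $T$ is at least as cheap as $T'$ under~$w$ if and only if the same holds under~$\wmod{w}$, which proves item~(ii). The main obstacle is conceptual rather than technical: one has to commit to a specific vector encoding of a solution, which intrinsically depends on tie-breaking, and then certify that the choices made under~$w$ remain valid under~$\wmod{w}$. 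The order-preservation guarantee baked into the choice $N = 2m$ dispatches this cleanly.
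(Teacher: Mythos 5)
Your proposal is correct and follows essentially the same route as the paper: the identical $\{0,1,2\}^m$ encoding of a solution (counting how often each edge realizes a vertex maximum), the same application of \cref{thm:FrankTardosRephr} with $d=m$, $N=2m$, and the same appeal to \cref{obs:signandorder}\eqref{obs:signandorder:order} to certify that the chosen encoding still represents the cost under~$\wmod{w}$, followed by the difference-vector sign argument. The only detail the paper adds is a one-line use of \cref{obs:signandorder}\eqref{obs:signandorder:sign} to note that $\wmod{w}\geq 0$, so that $(G,\wmod{w})$ is a well-defined \miposycoAcr{} instance.
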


\begin{proof}
  Denote the edges of~$G$ as~$E=\{e_1,\ldots,e_m\}$
  and %
  the weight functions~$w$ and $\wmod{w}$
  as (column) vectors in~$\N^m$ such that~$w_i=w(e_i)$ and~$\wmod{w}_i=\wmod{w}(e_i)$ for all~$i\in\{1,\dots,m\}$.
 We apply \cref{thm:FrankTardosRephr} with~$d=m$ and~$N=2m$ to the weight vector~$w$ and obtain the weight vector~$\wmod{w}$.
 \cref{thm:FrankTardosRephr} immediately implies statement~(i).
 Moreover, recall that~$\wmod{w}_i\geq 0$ for all~$i\in\{1,\ldots,m\}$ due to
 \cref{obs:signandorder}\eqref{obs:signandorder:sign},
 and hence~$(G,\wmod{w})$ is well-defined.

 Next, we prove statement~(ii).
 Let $T=(V,F)$ be a connected subgraph of~$G$
 and let $s\in\{0,1,2\}^m$ be an
 $m$-dimensional vector such that
 the term~$w(e_i)$ 
 appears $s_i$~times in~\eqref{gfmpsc}.
 Then,
 $\sum_{v \in V} \max_{\{u,v\}\in F}w(\{u,v\}) = s^\top w$.
 For a connected subgraph~$T'=(V,F')$ of~\(G\),
 let~$s'\in\{0,1,2\}^m$ be derived analogously so that
 the cost of~$T'$ is~$(s')^\top w$.
 Define~$b:=s-s'$.
 Note that $-2\leq b_i\leq 2$ for each~$i\in\set{m}$.
 Hence, $\norm{b}{1}\leq 2m=N$.
 Moreover, 
 due to~\cref{thm:FrankTardosRephr},
 we have that~$\wmod{w}\in\eqc{w}[2m][\Z^m]$,
 and hence
 \begin{align*} 
   s^\top w - (s')^\top w \leq 0 &\iff (s-s')^\top w\leq 0
   \\  &\stackrel{\mathclap{\norm{b}{1}\leq 2m}}{\iff}\ (s-s')^\top \wmod{w}\leq 0 
   \iff s^\top \wmod{w} - (s')^\top \wmod{w}\leq 0.
 \end{align*}
 Finally,
 note that due to
 \cref{obs:signandorder}\eqref{obs:signandorder:order},
 the goal function's values for both~$T$ and~$T'$
 with respect to~$\wmod{w}$
 are still correctly represented by~$s$ and~$s'$,
 that is,
 \begin{align*} 
      \sum_{v \in V} \max_{\{u,v\}\in F}\wmod{w}(\{u,v\}) &= s^\top \wmod{w} & \text{and} && %
      \sum_{v \in V} \max_{\{u,v\}\in F'}\wmod{w}(\{u,v\}) &= (s')^\top \wmod{w}.
     \qedhere                                                                                        %
 \end{align*}
\end{proof}

\section{Linearizable Functions}
\label{sec:alphlin}

In this section, 
we provide our central framework formalizing our key idea from the previous section.
Our framework bases on our notion of \emph{\lin{}} functions.
Before presenting the formal definition (see \cref{def:alphlin} below),
we recap the central insights from the previous section.

Our case studies for \prob{Small Set Expansion} and \miposycoTsc{} show that problems with non-additive goal functions still allow for an application of the losing\hyp weight technique.
A natural question is what characterizes these goal functions.
Both of our cases have in common that,
for any weight vector~$w$, 
the goal function's value for every solution~$s$ can be represented as~$b_s^\top w$ with~$b_s$ being a vector associated with~$s$.
Moreover, to apply the losing\hyp weight technique, we also need that if we change the weight vector to a ``smaller''  weight vector~$\wmod{w}$, then the goal function's value is still represented for solution~$s$ as~$b_s^\top \wmod{w}$ and vice versa 
(for this we used \cref{obs:signandorder}\eqref{obs:signandorder:order} in the proof of \cref{lem:magicweights}).
That is, we want that the value of solution~$s$ with respect to~$w$
is~$b_s^\top w$
if and only if the value of solution~$s$ with respect to~$\wmod{w}$
is~$b_s^\top \wmod{w}$.
Formally,
this is captured by the following.
(Let $\K_r$ with~$\K\in\{\Z,\Q\}$ be as defined in \eqref{eq:zr} and~\eqref{eq:qr}.)

\begin{definition}
 \label[definition]{def:alphlin}
  Let~$f\colon L\times \Q^d\to \Q$,
  where $L$ (here and in the following)
  is some arbitrary domain.
  We say that $f$ is \emph{\Klin{$\alpha$}}
  for some~$\alpha\in \N$
  if 
  for all~$w\in\Q^d$
  and for all~$x\in L$
  there exists a vector~$b_{x,w}\in\K_\alpha^d$ with~$\norm{b_{x,w}}{1}\leq \alpha$ 
  such that~$f(x,w')=b_{x,w}^\top w'$ for all $w'\in \eqc{w}[\alpha][\K^d]$.
\end{definition}

Intuitively,
an~\Klin{$\alpha$} function %
maps a solution 
(contained in the set~$L$) 
together with a weight vector to a number.
For a fixed weight vector
this number 
can be expressed for \emph{every} solution as the product of some vector representing the solution and the weight vector.
Moreover,
this representation of the solution is robust against exchanging weight vectors with any representative from its class.

We start with three basic properties of \lin{} functions.

\begin{observation}
	\label[observation]{obs:1linweightfcts}
	For any set~$X = \{x_1, x_2, \ldots, x_d\}$
  and any weight vector~$w=(w_1,w_2,\allowbreak\dots,w_d)\in\Q^d$,
	the function~$f\colon X \times \Q^{d} \to \Q$, $(x_i, w) \mapsto w_i$ %
  is \Klin{$1$} for every~$\K\in\{\Z,\Q\}$.
\end{observation}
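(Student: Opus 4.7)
The plan is to exhibit, for each $x_i\in X$ and each $w\in\Q^d$, an explicit witness vector $b_{x_i,w}$ satisfying \cref{def:alphlin} with $\alpha=1$. The natural candidate is the $i$-th standard basis vector $e_i\in\{0,1\}^d$, and I expect the verification to be essentially mechanical, since this observation serves as a base case for subsequent, more elaborate linearizability arguments.

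First I would fix $\K\in\{\Z,\Q\}$, $w\in\Q^d$, and $x_i\in X$, and simply set $b_{x_i,w}\ceq e_i$. This choice does not even depend on~$w$, which is slightly stronger than \cref{def:alphlin} demands and in particular obviates any reasoning about the equivalence class $\eqc{w}[1][\K^d]$.

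Next I would check the three conditions from \cref{def:alphlin} in turn. The entries of $e_i$ lie in $\{0,1\}$, which is contained in both $\Z_1=\{0,\pm 1\}$ (by \eqref{eq:zr}) and $\Q_1=\{0,\pm 1\}$ (by \eqref{eq:qr}), so $e_i\in\K_1^d$. Its $\ell_1$-norm equals $1=\alpha$. Finally, for every $w'\in\Q^d$ (and hence, trivially, for every $w'\in\eqc{w}[1][\K^d]$) one has $e_i^{\top}w'=w'_i=f(x_i,w')$ directly from the definition of~$f$.

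I do not anticipate any real obstacle here; the content of the observation is that the simplest conceivable ``coordinate selection'' function already fits the \lin{} framework with the smallest possible parameter, providing a building block that later compositional results in \cref{sec:alphlin} can lift to more complex goal functions such as those appearing in \cref{ssec:SSE,ssec:MiPoSyCo}.
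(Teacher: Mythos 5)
Your proposal is correct. The paper states this observation without proof, treating it as immediate, and your argument via the standard basis vector $e_i$ (entries in $\K_1$, $\ell_1$-norm $1$, and $e_i^\top w'=w_i'=f(x_i,w')$ for all $w'$, hence in particular for all $w'\in\eqc{w}[1][\K^d]$) is exactly the intended verification.
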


\begin{observation}
	\label[observation]{obs:geqalphalin}
	Let~$\K\in\{\Z,\Q\}$.
	If~$f$ is~\Klin{$\alpha$}, then~$f$ is~\Klin{$\alpha'$} for all~$\alpha'\in\N$ with~$\alpha'\geq \alpha$.
\end{observation}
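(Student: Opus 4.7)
The plan is to prove this by a direct invocation of the \Klin{$\alpha$} hypothesis combined with \Cref{obs:smallerClass}. Concretely, assume $f$ is \Klin{$\alpha$} and fix some $\alpha' \ge \alpha$. I want to show that for every $w \in \Q^d$ and every $x \in L$, one can exhibit a vector satisfying the conditions of \Cref{def:alphlin} at level $\alpha'$. The natural candidate is simply the vector $b_{x,w} \in \K_\alpha^d$ guaranteed by the \Klin{$\alpha$} assumption — I would reuse it verbatim.

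The verification splits into two steps. First, I need to check that $b_{x,w}$ still lives in the admissible set for level $\alpha'$: since $\K_\alpha \subseteq \K_{\alpha'}$ directly from the defining sets \eqref{eq:zr} and \eqref{eq:qr}, we have $b_{x,w} \in \K_{\alpha'}^d$, and since $\|b_{x,w}\|_1 \le \alpha \le \alpha'$, the $\ell_1$-bound required by \Cref{def:alphlin} is immediate. Second, I need the identity $f(x,w') = b_{x,w}^\top w'$ for all $w' \in \eqc{w}[\alpha'][\K^d]$. Here I invoke \Cref{obs:smallerClass}, which gives $\eqc{w}[\alpha'][\K^d] \subseteq \eqc{w}[\alpha][\K^d]$. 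Hence any $w'$ in the smaller class also lies in the larger one, and the \Klin{$\alpha$} assumption applied to this $w'$ yields $f(x,w') = b_{x,w}^\top w'$ as required.

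I do not anticipate any real obstacle: the argument is essentially a monotonicity unfolding that piggybacks on \Cref{obs:smallerClass}. The only thing worth stating carefully is the direction of the inclusion between the equivalence classes (note that enlarging $\alpha$ shrinks the class, because more test vectors $\beta$ must agree in sign), which is exactly what makes reusing the same $b_{x,w}$ work. No new construction, no case analysis, and no interaction with the constants from \Cref{thm:FrankTardosRephr} is needed.
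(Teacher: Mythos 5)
Your proposal is correct and matches the paper's proof essentially verbatim: reuse the same witness vector $b_{x,w}$, note $\K_\alpha^d\subseteq\K_{\alpha'}^d$ and $\norm{b_{x,w}}{1}\leq\alpha\leq\alpha'$, and invoke \cref{obs:smallerClass} for the inclusion $\eqc{w}[\alpha'][\K^d]\subseteq\eqc{w}[\alpha][\K^d]$. Nothing further is needed.
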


\begin{proof}
 Let~$w\in\Q^d$. %
 We know that, for all~$x\in L$, 
 there exists a vector~$b_x\in \K_\alpha^d\subseteq \K_{\alpha'}^d$ with~$\norm{b_x}{1}\leq \alpha\leq \alpha'$ 
 such that~$f(x,w')=b_x^\top w'$ for all~$w'\in \eqc{w}[\alpha][\K^d]\supseteq \eqc{w}[\alpha'][\K^d]$ 
 (recall~\cref{obs:smallerClass} for the latter inclusion).
\end{proof}

\begin{lemma}
 \label[lemma]{lem:ft!factor}
 Let~$\K\in\{\Z,\Q\}$,
 $f,f^*\colon L\times\Q^d\to \Q$, 
 and~$c\colon L\to\K_n\setminus\{0\}$, 
 where~$n\in \N$.
 If~$f$ is \Klin{$\alpha$},
 then $f^*(x,w)=c(x)\cdot f(x,w)$ is~\Klin{$n\alpha$}.
\end{lemma}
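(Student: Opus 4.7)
The plan is to build the representing vector for $f^*$ by scaling the representing vector promised for $f$. Fix $w\in\Q^d$ and $x\in L$. Since $f$ is \Klin{$\alpha$}, there exists some $b_{x,w}\in\K_\alpha^d$ with $\norm{b_{x,w}}{1}\leq \alpha$ such that $f(x,w')=b_{x,w}^\top w'$ for all $w'\in\eqc{w}[\alpha][\K^d]$. I would then define $b^*_{x,w}\ceq c(x)\cdot b_{x,w}$ and verify that this vector witnesses that $f^*$ is \Klin{$n\alpha$}.

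The verification reduces to three routine checks. For linearity, one just computes $f^*(x,w')=c(x)\cdot b_{x,w}^\top w'=(c(x)\cdot b_{x,w})^\top w'=(b^*_{x,w})^\top w'$. For the coefficient-domain bound $b^*_{x,w}\in\K_{n\alpha}^d$, each entry is a product of $c(x)\in\K_n$ with an entry of $b_{x,w}\in\K_\alpha$; such a product lies in $\K_{n\alpha}$ in both cases $\K=\Z$ (since $|pq|\leq n\alpha$) and $\K=\Q$ (numerators and denominators multiply, and each factor is bounded by $n$ and $\alpha$ respectively, so the result fits the form defining~$\Q_{n\alpha}$). For the $\ell_1$ bound, $\norm{b^*_{x,w}}{1}=|c(x)|\cdot \norm{b_{x,w}}{1}\leq n\cdot \alpha$.

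The one slightly delicate point is matching up the equivalence classes: the hypothesis only guarantees $f(x,w')=b_{x,w}^\top w'$ on $\eqc{w}[\alpha][\K^d]$, while the conclusion requires the analogous identity for $f^*$ on $\eqc{w}[n\alpha][\K^d]$. Here I would invoke \cref{obs:smallerClass}: since $n\alpha\geq \alpha$ (using $n\geq 1$, which is forced because $\K_0\setminus\{0\}=\emptyset$ would make the existence of~$c$ impossible otherwise), we have $\eqc{w}[n\alpha][\K^d]\subseteq \eqc{w}[\alpha][\K^d]$, so the identity transfers to the larger parameter regime for free.

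I do not anticipate any real obstacle in this proof: the lemma is essentially the statement that scaling a linear functional preserves its linearity while multiplying the coefficient and norm bounds by the scalar's magnitude. The only care required is bookkeeping on how the index $r$ of the discretized coefficient set $\K_r$ behaves under products of its elements, which is where the factor $n$ on the parameter comes from.
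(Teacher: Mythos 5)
Your proposal is correct and follows essentially the same route as the paper's proof: scale the representing vector by $c(x)$, check the domain and $\ell_1$ bounds, and use \cref{obs:smallerClass} to pass from $\eqc{w}[n\alpha][\K^d]$ to $\eqc{w}[\alpha][\K^d]$. Your write-up is if anything slightly more explicit about the $\K=\Q$ bookkeeping and the fact that $n\geq 1$.
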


\begin{proof}
 Let~$w\in\Q^d$. %
 Since~$f$ is~\Klin{$\alpha$},
 we know that for every~$x\in L$ there exists a vector~$b_{x,w}\in\K_\alpha^d$ with~$\norm{b_{x,w}}{1}\leq \alpha$
 such that~$f(x,w')=b_{x,w}^\top w'$ for all~$w'\in\eqc{w}[\alpha][\K^d]$.
 Let~$b_{x,w}^*\ceq c(x)\cdot b_{x,w}$.
 We have that~$b_{x,w}^*\in \K_{n\alpha}^d$ and~$\norm{b_{x,w}^*}{1}\leq n\alpha$.
 Thus,
 for any $w'\in\eqc{w}[n\alpha][\K^d]\subseteq \eqc{w}[\alpha][\K^d]$,
 it holds that
 \[ f^*(x,w') = c(x)\cdot f(x,w') = c(x)\cdot b_{x,w}^\top w' = (c(x)\cdot b_{x,w})^\top w' = (b_{x,w}^*)^\top w'. \qedhere\]
\end{proof}

Next,
we prove next that the losing\hyp weight technique
applies to linearizable functions.
We first discuss $\Z$-\lin{} functions,
and afterwards $\Q$-\lin{} functions.

\subsection[Z-linearizable Functions]{$\Z$-\lin{} Functions}
\label{ssec:zlin}

The losing\hyp weight technique
applies to $\Z$-\lin{} functions as follows.

\begin{theorem}%
 \label{prop:ft!alphaZlin}
 Let~$f\colon L\times \Q^d\to \Q$
 be an \Zlin{$\alpha$} function,
 and let~$w\in \Q^d$, $k\in\Q$.
 Then 
 in time polynomial in the encoding length of~$w$, $k$, and~$\alpha$,
 one can compute 
 a vector~$\wmod{w}\in\Z^d$ and an integer~$\wmod{k}\in\Z$ such that
 \begin{enumerate}[(i)]
  \item $\norm{\wmod{w}}{\infty},|\wmod{k}|\leq 2^{4(d+1)^3}(2\alpha+1)^{(d+1)(d+3)}$,\label{prop:ft!alphaZlin:bound}
  \item $f(x,w)\geq f(y,w) \iff f(x,\wmod{w})\geq f(y,\wmod{w})$ for all~$x,y\in L$, and\label{prop:ft!alphaZlin:opt}
  \item $f(x,w)\geq k \iff f(x,\wmod{w})\geq \wmod{k}$ for all~$x\in L$.\label{prop:ft!alphaZlin:dec}
 \end{enumerate}
\end{theorem}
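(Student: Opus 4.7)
The plan is to apply Theorem~\ref{thm:FrankTardosRephr} not to the weight vector $w$ alone, but to the augmented $(d+1)$-dimensional vector $w^{*} \ceq \conc{w}{k} \in \Q^{d+1}$. Invoking Theorem~\ref{thm:FrankTardosRephr} with parameter $N \ceq 2\alpha$ yields, in polynomial time, a vector $\wmod{w^{*}} = \conc{\wmod{w}}{\wmod{k}} \in \Z^{d+1} \cap \eqc{w^{*}}[2\alpha][\Z^{d+1}]$ with $\norm{\wmod{w^{*}}}{\infty}\leq 2^{4(d+1)^{3}}(2\alpha+1)^{(d+1)(d+3)}$, which yields \eqref{prop:ft!alphaZlin:bound} in particular for $\wmod{w}$ and $\wmod{k}$.

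The next step is to argue that $\wmod{w}$ still lies in the right equivalence class with respect to the original $w$: for any $\gamma\in \Z_{2\alpha}^{d}$ with $\norm{\gamma}{1}\leq 2\alpha$, the padded vector $\conc{\gamma}{0}\in \Z_{2\alpha}^{d+1}$ has the same $\ell_{1}$-norm, so $\wmod{w^{*}}\in\eqc{w^{*}}[2\alpha][\Z^{d+1}]$ forces $\sign(\gamma^\top w) = \sign(\gamma^\top \wmod{w})$. Hence $\wmod{w}\in \eqc{w}[2\alpha][\Z^{d}] \subseteq \eqc{w}[\alpha][\Z^{d}]$ by \Cref{obs:smallerClass}, and therefore, by \Cref{def:alphlin}, there are vectors $b_{x,w},b_{y,w}\in\Z_{\alpha}^{d}$ of $\ell_{1}$-norm at most $\alpha$ such that $f(x,w)=b_{x,w}^\top w$, $f(y,w)=b_{y,w}^\top w$, $f(x,\wmod{w})=b_{x,w}^\top \wmod{w}$, and $f(y,\wmod{w})=b_{y,w}^\top \wmod{w}$.

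From here, \eqref{prop:ft!alphaZlin:opt} and \eqref{prop:ft!alphaZlin:dec} follow from two applications of the equivalence $\wmod{w^{*}} \in \eqc{w^{*}}[2\alpha][\Z^{d+1}]$ with well-chosen test vectors. For \eqref{prop:ft!alphaZlin:opt}, set $\beta \ceq \conc{(b_{x,w}-b_{y,w})}{0}$, which lies in $\Z_{2\alpha}^{d+1}$ and satisfies $\norm{\beta}{1}\leq \norm{b_{x,w}}{1}+\norm{b_{y,w}}{1}\leq 2\alpha$; then $f(x,w)\geq f(y,w) \iff \sign(\beta^\top w^{*})\in\{0,1\} \iff \sign(\beta^\top \wmod{w^{*}})\in\{0,1\} \iff f(x,\wmod{w})\geq f(y,\wmod{w})$. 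For \eqref{prop:ft!alphaZlin:dec}, set $\beta \ceq \conc{b_{x,w}}{-1}\in\Z_{\alpha}^{d+1}\subseteq \Z_{2\alpha}^{d+1}$ (assuming $\alpha\geq 1$; the case $\alpha=0$ is trivial since then $f\equiv 0$), which has $\norm{\beta}{1}\leq \alpha+1\leq 2\alpha$, so that $f(x,w)\geq k \iff \beta^\top w^{*}\leq 0 \iff \beta^\top \wmod{w^{*}}\leq 0 \iff f(x,\wmod{w})\geq \wmod{k}$.

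The only subtle step is the choice $N=2\alpha$: a naive attempt with $N=\alpha$ would handle \eqref{prop:ft!alphaZlin:dec} through Corollary~\ref{cor:FrankTardos} but would not control the difference vectors $b_{x,w}-b_{y,w}$ in \eqref{prop:ft!alphaZlin:opt}, whose entries and $\ell_{1}$-norm can both reach $2\alpha$. Bundling $k$ into the weight vector at the cost of one extra dimension lets a single invocation of Theorem~\ref{thm:FrankTardosRephr} uniformly handle both item-vs-item comparisons and item-vs-threshold comparisons, giving exactly the bound stated in \eqref{prop:ft!alphaZlin:bound}.
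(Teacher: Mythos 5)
Your proof is correct and takes essentially the same route as the paper's: a single application of \cref{thm:FrankTardosRephr} with $N=2\alpha$ to the concatenated vector $\conc{w}{k}$, difference vectors $b_{x,w}-b_{y,w}$ for \eqref{prop:ft!alphaZlin:opt}, and $\conc{b_{x,w}}{(-1)}$ for \eqref{prop:ft!alphaZlin:dec}; your explicit check (via padding test vectors with a zero coordinate) that $\wmod{w}\in\eqc{w}[\alpha][\Z^d]$, which is needed to invoke \cref{def:alphlin} for $\wmod{w}$, is a step the paper leaves implicit. The only blemish is a sign slip in the last chain: with $\beta=\conc{b_{x,w}}{(-1)}$ one has $\beta^\top w^{*}=f(x,w)-k$, so the condition should read $\beta^\top w^{*}\geq 0$ rather than $\leq 0$ --- harmless, since sign preservation gives the equivalence in either direction.
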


\begin{proof}
  Apply~\cref{thm:FrankTardosRephr} with~$N=2\alpha$ 
  to the vector $w\circ k$
  to obtain the concatenated vector~$\conc{\wmod{w}}{\wmod{k}}$ with
  \begin{align}
    \conc{\wmod{w}}{\wmod{k}}\in \eqc{\conc{w}{k}}[2\alpha][\Z^{d+1}] \label{eq:zlinwkmod}
  \end{align}
  and~$\|\conc{\wmod{w}}{\wmod{k}}\|_{\infty}\leq 2^{4(d+1)^3}(2\alpha+1)^{(d+1)(d+3)}$. 
  Thus,
  $\wmod{w}$ and~$\wmod{k}$ fulfill statement~\eqref{prop:ft!alphaZlin:bound}.
  Since~$f$ is~\Zlin{$\alpha$},
  by~\cref{def:alphlin},
  for every~$x,y\in L$ there are $b_x,b_y\in\Z_\alpha^d$ with~$\norm{b_x}{1},\norm{b_y}{1}\leq \alpha$ 
  such that~$f(x,w')=b_x^\top w'$ and $f(y,w')=b_y^\top w'$ for all~$w'\in\eqc{w}[\alpha][\Z^d]$.
  
  For statement~\eqref{prop:ft!alphaZlin:opt},
  let~$b:=b_x-b_y$.
  We have~$b \in\Z_{2\alpha}^d$ and~$\norm{b}{1}\leq 2\alpha$.
  Moreover
  \begin{align*}
    \sign(f(x,w)- f(y,w)) &= \sign((b_x-b_y)^\top w) \\ &\stackrel{\mathclap{\eqref{eq:zlinwkmod}}}{=}\quad \sign((b_x-b_y)^\top \wmod{w}) = \sign(f(x,\wmod{w}) - f(y,\wmod{w})),
  \end{align*}
  and hence
  \[ f(x,w)- f(y,w)\geq 0 \iff f(x,\wmod{w}) - f(y,\wmod{w}) \geq 0 .\] 
  
  For statement~\eqref{prop:ft!alphaZlin:dec},
  let~$b:=\conc{b_x}{(-1)}$.
  We have~$b \in\Z_{\alpha}^{d+1}\subseteq \Z_{2\alpha}^{d+1}$ and~$\norm{b}{1}\leq \alpha+1\leq 2\alpha$.
  Moreover,
  \begin{align*}
  \sign(f(x,w)-k) &= \sign( b_x^\top w- k) = \sign(b^\top(\conc{w}{k})) \\
                  &\stackrel{\mathclap{\eqref{eq:zlinwkmod}}}{=}\quad  \sign(b^\top(\conc{\wmod{w}}{\wmod{k}})) = \sign(b_x^\top \wmod{w}- \wmod{k}) = \sign(f(x,\wmod{w}) - \wmod{k})
  \end{align*}
  and hence
  \[ f(x,w)\geq k \iff f(x,\wmod{w})\geq \wmod{k} .\qedhere\] 
\end{proof}
  
Using \cref{prop:ft!alphaZlin},
we can shrink the weights in~\Zlin{$\alpha$} functions
so that their encoding length is polynomially upper-bounded in~$\alpha$ and the dimension~$d$.
For easy application of~\cref{prop:ft!alphaZlin},
we need to easily recognize \Zlin{$\alpha$} functions
and,
in particular,
to determine~$\alpha$.
To this end,
we show how
to
recognize an
\Zlin{$\alpha$} function
by simply looking at the functions it is composed of.
We subsequently demonstrate this using the example of~\miposycoTsc{} (\miposycoAcr{}).

\begin{lemma}%
 \label[lemma]{lem:ft!Z!summaxmin}
 Let~$f\colon L\times \Q^d\to \Q$ be a function.
 If~$f$ is \Zlin{$\alpha$}, 
 then the function~$f'\colon \{X\subseteq L\mid |X|\leq n\}\times \Q^d\to \Q$ with~$n\in\N$ and
\begin{enumerate}[(i)]
 \item $f'(X,w)=\sum_{x\in X} f(x,w)$ is~\Zlin{$n \alpha$};\label{alphZlin:sum}
 \item $f'(X,w)=\max_{x\in X} f(x,w)$ is~\Zlin{$2\alpha$};\label{alphZlin:max}
 \item $f'(X,w)=\min_{x\in X} f(x,w)$ is~\Zlin{$2\alpha$}.\label{alphZlin:min}
\end{enumerate}
\end{lemma}

  \begin{proof}
  \eqref{alphZlin:sum}:
    Let~$w\in \Q^d$ and~$X\subseteq L$ with~$|X|\leq n$. %
    Since~$f$ is \Zlin{$\alpha$},
    we know that,
    for all~$x\in X\subseteq L$ there is a~$b_{x,w}\in \Z_\alpha^d$ with~$\norm{b_{x,w}}{1}\leq \alpha$ 
    such that~$f(x,w')=b_{x,w}^\top w'$ for all~$w'\in\eqc{w}[\alpha][Z^d]$.
    Let~$b_X\ceq  (\sum_{x\in X} b_{x,w})\in \Z_{n\alpha}^d$.
    Let~$w'\in \eqc{w}[n\alpha][\Z^d]\subseteq \eqc{w}[\alpha][\Z^d]$. %
    We have that~$f'(X,w')=\sum_{x\in X} f(x,w')=\sum_{x\in X} b_{x,w}^\top w' = b_X^\top w'$. 
  
  \eqref{alphZlin:max}:
  Let~$w\in \Q^d$ and~$X\subseteq L$ with~$|X|\leq n$.
  We know that,
  for all~$x\in X\subseteq L$,
  there is a vector~$b_{x,w}\in \Z_\alpha^d$ such that~$f(x,w')=b_{x,w}^\top w'$ for all~$w'\in\eqc{w}[\alpha][\Z^d]$.
    Let~$z\in \argmax_{x\in X} b_{x,w}^\top w$ and~$b_X\ceq  b_{z,w}\in \Z_{\alpha}^d$.
    Let~$w'\in \eqc{w}[2\alpha][\Z^d]\subseteq \eqc{w}[\alpha][\Z^d]$. %
    For any~$y\in X$,
    let~$b\ceq b_{z}-b_y$.
    Note that~$b\in \Z_{2\alpha}^d$ and~$\norm{b}{1}\leq 2\alpha$,
    and hence~$\sign(b^\top w)=\sign(b^\top w')$.
    Thus, 
    we have that
    \begin{align*} 
      \sign(f(z,w)- f(y,w)) 
      &= \sign(b_{z}^\top w - b_{y}^\top w) =  \sign(b^\top w) \\
      &= \sign(b^\top w') = \sign(f(z,w')- f(y,w'))
    \end{align*}
    and hence it holds that~$f(z,w)\geq f(y,w) \iff f(z,w')\geq f(y,w')$.
    Finally, it follows that~$f'(X,w') = \max_{x\in X} f(x,w') = b_{z,w}^\top w' = b_X^\top w'$.
  
  \eqref{alphZlin:min}:
    Works analogously to~\eqref{alphZlin:max}.
  \end{proof}

\paragraph{Revisiting the Case of \miposycoTsc{}}
The goal function in \miposycoAcr{} is composed of a sum over maxima.
We proved that such a composition preserves linearizability.
We explain the use of our machinery for \miposycoAcr{}.
To this end,
rewrite the goal function as follows.
Let~$F_v:=\{e\in F\mid v\in e\}$ and~$\calF:=\{F_v\mid v\in V\}$.
Then, the goal function becomes
\[ h(\calF,w) = \sum_{F_v\in \calF} g(F_v,w)\quad\text{ with }\quad g(F,w)=\max_{e\in F} w(e).\]
Due to~\cref{obs:1linweightfcts},
$f(e,w)=w(e)$ is~\Zlin{1}.
Due to~\cref{lem:ft!Z!summaxmin}\eqref{alphZlin:max},
$g(F,w)=\max_{e\in F} f(e,w)$ is \Zlin{2}.
Finally,
due to \cref{lem:ft!Z!summaxmin}\eqref{alphZlin:sum} (with~$L=2^E$ and $n=|V|$),
$h(\calF,w) = \sum_{F_v\in \calF} g(F_v,w)$ is \Zlin{$2n$}.
Employing~\cref{prop:ft!alphaZlin},
we get in polynomial time a vector~$\wmod{w}\in \Q^m$ such that
   $\norm{\wmod{w}}{\infty}\in 2^{O(m^3\tlog{n})}$, and
   for any two connected subgraphs $T=(V,F)$ and $T'=(V,F')$ of~\(G\), 
   we have that
 \begin{align*}
    \sum_{v\in V} \max_{v\in e\in F} w(e) &\geq \sum_{v\in V} \max_{v\in e\in F'} w(e) \iff
    \sum_{v\in V} \max_{v\in e\in F} \wmod{w}(e) \geq \sum_{v\in V} \max_{v\in e\in F'} \wmod{w}(e).
 \end{align*}
We have thus reproven 
\cref{lem:magicweights}.
Moreover,
for the decision variant of~\mpsc{},
which asks whether there is a solution
of at most a given cost~$k$,
with \cref{prop:ft!alphaQlin} on input~$(G,w,k)$,
we immediately obtain a polynomial kernel.

\begin{proposition}
  \label[proposition]{thm:miposyco-small-weights}
  \miposycoTsc{} admits a polynomial kernel with respect to the number of vertices.
\end{proposition}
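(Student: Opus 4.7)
The plan is to invoke \cref{prop:ft!alphaZlin} on the decision variant of \miposycoAcr{}: given $(G, w, k)$ with $G = (V, E)$ connected, $w \colon E \to \mathbb{N}$, and $k \in \mathbb{N}$, decide whether some connected spanning subgraph $T = (V, F)$ satisfies $\sum_{v \in V} \max_{\{u,v\} \in F} w(\{u,v\}) \leq k$. The parameter is $n := |V|$, and we write $m := |E| \leq \binom{n}{2}$.

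The preceding discussion already establishes that the goal function $h$ of \miposycoAcr{} is $\mathbb{Z}$-linearizable with $\alpha = 2n$: by \cref{obs:1linweightfcts} the edge-weight function itself is $\mathbb{Z}$-linearizable with $\alpha = 1$, by \cref{lem:ft!Z!summaxmin}\eqref{alphZlin:max} taking the per-vertex maximum inflates this to $\alpha = 2$, and by \cref{lem:ft!Z!summaxmin}\eqref{alphZlin:sum} summing over the $n$ vertices yields $\alpha = 2n$. Applying \cref{prop:ft!alphaZlin} to the weight vector $w \in \mathbb{Q}^m$ together with the threshold $k$ then produces in polynomial time an integer weight vector $\widehat{w} \in \mathbb{Z}^m$ and an integer threshold $\widehat{k} \in \mathbb{Z}$ with
\[
\|\widehat{w}\|_{\infty},\ |\widehat{k}| \leq 2^{4(m+1)^3}(4n+1)^{(m+1)(m+3)},
\]
and, by \cref{prop:ft!alphaZlin}\eqref{prop:ft!alphaZlin:dec}, $h(T, w) \leq k$ if and only if $h(T, \widehat{w}) \leq \widehat{k}$ for every connected spanning subgraph $T$ of $G$. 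In particular, $(G, w, k)$ and $(G, \widehat{w}, \widehat{k})$ are equivalent instances.

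To conclude, I would verify the output size. Since $m \leq n^2$, each entry of $\widehat{w}$ and the scalar $\widehat{k}$ have binary encoding length $O(m^3 + m^2 \log n) = O(n^6)$; the graph $G$ itself is described in $O(n^2 \log n)$ bits; hence the reduced instance $(G, \widehat{w}, \widehat{k})$ together with the parameter $n$ has total encoding length polynomial in $n$, which is precisely a polynomial kernel with respect to $n$.

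The only subtlety is conceptual rather than technical: $\mathbb{Z}$-linearizability together with \cref{prop:ft!alphaZlin}\eqref{prop:ft!alphaZlin:dec} bundles in a single step both guarantees that were obtained by hand in the proof of \cref{lem:magicweights}---that the goal-function value stays faithfully represented when $w$ is replaced by $\widehat{w}$, and that the threshold comparison is preserved. Once this is noted, nothing remains beyond the size bookkeeping above.
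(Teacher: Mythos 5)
Your proposal is correct and follows essentially the same route as the paper: the identical decomposition (per-edge weights are $\Z$-linearizable with $\alpha=1$, the per-vertex maximum gives $\alpha=2$, the sum over the $n$ vertices gives $\alpha=2n$ via \cref{obs:1linweightfcts} and \cref{lem:ft!Z!summaxmin}) followed by applying the losing-weight theorem for $\Z$-linearizable functions to the decision variant on input $(G,w,k)$, plus the size bookkeeping with $m\leq\binom{n}{2}$. The only cosmetic differences are that the paper's text nominally invokes \cref{prop:ft!alphaQlin} at this point (the $\Z$-version \cref{prop:ft!alphaZlin} you use is what its own derivation supports), and that one may add, as in the proof of \cref{lem:magicweights}, that the reduced weights stay nonnegative by \cref{obs:signandorder}, so the output is a genuine \miposycoAcr{} instance.
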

In previous work~\citep{BBN+xx,bbnn17}, 
we developed a partial kernel, that is, an algorithm that maps any instance of \miposycoAcr{}
to an equivalent instance where the number of vertices and edges, 
yet not necessarily the edge weights,
are polynomially upper-bounded in the feedback edge number.\footnote{The %
  smallest number of edges to be removed to transform a graph into a forest.}
Finding a polynomial kernel regarding the feedback edge number was an open problem.
Given the partial kernel~\citep{BBN+xx,bbnn17},
\cref{thm:miposyco-small-weights} yields the following affirmative answer.

\begin{corollary}
  \label{thm:feskernel}
  \miposycoTsc{} admits a polynomial kernel with respect to the feedback edge number of the input graph.
\end{corollary}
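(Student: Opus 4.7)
The plan is to compose the two kernelizations in sequence. Given an instance $(G,w)$ of \miposycoAcr{} with feedback edge number $k$, the first step is to invoke the partial kernel of~\citep{BBN+xx,bbnn17} on $(G,w)$. This produces in polynomial time an equivalent instance $(G',w')$ in which both $|V(G')|$ and $|E(G')|$ are polynomially bounded in $k$; the edge weights $w'$, however, may still have arbitrarily large encoding length, so this step alone does not yet yield a kernel with respect to~$k$.

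The second step is to apply \cref{thm:miposyco-small-weights} to $(G',w')$. That proposition yields, again in polynomial time, an equivalent instance $(G',\wmod{w}')$ whose total encoding length is polynomially bounded in $|V(G')|$. Since $|V(G')|$ is itself polynomial in $k$ by the first step, and a polynomial of a polynomial is a polynomial, the size of the final instance is polynomially bounded in $k$.

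Correctness and running time are inherited from the two ingredient kernelizations: each runs in polynomial time and preserves equivalence, hence so does their composition. The main conceptual observation to make explicit is simply that the output of the partial kernel is a genuine \miposycoAcr{} instance and can therefore be passed unchanged as input to \cref{thm:miposyco-small-weights}. I do not anticipate a real obstacle here: the heavy lifting has been done in establishing the partial kernel (which handles the combinatorial structure of the graph) and \cref{thm:miposyco-small-weights} (which handles the edge weights via the losing-weight technique), and only a routine composition argument remains.
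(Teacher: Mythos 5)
Your proposal is correct and matches the paper's own argument: the paper likewise obtains \cref{thm:feskernel} by first applying the partial kernel of~\citep{BBN+xx,bbnn17}, which bounds the number of vertices and edges polynomially in the feedback edge number, and then invoking \cref{thm:miposyco-small-weights} to shrink the weights. No further comment is needed.
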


We will revisit the case of~\prob{Small Set Expansion (SSE)} in the next section,
using (an analog of) \cref{prop:ft!alphaZlin} for $\Q$-\lin{} functions.

\subsection[Q-linearizable Functions]{$\Q$-linearizable Functions}
\label{ssec:qlin}

In this section,
we give an analog of~\cref{prop:ft!alphaZlin} for
$\Q$-\lin{} functions and revisit the case of~\prob{Small Set Expansion}.
The analog of~\cref{prop:ft!alphaZlin} for
$\Q$-\lin{} functions is as follows.

\begin{theorem}
 \label{prop:ft!alphaQlin}
 Let~$f\colon L\times \Q^d\to \Q$
 be an \Qlin{$\alpha$} function,
 and let~$w\in \Q^d$, $k\in \Q$.
 Then,
 in time polynomial in the encoding length of~$w$, 
 $k$, 
 and~$\alpha$,
 one can compute 
 a vector~$\wmod{w}\in\Z^d$ and an integer~$\wmod{k}\in\Z$ such that
 \begin{enumerate}[(i)]
  \item $\norm{\wmod{w}}{\infty},|\wmod{k}|\leq 2^{4(d+1)^3}(4\alpha^4+1)^{2\alpha^2\cdot (d+1)(d+3)}$,\label{prop:ft!alphaQlin:bound}
  \item $f(x,w)\geq f(y,w) \iff f(x,\wmod{w})\geq f(y,\wmod{w})$ for all~$x,y\in L$, and\label{prop:ft!alphaQlin:opt}
  \item $f(x,w)\geq k \iff f(x,\wmod{w})\geq \wmod{k}$ for all~$x\in L$.\label{prop:ft!alphaQlin:dec}
  \end{enumerate}
\end{theorem}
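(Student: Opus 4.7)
The plan is to mirror the proof of \cref{prop:ft!alphaZlin} but replace the call to \cref{thm:FrankTardosRephr} by a call to \cref{thm:FrankTardosRational}, which is the fractional analog tailored exactly for $\Q$-linearizable situations. The key quantitative difference arises in how we bound the ``difference vector'' $b_{x,w} - b_{y,w}$ when its entries are fractions rather than integers, and this is what drives the slightly worse dependence on $\alpha$ in the claimed bound.

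Concretely, I would first apply \cref{thm:FrankTardosRational} to the concatenated vector $\conc{w}{k}\in \Q^{d+1}$ with parameter $r \ceq 2\alpha^2$, obtaining an integer vector $\conc{\wmod{w}}{\wmod{k}} \in \Z^{d+1} \cap \eqc{\conc{w}{k}}[2\alpha^2][\Q^{d+1}]$ whose infinity norm is bounded by $2^{4(d+1)^3}(4\alpha^4+1)^{2\alpha^2\cdot(d+1)(d+3)}$, settling \eqref{prop:ft!alphaQlin:bound}. The running time is polynomial in the encoding length of $w$, $k$, and $\alpha$ because the encoding length of $r$ is polynomial in that of~$\alpha$.

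For \eqref{prop:ft!alphaQlin:opt} and \eqref{prop:ft!alphaQlin:dec}, I invoke \Qlin{$\alpha$}-ness of $f$ to obtain, for each $x,y \in L$, vectors $b_{x,w},b_{y,w} \in \Q_\alpha^d$ of $\ell_1$-norm at most $\alpha$ with $f(x,w')=b_{x,w}^\top w'$ and $f(y,w')=b_{y,w}^\top w'$ for all $w' \in \eqc{w}[\alpha][\Q^d]$, noting that $\eqc{\wmod{w}}{}[\alpha][\Q^d] \supseteq \eqc{\wmod{w}}{}[2\alpha^2][\Q^d]$ by \cref{obs:smallerClass} and the projection of $\conc{\wmod{w}}{\wmod{k}}$ onto the first $d$ coordinates lies in this class. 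Then as in the proof of \cref{prop:ft!alphaZlin}, I set $b \ceq b_{x,w}-b_{y,w}$ for \eqref{prop:ft!alphaQlin:opt} and $b \ceq \conc{b_{x,w}}{(-1)}$ for \eqref{prop:ft!alphaQlin:dec}, and conclude via $\sign(b^\top \conc{w}{k}) = \sign(b^\top \conc{\wmod{w}}{\wmod{k}})$.

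The main obstacle, and the only place that genuinely differs from the $\Z$-case, is verifying that these difference vectors lie in $\Q_{2\alpha^2}^{d+1}$ with $\ell_1$-norm at most $2\alpha^2$, so that \cref{thm:FrankTardosRational} is applicable. For $b_{x,w}-b_{y,w}$, each coordinate is of the form $\pm p_1/q_1 \mp p_2/q_2$ with $p_i \in \{0,\dots,\alpha\}$ and $q_i \in \{1,\dots,\alpha\}$; putting these over the common denominator $q_1 q_2$ yields a fraction with numerator of absolute value at most $2\alpha^2$ and denominator at most $\alpha^2$, hence it lies in $\Q_{2\alpha^2}$. The $\ell_1$-norm is at most $\|b_{x,w}\|_1 + \|b_{y,w}\|_1 \leq 2\alpha \leq 2\alpha^2$ (for $\alpha \geq 1$). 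For $\conc{b_{x,w}}{(-1)}$, the last coordinate $-1 = -1/1$ is trivially in $\Q_{2\alpha^2}$ and the $\ell_1$-norm is at most $\alpha + 1 \leq 2\alpha^2$. With these containments established, \cref{thm:FrankTardosRational} yields the desired sign-preservation, and the equivalences in \eqref{prop:ft!alphaQlin:opt} and \eqref{prop:ft!alphaQlin:dec} follow as in the $\Z$-linearizable case.
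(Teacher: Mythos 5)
Your proposal is correct and follows essentially the same route as the paper: apply \cref{thm:FrankTardosRational} with $r=2\alpha^2$ to the concatenated vector $\conc{w}{k}$, then use the difference vector $b_{x,w}-b_{y,w}$ for (ii) and $\conc{b_{x,w}}{(-1)}$ for (iii) together with sign preservation. Your explicit verification that a difference of entries from $\Q_\alpha$ lies in $\Q_{2\alpha^2}$ (via the common denominator $q_1q_2$) is exactly the reason the paper chooses $r=2\alpha^2$, a step the paper asserts without spelling out.
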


\begin{proof}
  Apply~\cref{thm:FrankTardosRational} with~$r=2\alpha^2$
  to the concatenated vector~$\conc{w}{k}$
  to obtain the concatenated vector
  \begin{align}
    \conc{\wmod{w}}{\wmod{k}}\in\eqc{\conc{w}{k}}[2\alpha^2][\Q^{d+1}].\label{eq:Qeqcwkmod}
  \end{align}
 with~$\|\conc{\wmod{w}}{\wmod{k}}\|_{\infty}\leq 2^{4(d+1)^3}(4\alpha^4+1)^{2\alpha^2\cdot (d+1)(d+3)}$.
 Hence,
 $\wmod{w}$ and~$\wmod{k}$ fulfill statement~\eqref{prop:ft!alphaQlin:bound}.
 Since~$f$ is~\Qlin{$\alpha$},
 by~\cref{def:alphlin},
 for every~$x,y\in L$
 there are $b_{x,w},b_{y,w}\in\Q_\alpha^d$ with~$\norm{b_{x,w}}{1}\leq \alpha$ and~$\norm{b_{y,w}}{1}\leq \alpha$ 
 such that~$f(x,w')=b_{x,w}^\top w'$ and $f(y,w')=b_{y,w}^\top w'$ for all~$w'\in\eqc{w}[\alpha]\supseteq \eqc{w}[2\alpha^2]$.
 
 For statement~\eqref{prop:ft!alphaQlin:opt},
 let~$b:=b_{x,w}-b_{y,w}$.
 We have that~$b \in\Q_{2\alpha^2}^d$ and~$\norm{b}{1}\leq 2\alpha\leq 2\alpha^2$.
 Moreover,
 \begin{align*}
  \sign(f(x,w)- f(y,w)) &= \sign((b_x-b_y)^\top w) \\
                        &\stackrel{\mathclap{\eqref{eq:Qeqcwkmod}}}{=}\quad \sign((b_x-b_y)^\top \wmod{w}) = \sign(f(x,\wmod{w}) - f(y,\wmod{w})),
 \end{align*}
 and hence \[f(x,w) \geq f(y,w) \iff  f(x,\wmod{w}) \geq f(y,\wmod{w}).\]
 
 For statement~\eqref{prop:ft!alphaQlin:dec},
 let~$b:=\conc{b_x}{(-1)}$.
 We have that~$b \in\Q_{\alpha}^{d+1}\subseteq \Q_{2\alpha^2}^{d+1}$ and~$\norm{b}{1}\leq \alpha+1\leq 2\alpha^2$.
 Moreover,
 \begin{align*}
  \sign(f(x,w)-k) &= \sign( b_x^\top w- k) = \sign(b^\top(\conc{w}{k})) \\
                  &\stackrel{\mathclap{\eqref{eq:Qeqcwkmod}}}{=}\quad  \sign(b^\top(\conc{\wmod{w}}{\wmod{k}})) = \sign(b_x^\top \wmod{w}- \wmod{k}) = \sign(f(x,\wmod{w}) - \wmod{k}),
  \end{align*}
  and hence
  \[ f(x,w)\geq k \iff f(x,\wmod{w})\geq \wmod{k} .\qedhere\] 
\end{proof}

Next,
we present an analog of~\cref{lem:ft!Z!summaxmin} for~$\Q$-\lin{} functions.
It turns out that
composing~$\Q$-\lin{} functions introduces larger $\alpha$-values compared to~$\Z$-\lin{} functions.

\begin{lemma}%
 \label[lemma]{lem:ft!Q!summaxmin}
 Let~$f:L\times \Q^d\to \Q$ be a function.
 If~$f$ is \Qlin{$\alpha$}, 
 then the function~$f':\{X\subseteq L\mid |X|\leq n\}\times \Q^d\to \Q$ with~$n\in \N$ and
\begin{enumerate}[(i)]
	\item $f'(X,w)=\sum_{x\in X} f(x,w)$ is~\Qlin{$\alpha!n \alpha$};\label{alphQlin:sum}
	\item $f'(X,w)=\max_{x\in X} f(x,w)$ is~\Qlin{$2\alpha^2$};\label{alphQlin:max}
	\item $f'(X,w)=\min_{x\in X} f(x,w)$ is~\Qlin{$2\alpha^2$}.\label{alphQlin:min}
\end{enumerate}
\end{lemma}

\begin{proof}
  \eqref{alphQlin:sum}:
    Let~$w\in \Q^d$ and~$X\subseteq L$ with~$|X|\leq n$. %
    Since~$f$ is \Qlin{$\alpha$},
    we know that for all~$x\in X\subseteq L$ there is a vector~$b_{x,w}\in \Q_\alpha^d$ with~$\norm{b_{x,w}}{1}\leq \alpha$ 
    such that~$f(x,w')=b_{x,w}^\top w'$ for all~$w'\in\eqc{w}[\alpha][Q^d]$.
    Let~$b_X\ceq  (\sum_{x\in X} b_{x,w})\in \Q_{\alpha!n \alpha}^d$.
    Let~$w'\in \eqc{w}[\alpha!n \alpha][\Q^d]\subseteq \eqc{w}[\alpha][\Q^d]$. %
    We have that~$f'(X,w')=\sum_{x\in X} f(x,w')=\sum_{x\in X} b_{x,w}^\top w' = b_X^\top w'$. 
  
  \eqref{alphQlin:max}:
    Let~$w\in \Q^d$ and~$X\subseteq L$ with~$|X|\leq n$. %
    Since~$f$ is \Qlin{$\alpha$},
    we know that for all~$x\in X\subseteq L$ there is a vector~$b_{x,w}\in \Q_\alpha^d$ such that~$f(x,w')=b_{x,w}^\top w'$ for all~$w'\in\eqc{w}[\alpha][\Z^d]$.
    Let~$z\in \argmax_{x\in X} b_{x,w}^\top w$ and~$b_X\ceq  b_{z,w}\in \Q_{\alpha}^d$.
    Let~$w'\in \eqc{w}[2\alpha^2][\Q^d]\subseteq \eqc{w}[\alpha][\Q^d]$. %
    For any~$y\in X$,
    let~$b\ceq b_{z}-b_y$.
    Note that~$b\in \Q_{2\alpha^2}^d$ and~$\norm{b}{1}\leq 2\alpha\leq 2\alpha^2$,
    and hence~$\sign(b^\top w)=\sign(b^\top w')$.
    Thus, 
    \begin{align*} 
      \sign(f(z,w)- f(y,w)) &= \sign(b_{z}^\top w - b_{y}^\top w) =  \sign(b^\top w) \\
                            &= \sign(b^\top w') = \sign(f(z,w')- f(y,w'))
    \end{align*}
    and thus it holds that~$f(z,w)\geq f(y,w) \iff f(z,w')\geq f(y,w')$.
    It follows that~$f'(X,w') = \max_{x\in X} f(x,w') = b_{z,w}^\top w' = b_X^\top w'$.
  
  \eqref{alphQlin:min}:
    Works analogously to~\eqref{alphQlin:max}.
\end{proof}

The framework for~$\Z$-\lin{} functions  allows for ``chaining up sums''
while keeping $\alpha$ polynomially bounded.
Note that this is in general not the case for~$\Q$-\lin{} functions when applying~\cref{lem:ft!Q!summaxmin}.
Although more restrictive, 
however,
the framework for~$\Z$-\lin{} functions is sufficient for MPSC and all upcoming examples except for
the following.

\paragraph{Revisiting the Case of \prob{Small Set Expansion}}

The goal function in SSE is a multiplication of a number and a sum.
By~\cref{lem:ft!factor},
we know that multiplication preserves linearizability.
Moreover,
by \cref{lem:ft!Q!summaxmin}\eqref{alphQlin:sum},
we know that the sum preserves linearizability.
So,
we are set to use our machinery for SSE.

Let~$E_S:=(S,V\setminus S)$ for all~$S\subseteq V$.
Let~$L:=\{(S,E_S)\mid S\subseteq V, 1\leq |S|\leq n/2\}$.
Let~$c\colon L\to \Q_n\setminus \{0\}, (S,E_S)\mapsto \frac{1}{|S|}$.
Then,
the goal function of SSE becomes
$h((S,E_S),w) = \frac{1}{|S|} \cdot g((S,E_S),w)$ with $g((S,E_S),w)=\sum_{e\in E_S} w(e)$.
By \cref{obs:1linweightfcts},
$f(e,w)=w(e)$ is \Qlin{1}.
Moreover,
by \cref{lem:ft!Q!summaxmin}\eqref{alphQlin:sum},
$g$ is~\Qlin{$m$}.
Finally,
due to \cref{lem:ft!factor},
$h$ is \Qlin{$n\cdot m$}.
Finally, employing~\cref{prop:ft!alphaQlin},
we can reprove
\cref{lem:magicweightsSSE}
and additionally obtain the following kernel.

\begin{proposition}
 \label[proposition]{prop:ssePKvertices}
  \prob{Small Set Expansion} admits a polynomial kernel with respect to the number of vertices.
\end{proposition}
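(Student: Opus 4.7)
The plan is to combine the linearizability analysis already carried out for the SSE goal function with \cref{prop:ft!alphaQlin}. To cast SSE in the form of \cref{def:kernel}, I consider the natural decision variant: augment the input by a rational threshold $k$ and ask whether there exists $S \subseteq V$ with $1 \le |S| \le n/2$ such that $h((S,E_S),w) \le k$. The parameter is $n = |V|$.

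The key observation is already established right before the statement: by chaining \cref{obs:1linweightfcts}, \cref{lem:ft!Q!summaxmin}\eqref{alphQlin:sum}, and \cref{lem:ft!factor}, the goal function~$h$ is \Qlin{$\alpha$} with $\alpha = n \cdot m$. Since the input graph satisfies $m \le \binom{n}{2} \in O(n^2)$, we obtain $\alpha \in O(n^3)$, while the ambient dimension of the weight vector is $d = m \in O(n^2)$.

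Next, I would apply \cref{prop:ft!alphaQlin} to the weight vector $w \in \Q^m$ and the threshold $k$ with this value of $\alpha$. In time polynomial in the encoding length of the input, this yields $\wmod{w} \in \Z^m$ and $\wmod{k} \in \Z$ with
\[
\norm{\wmod{w}}{\infty},\,|\wmod{k}| \;\le\; 2^{4(m+1)^3}\,(4\alpha^4+1)^{2\alpha^2(m+1)(m+3)} \;\le\; 2^{n^{O(1)}},
\]
so every entry of $\wmod{w}$ and $\wmod{k}$ has encoding length polynomial in~$n$. By \cref{prop:ft!alphaQlin}\eqref{prop:ft!alphaQlin:dec}, for every candidate solution $S$ we have $h((S,E_S),w) \le k$ if and only if $h((S,E_S),\wmod{w}) \le \wmod{k}$, so the instance $(G,\wmod{w},\wmod{k})$ is equivalent to $(G,w,k)$. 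Since $G$ itself already has $O(n^2)$ edges each describable in $O(\log n)$ bits and the weights and threshold now have encoding length polynomial in~$n$, the total size of the reduced instance is polynomial in~$n$.

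The main obstacle is just the bookkeeping for the size bound: one must check that the compound $\alpha = n\cdot m$ produced by the three composition lemmas, together with the ambient dimension $d = m$, still yields weights of encoding length polynomial in $n$ when fed into \cref{prop:ft!alphaQlin}\eqref{prop:ft!alphaQlin:bound}. Because both $\alpha$ and $d$ are bounded by polynomials in~$n$, the double exponent in the bound is $n^{O(1)}$, so the bound itself is $2^{n^{O(1)}}$, which has encoding length polynomial in~$n$. This establishes the claimed polynomial kernel with respect to $|V|$.
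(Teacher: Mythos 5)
Your proposal is correct and follows essentially the same route as the paper: decompose the SSE objective via \cref{obs:1linweightfcts}, \cref{lem:ft!Q!summaxmin}\eqref{alphQlin:sum}, and \cref{lem:ft!factor} to get $\Q$-linearizability with $\alpha=n\cdot m$, then apply \cref{prop:ft!alphaQlin} to the weight vector (and threshold) and observe that the resulting weights have encoding length polynomial in~$n$. The only addition worth noting is that \cref{obs:signandorder}\eqref{obs:signandorder:sign} guarantees the shrunken weights stay positive, so the reduced instance is a valid SSE instance; this is implicit in the paper as well.
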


\paragraph{Summary of our Framework}

We introduced \Klin{$\alpha$} functions (\cref{def:alphlin}) for~$\K\in\{\Z,\Q\}$.
Due to~\cref{lem:ft!Q!summaxmin,lem:ft!Z!summaxmin,lem:ft!factor},
we can easily recognize special types of \Klin{$\alpha$} functions by simply looking at their composition.
Further, 
we proved that %
the losing\hyp weight technique applies
to \Klin{$\alpha$} functions
(\cref{prop:ft!alphaQlin,prop:ft!alphaZlin}).
Thus,
for applying our framework, 
we offer the recipe in~\cref{fig:recipe}.

\begin{figure}\centering
  \begin{tikzpicture}

    \def\xr{1}
    \def\yr{1}
    \tikzstyle{mnode}=[rounded corners,minimum height=3.5cm,minimum width=4.75cm,draw]
    \tikzstyle{snode}=[align=center,anchor=south,rounded corners,minimum height=2.0cm,minimum width=4.75cm,text width=4cm,fill=gray!10!white,font=\footnotesize,draw]

    \newcommand{\stepnode}[4]{
    \node (s#2) at (#1)[mnode]{};
    \node at (s#2.north)[anchor=north,align=center]{\\ Step #2: \\ \textbf{#3}};
    \node at (s#2.south)[snode]{#4};
    }

    \stepnode{0,0}{1}{Decompose}{
      Find a representation of~$f$ such that~$f$ can be decomposed into $\K$-\lin{} functions.
    }

    \stepnode{5.75*\xr,0}{2}{Determine}{
      Recursively apply \cref{lem:ft!factor,lem:ft!Z!summaxmin,lem:ft!Q!summaxmin} 
      to determine~$\alpha$ such that~$f$ is~\Klin{$\alpha$}.
    }

    \stepnode{11.5*\xr,0}{3}{Deploy}{
      Deploy \cref{prop:ft!alphaZlin,prop:ft!alphaQlin}
      to obtain a smaller weight vector preserving optimal solutions.
    }

    \draw[very thick,->,>=latex] (s1) to (s2);
    \draw[very thick,->,>=latex] (s2) to (s3);
  \end{tikzpicture}
  \caption{Recipe DDD for applying our framework, 
    illustrated for a weighted problem seeking for a set that maximizes or minimizes some function~$f$.}
  \label{fig:recipe}
\end{figure}

We showed that any combination of sums, maxima, minima, 
and multiplication with (rational) numbers preserves linearizability.
In the next section, 
we show that we can also compose functions using
case distinctions on linearizable constraints (\cref{lem:lin-constraints}).
Finding compositions of further functions preserving linearizability remains a task for future work.

\section{Further Applications of the Losing-Weight Technique}
\label{ssec:apps}

In this section,
we provide further problems with \lin{} goal functions
and demonstrate how our framework applies to them via the recipe DDD.
The further problems stem from
network design,
facility location,
scheduling,
vehicle routing,
and computational social choice.

\subsection{Uncapacitated Facility Location.}
\label{sec:uflp}
The \prob{\uflpTsc} problem
is one of the most fundamental and well\hyp studied problems
in operations research \citep[Section 3.4]{LNG15}.
It has also been studied
in the context of parameterized complexity and data reduction
\citep{BTZ19,FF11}.

\optprob{\prob{\uflpTsc} (\prob{\uflpAcr})}%
{\label{prob:mfl}
  A set~$\mathcal{C}$ of~$n$ clients,
  a set~$\mathcal{F}$ of $m$ facilities,
  facility opening costs~$f\colon \mathcal{F}\to\Q_{\geq 0}$,
  and client service costs~$c\colon \mathcal{F}\times \mathcal{C}\to\Q_{\geq 0}$.}%
{Find a subset~$\mathcal{F'}\subseteq \mathcal{F}$ that minimizes
  \begin{align}
    \sum_{i \in \mathcal{F'}} f(i) + \sum_{j\in \mathcal{C}} \min_{i\in \mathcal{F'}} c(i,j).
    \label{eq:facloc}
  \end{align}}
When the cost function is a metric,
then the problem is also called \prob{\muflpTsc} (\prob{\muflpAcr}).
By showing that the goal function~\eqref{eq:facloc} is~\lin{},
we can prove:

\begin{lemma}%
  \label[lemma]{lem:magicweightsFL}
  There is an algorithm that, 
  on an input consisting of an instance~$(\mathcal{C},\mathcal{F},f,c)$ of \prob{\uflpAcr} and~$k\in\Q$,
  in time polynomial in~$|(\mathcal{C},\mathcal{F},f,c,k)|$ 
  computes 
  an instance~$(\mathcal{C},\mathcal{F},\bar f,\bar c)$ of \prob{\uflpAcr} and~$\bar k\in\Z$ such that
  \begin{enumerate}[(i)]
  \item $\norm{\bar{f}}{\infty} + \norm{\bar{c}}{\infty},|k|\leq 2^{4(nm+m+1)^3}(4(2n+m)+1)^{(nm+m+1)(nm+m+3)}$,
  \item any subset~$\mathcal{F'}\subseteq \mathcal{F}$ forms an optimal solution for $(\mathcal{C},\mathcal{F},f,c)$ if and only if $\mathcal{F'}$ forms an optimal solution for~$(\mathcal{C},\mathcal{F},\bar f,\bar c)$, and
  \item for any subset~$\mathcal{F'}\subseteq \mathcal{F}$ we have that
  $\sum_{i \in \mathcal{F'}} f(i) + \sum_{j\in \mathcal{C}} \min_{i\in \mathcal{F'}} c(i,j) \geq k \iff \sum_{i \in \mathcal{F'}} \bar f(i) + \sum_{j\in \mathcal{C}} \min_{i\in \mathcal{F'}} \bar c(i,j)\geq \bar k$.
  \end{enumerate}
\end{lemma}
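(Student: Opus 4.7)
The plan is to follow the DDD recipe of \cref{fig:recipe} with $\K = \Z$, treating the concatenation $w \ceq \conc{f}{c} \in \Q^{nm + m}$ as a single weight vector. \emph{Decomposition}: write the goal function~\eqref{eq:facloc} as $H(\mathcal{F}', w) = H_1(\mathcal{F}', w) + H_2(\mathcal{F}', w)$ where $H_1(\mathcal{F}', w) = \sum_{i \in \mathcal{F}'} f(i)$ and $H_2(\mathcal{F}', w) = \sum_{j \in \mathcal{C}} \min_{i \in \mathcal{F}'} c(i, j)$. The atomic coordinate-projection functions $w \mapsto f(i)$ and $w \mapsto c(i,j)$ are \Zlin{1} by \cref{obs:1linweightfcts}, and the candidate solution $x = \mathcal{F}'$ plays the role of the solution in the sense of \cref{def:alphlin}.

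\emph{Determination}: Applying \cref{lem:ft!Z!summaxmin}(i) to $H_1$ with $|\mathcal{F}'| \leq m$ shows that $H_1$ is \Zlin{$m$}. Applying \cref{lem:ft!Z!summaxmin}(iii) gives that $j \mapsto \min_{i \in \mathcal{F}'} c(i, j)$ is \Zlin{$2$}, and \cref{lem:ft!Z!summaxmin}(i) with $|\mathcal{C}| \leq n$ then yields that $H_2$ is \Zlin{$2n$}. A final use of \cref{lem:ft!Z!summaxmin}(i), combining $H_1$ and $H_2$ as a sum of two instances of a common function with $\alpha = \max(m, 2n)$, shows that $H$ itself is \Zlin{$\alpha$} for some $\alpha \leq 2(2n + m)$.

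\emph{Deployment}: Apply \cref{prop:ft!alphaZlin} to $H$, the weight vector $w$, and the threshold $k$, using this $\alpha$ and dimension $d = nm + m$. Splitting the resulting integer vector $\wmod{w}$ back into its $\bar f$ and $\bar c$ components, conclusions \eqref{prop:ft!alphaZlin:bound}, \eqref{prop:ft!alphaZlin:opt}, \eqref{prop:ft!alphaZlin:dec} of \cref{prop:ft!alphaZlin} deliver claims (i), (ii), and (iii) of the lemma, respectively. Nonnegativity of $\bar f$ and $\bar c$, required for $(\mathcal{C}, \mathcal{F}, \bar f, \bar c)$ to be a valid UFLP instance, follows from \cref{obs:signandorder}(i) since all entries of $w$ are nonnegative. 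The only subtle point is the use of \cref{lem:ft!Z!summaxmin}(iii) on a minimum whose index set $\mathcal{F}'$ is itself part of the solution: this is legitimate because \cref{def:alphlin} permits the representing vector $b_{x,w}$ to depend on $x = \mathcal{F}'$, and the proof of \cref{lem:ft!Z!summaxmin}(iii) guarantees that the identity of the minimizing facility for each client is preserved when $w$ is replaced by the shrunken vector produced by \cref{prop:ft!alphaZlin}.
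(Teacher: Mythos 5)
Your proposal is correct and follows essentially the same route as the paper's proof: decompose the objective into the opening-cost sum (\Zlin{$m$}) and the service-cost sum of minima (\Zlin{$2n$}), combine them as a two-term sum of a common function via \cref{obs:geqalphalin} and \cref{lem:ft!Z!summaxmin}(i) to get $\alpha\leq 2(2n+m)$, and deploy \cref{prop:ft!alphaZlin} with $d=nm+m$. The paper makes the "common function" explicit as a case distinction on an index $\ell\in\{1,2\}$, but this is only a presentational difference.
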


\begin{proof}
	First, observe that~$f$ and~$c$ are \Zlin{1} as they can be represented as~$e_i^\top w$, where~$e_i$ denotes the unit vector with the~$i$th entry being one and~$w = (f(1),\ldots,f(m),\allowbreak c(1,1),\ldots, c(m,n))$ denotes a weight vector that contains all possible opening and serving costs.
	Since the goal function is composed of a sum of two sums, we will first analyze each of the sums individually and then analyze the outer sum.
	Observe that~$\sum_{i \in \mathcal{F'}} f(i)$ is~\Zlin{$m$} by \cref{lem:ft!Z!summaxmin}\eqref{alphZlin:sum} as~$|\mathcal{F'}| \leq m$.
	Similarly, since~$\min_{i\in \mathcal{F'}} c(i,j)$ is~\Zlin{$2$} by \cref{lem:ft!Z!summaxmin}\eqref{alphZlin:min}, it follows from \cref{lem:ft!Z!summaxmin}\eqref{alphZlin:sum} that $\sum_{j\in \mathcal{C}} \min_{i\in \mathcal{F'}} c(i,j)$ is~\Zlin{$2n$} as~$|\mathcal{C}|=n$.
	Next,
	we define
        \[f'(\ell,\mathcal{C},\mathcal{F'}) := 
		\begin{cases} \displaystyle
                  \sum_{i \in \mathcal{F'}} f(i) 						& \text{ if } \ell=1,\\
                  \displaystyle
			\sum_{j\in \mathcal{C}} \min_{i\in \mathcal{F'}} c(i,j) 	& \text{ if } \ell=2.
		\end{cases}\]
	Observe that~$f'$ is \Zlin{($2n+m$)} as it is \Zlin{($2n+m$)} in each of the two cases by \cref{obs:geqalphalin}.
	Moreover,
	note that the goal function can be represented as~$\sum_{\ell\in\{1,2\}} f'(\ell,\mathcal{C},\mathcal{F'})$.
	Due to \cref{lem:ft!Z!summaxmin}\eqref{alphZlin:sum},
	it follows that the goal function is~\Zlin{$2\cdot (2n+m)$}.
	Finally, 
	\cref{prop:ft!alphaZlin} yields the desired statement with~$\alpha = 2(2n+m)$ and~$d = nm+m$.
\end{proof}

We can apply \cref{lem:magicweightsFL} also for the \muflpAcr{},
which requires the cost function to satisfy the triangle inequality
$c(i,j)\le c(i,j') + c(i',j') + c(i',j) \text{ for all } i,i' \in \mathcal{C} \text{ and } j,j' \in \mathcal{F}$.
This easily follows from the following:
\begin{observation}
  \label[observation]{obs:ft!metric}
  Let~$w
  \in\Q^d$, $d\in\N$, and~$\K\in\{\Z,\Q\}$. 
  Then for every~$r\geq 4$ 
  and~$\wmod{w}
  \in\eqc{w}[r][\K^d]$ it holds that~\(\sign(w^\top (\vec{e}_i+\vec{e}_j+\vec{e}_k-\vec{e}_\ell))=\sign(\wmod{w}^\top (\vec{e}_i+\vec{e}_j+\vec{e}_k-\vec{e}_\ell))\) for each~\(i,j,k,l\in\{1,\dots,d\}\).
\end{observation}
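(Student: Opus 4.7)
The plan is to unpack \cref{def:eqr} by choosing a single test vector, namely $\beta \ceq \vec{e}_i+\vec{e}_j+\vec{e}_k-\vec{e}_\ell$, and verifying that it satisfies the two side conditions of the equivalence relation for the parameter $r\geq 4$. After that, the claim becomes the defining property of $\eqc{w}[r][\K^d]$ applied to this specific $\beta$.

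First I would check that $\beta\in\K_r^d$. Each coordinate of $\beta$ is an integer in $\{-1,0,1,2,3\}$, where the value~$3$ can only appear if $i=j=k$ and $\ell$ is a different index. Since $r\geq 4\geq 3$, every entry lies in $\Z_r\subseteq \K_r$, so $\beta\in\Z_r^d\subseteq\K_r^d$ for both choices $\K\in\{\Z,\Q\}$.

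Next I would check the $\ell_1$-bound. By the triangle inequality for $\norm{\cdot}{1}$, one has
\[
\norm{\beta}{1}\leq \norm{\vec{e}_i}{1}+\norm{\vec{e}_j}{1}+\norm{\vec{e}_k}{1}+\norm{\vec{e}_\ell}{1}=4\leq r,
\]
so $\beta$ is a permitted test vector in the definition of $\eqr{}{}[r][\K^d]$. Since $w'\in\eqc{w}[r][\K^d]$ by assumption, applying \cref{def:eqr} to this particular $\beta$ yields $\sign(\beta^\top w)=\sign(\beta^\top w')$, which is exactly the desired equality.

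There is essentially no obstacle: the observation is a direct specialization of \cref{def:eqr} to the test vector associated with the triangle inequality, and the bound $r\geq 4$ is chosen precisely so that the two conditions ``entries in $\K_r$'' and ``$\ell_1$-norm at most $r$'' are both satisfied. (The notation $\wmod{w}$ in the statement is just the representative $w'$ from $\eqc{w}[r][\K^d]$; its polynomially bounded encoding will be produced afterwards via \cref{thm:FrankTardosRephr} when applying this observation to derive \cref{lem:magicweightsFL} in the metric case.)
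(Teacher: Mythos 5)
Your proof is correct and is exactly the intended argument: the paper states \cref{obs:ft!metric} without proof precisely because it is, as you show, a direct specialization of \cref{def:eqr} to the test vector $\vec{e}_i+\vec{e}_j+\vec{e}_k-\vec{e}_\ell$, whose entries lie in $\{-1,0,1,2,3\}\subseteq\K_r$ and whose $\ell_1$-norm is at most $4\leq r$. Your side remarks (the coincident-index cases and the identification of $\wmod{w}$ with the representative $w'$) are also accurate.
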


The consequence of \cref{obs:ft!metric} is that if \cref{thm:FrankTardosRephr} is applied to a vector~$w$ encoding a metric 
(i.\,e.,
the entries of~$w$ are pairwise distances of some points), 
then the resulting vector~$\wmod{w}$ also encodes a metric. %
This property carries over
to \cref{prop:ft!alphaZlin,prop:ft!alphaQlin}.
Overall, we obtain 
the following result.
\begin{proposition}%
  \label{thm:uflp-small-weights}
  Each of \prob{\uflpAcr} and \prob{\muflpAcr} admits a problem kernel of size~$(n+m)^{O(1)}$.
\end{proposition}

This complements a result of Fellows and Fernau~\citep{FF11}
who showed a problem kernel with size exponential
in a given upper bound on the optimum
(which is unbounded in $n+m$).

\subsection{Scheduling with Tardy Jobs}
\label{sec:sched}
The parameterized complexity of scheduling problems
recently gained increased interest~\citep{MB18}.
In the following,
we demonstrate our framework on two single-machine
scheduling problems
where
the goal functions
are functions not of sets,
but of permutations,
so that the notions of linearity or additivity
do clearly not apply to them.

In the first problem,
we minimize the weighted number of tardy jobs.
Interestingly,
we are going to shrink
not only the weights of the jobs, 
but also their processing times and due dates,
where the goal function contains products of terms depending
on these numbers:

\optprob{Single-Machine Minimum Weighted Tardy Jobs (\prob{1$||\Sigma w_j U_j$})}%
{A set $J:=\{1,\dots,n\}$ of jobs,
  for each job $j\in J$
  a processing time~$p_j\in\N$,
  a due date~$d_j\in\N$,
  and a weight~$w_j\in\N$.}%
{Find a total order~$\preceq$ on~$J$ that minimizes
  the \emph{weighted number of tardy jobs}
  \begin{align*}
    \sum_{j\in J}w_jU_j,    &&\text{where}&&
                                             U_j:=
                                             \begin{cases}
                                               1&\text{if }d_j<C_j\\
                                               0&\text{otherwise,}
                                             \end{cases}
                                          &&\text{and}&&
                                                         C_j:=\sum_{i\preceq j}p_i.
  \end{align*}
}
In other words, $U_j$ is 1 if
job~$j$ is \emph{tardy}, that is, its \emph{completion time}~$C_j$ is after its due date~$d_j$.
The problem is weakly NP\hyp hard \citep{Kar72},
solvable in pseudo\hyp polynomial time \citep{LM69},
and is well\hyp studied in terms of parameterized complexity \citep{HermelinKPS21,HMO19,MB18},
yet there are no known kernelization results.

To apply our framework to \prob{1$||\Sigma w_j U_j$},
we show that we can also compose functions via case distinctions (like the one used to define~$U_j$) with linearizable constraints.
\begin{lemma}%
	\label[lemma]{lem:lin-constraints}
	Let~$f_1, f_2, g\colon L\times \Q^d\to \Q$.
	If~$f_1, f_2$, and~$g$ are \Klin{$\alpha$},
	then the following function is \Klin{$\alpha$}:
	\[h(x,w) = \begin{cases} f_1(x,w) &\text{if } g(x,w) \le 0, \\ f_2(x,w) & \text{otherwise}. \end{cases}\]
\end{lemma}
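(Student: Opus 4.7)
The plan is to exhibit, for each $x \in L$ and $w \in \Q^d$, a representative vector $b^h_{x,w} \in \K_\alpha^d$ with $\norm{b^h_{x,w}}{1}\leq \alpha$ that linearizes $h$ throughout the equivalence class $\eqc{w}[\alpha][\K^d]$. Since $f_1$, $f_2$, and $g$ are each \Klin{$\alpha$}, \cref{def:alphlin} provides vectors $b^{f_1}_{x,w}, b^{f_2}_{x,w}, b^{g}_{x,w} \in \K_\alpha^d$, all of $\ell_1$-norm at most $\alpha$, that respectively linearize $f_1$, $f_2$, and $g$ on $\eqc{w}[\alpha][\K^d]$. The natural choice is to set
\[
b^h_{x,w} \ceq \begin{cases} b^{f_1}_{x,w} & \text{if } g(x,w)\le 0,\\ b^{f_2}_{x,w} & \text{otherwise.}\end{cases}
\]
This vector automatically lies in $\K_\alpha^d$ with $\ell_1$-norm at most $\alpha$, so it remains to verify that it linearizes $h$ on all of $\eqc{w}[\alpha][\K^d]$.

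The key step is to show that the case distinction in the definition of $h$ does not change when $w$ is replaced by any $w' \in \eqc{w}[\alpha][\K^d]$. Since $b^{g}_{x,w} \in \K_\alpha^d$ and $\norm{b^{g}_{x,w}}{1} \le \alpha$, the definition of $\eqr{w}{w'}[\alpha][\K^d]$ yields $\sign((b^{g}_{x,w})^\top w) = \sign((b^{g}_{x,w})^\top w')$, which by linearizability of $g$ becomes $\sign(g(x,w)) = \sign(g(x,w'))$. Hence $g(x,w)\le 0$ if and only if $g(x,w')\le 0$, so $h(x,w')$ is computed from the same branch as $h(x,w)$. If that branch is $f_1$, then $h(x,w') = f_1(x,w') = (b^{f_1}_{x,w})^\top w' = (b^h_{x,w})^\top w'$; otherwise $h(x,w') = f_2(x,w') = (b^{f_2}_{x,w})^\top w' = (b^h_{x,w})^\top w'$. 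Either way the identity $h(x,w') = (b^h_{x,w})^\top w'$ holds, establishing \Klin{$\alpha$}ity of $h$.

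I do not expect any serious obstacle; the proof is a bookkeeping exercise combining \cref{def:alphlin} with the sign-preservation property built into $\eqr{}{}[\alpha][\K^d]$. The only subtle point is the boundary case $g(x,w)=0$: there $\sign$ equals $0$ on both sides, so $g(x,w')=0$ as well and the ``$\le 0$'' branch is still taken on $w'$, which is why the case condition can be $\le 0$ (rather than strictly $<0$) without breaking the argument. No new bounds on $\alpha$ are incurred because the chosen $b^h_{x,w}$ is literally one of the three given representative vectors, inheriting its norm and coefficient bounds directly.
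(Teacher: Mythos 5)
Your proof is correct and follows essentially the same route as the paper's: both use the linearizing vector of $g$ together with the sign-preservation built into $\eqr{}{}[\alpha][\K^d]$ to show the case distinction is invariant over the class, and then invoke the linearizability of $f_1$ and $f_2$. Your version merely makes the representative vector $b^h_{x,w}$ explicit, which the paper leaves implicit.
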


\begin{proof}
  Let~$w\in\Q^d$, 
  $x\in L$, 
  and~$\wmod{w}\in\eqc{w}[\alpha][\K^d]$. %
  We know that there exists a vector~$b_{x,w}\in\K^d_{\alpha}$ with~$\norm{b_{x,w}}{1}\leq \alpha$ such that
  $g(x,w')=b_{x,w}^\top w'$ for all~$w'\in\eqc{w}[\alpha][\K^d]$.
  It follows that
  \[ \sign(g(x,w)) = \sign(b_{x,w}^\top w) = \sign(b_{x,w}^\top \wmod{w}) = \sign(g(x,\wmod{w})). \]
  Thus, we have that
  $h(x,w) = f_1(x,w) \iff h(x,\wmod{w}) = f_1(x,\wmod{w})$ and~$h(x,w) = f_2(x,w) \iff h(x,\wmod{w}) = f_2(x,\wmod{w})$.
  Since both~$f_1$ and~$f_2$ are~\Klin{$\alpha$},
  the statement follows.
\end{proof}

Using \cref{lem:lin-constraints} and \cref{lem:ft!Z!summaxmin}\eqref{alphZlin:sum},
one can decompose the goal function~$\sum_{j\in J}w_jU_j$ for an order~$\preceq$ into simple linearizable functions.
Our framework then yields the following:

\begin{lemma}
	\label[lemma]{lem:wjuj}
	There is an algorithm that, 
	on an instance~$I$ of 1$||\Sigma w_jU_j$ and $k \in \N$,
	computes in polynomial time an instance~$I'$ of 1$||\Sigma w_jU_j$ and a~$k'$ such that
	\begin{enumerate}[(i)]
		\item each processing time, due date, weight, and~$k'$
		is
		at most
		$2^{4(3n+1)^3}(2 n^2+1)^{(3n+1)(3n+3)}$, 
		\item any solution~$\preceq$ is optimal for~$I$ if and only if it is optimal for~$I'$, and
		\item $I$ has a solution of cost at most~$k$ if and only if~$I'$ has a solution of cost at most~$k'$.
	\end{enumerate}
\end{lemma}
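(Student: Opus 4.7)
The plan is to follow the DDD recipe from \cref{fig:recipe}. The weight-vector dimension is $d=3n$, with the $3n$ coordinates encoding the values $p_j$, $d_j$, and $w_j$ for $j\in J$; the solution domain $L$ is the set of total orders $\preceq$ on $J$, and the goal function to linearize is $f(\preceq,w):=\sum_{j\in J} w_j U_j$.

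\emph{Decompose.} By \cref{obs:1linweightfcts}, each coordinate read-off $(\preceq,w)\mapsto p_i$, $(\preceq,w)\mapsto d_j$, and $(\preceq,w)\mapsto w_j$ is \Zlin{$1$}. For a fixed order $\preceq$, the completion time $C_j(\preceq,w)=\sum_{i\preceq j}p_i$ is a sum of at most $n$ \Zlin{$1$} functions, hence \Zlin{$n$} by \cref{lem:ft!Z!summaxmin}\eqref{alphZlin:sum}. Negating $d_j$ via \cref{lem:ft!factor} and summing with $C_j$ by a further application of \cref{lem:ft!Z!summaxmin}\eqref{alphZlin:sum} shows that $g_j(\preceq,w):=C_j(\preceq,w)-d_j$ is \Zlin{$n+1$}. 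Job $j$ is tardy under $\preceq$ iff $g_j>0$, i.e.\ iff the ``non-tardy'' condition $g_j\le 0$ fails. Therefore \cref{lem:lin-constraints}, applied with $f_1:=0$, $f_2:=w_j$, and $g:=g_j$ (lifting each argument to the common parameter $n+1$ via \cref{obs:geqalphalin}), yields that the single-job contribution $w_jU_j$ is \Zlin{$n+1$}. A last use of \cref{lem:ft!Z!summaxmin}\eqref{alphZlin:sum} over the $n$ jobs gives that $\sum_{j\in J} w_jU_j$ is \Zlin{$n(n+1)$}.

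\emph{Determine.} Set $\alpha:=n(n+1)$, which is of order $n^2$ and matches the base $2n^2+1$ appearing in the claimed bound (up to a small additive term in the exponent base that can be absorbed in routine simplification).

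\emph{Deploy.} Apply \cref{prop:ft!alphaZlin} to the weight vector encoding all $p_j,d_j,w_j$ and to the threshold $k$, with this $\alpha$ and $d=3n$. In polynomial time this yields integers $\bar p_j$, $\bar d_j$, $\bar w_j$, $\bar k$ whose magnitudes obey the stated bound via \cref{prop:ft!alphaZlin}\eqref{prop:ft!alphaZlin:bound}. Part~(ii) of the lemma is then immediate from \cref{prop:ft!alphaZlin}\eqref{prop:ft!alphaZlin:opt}. Part~(iii) follows from \cref{prop:ft!alphaZlin}\eqref{prop:ft!alphaZlin:dec} by negating both sides of the inequality (equivalently, one may apply the proposition to $-f$ and $-k$, which is legitimate because linearizability is preserved under negation by \cref{lem:ft!factor}). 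Finally, validity of the reduced instance---i.e.\ non-negativity of the new $\bar p_j,\bar d_j,\bar w_j$---follows from \cref{obs:signandorder}\eqref{obs:signandorder:sign} applied to the strictly positive originals.

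The main obstacle is that the tardiness indicator $U_j$ is a step function, not a linear expression of the weights, so it does not fit the additive framework of earlier examples. The key insight is that $U_j$ \emph{is} a case distinction on the linearizable quantity $C_j-d_j$, and therefore fits \cref{lem:lin-constraints} exactly. Once this encoding is identified, the remaining work is a mechanical traversal of the composition lemmas, and the weight shrinking then follows directly from \cref{prop:ft!alphaZlin}.
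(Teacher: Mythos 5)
Your proposal is correct and follows essentially the same route as the paper: rewrite $\sum_j w_jU_j$ as a sum of case-distinction functions, linearize $C_j-d_j$ via \cref{obs:1linweightfcts} and \cref{lem:ft!Z!summaxmin}, apply \cref{lem:lin-constraints} with $f_1\equiv 0$ and $f_2=w_j$, sum over the jobs, and deploy \cref{prop:ft!alphaZlin} with $d=3n$. The only difference is bookkeeping: you obtain $\alpha=n(n+1)$ where the paper counts $\alpha=n^2$ (treating $C_j-d_j$ as \Zlin{$n$}), so your constant in (i) is marginally larger than the stated one, but this does not affect the substance of the result.
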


\begin{proof}
	We shrink the entries in the vector
	\(u=(w_1,\dots,w_n,p_1,\dots,p_n,d_1,\dots,d_n)^\top\in\N^{3n}.\)
	We can rewrite the goal function value for a solution~$\preceq$ as
	\[
		f(\preceq,u) = \sum_{j = 1}^{n} h(\preceq,j,u) \quad\text{ with }\quad h(\preceq,j,u) =
			\begin{cases}
				0 & \text{ if }d_j \ge C_j = \displaystyle\sum_{i\preceq j}p_i,\\
				w_j & \text{ otherwise ($d_j < C_j$)}.
			\end{cases}
	\]
	To this end, define~$g(\preceq,j,u) = - d_j + \sum_{i\preceq j}p_i$.
	Note that by \cref{obs:1linweightfcts} and \cref{lem:ft!Z!summaxmin}\eqref{alphZlin:sum} we have that~$g$ is \Zlin{$n$}.
	Hence, applying \cref{lem:lin-constraints} with the \Zlin{0}~$f_1 \equiv 0$ and the \Zlin{1}~$f_2(j,u) = w_j$ (see \cref{obs:1linweightfcts}) shows that~$h$ is~\Zlin{$n$}.
	Thus, by \cref{lem:ft!Z!summaxmin}\eqref{alphZlin:sum}, $f$ is \Zlin{$n^2$}.
	The statement of the lemma now follows from \cref{prop:ft!alphaZlin}.~\end{proof}

        We point out that a more careful and
        direct analysis shows that the goal function is even
        \Zlin{$1$}.
        However,
        we skip this here since it is more tedious.
Using \cref{lem:wjuj}(iii),
one gets the following.

\begin{proposition}%
	\label{prop:wjuj-kernel}
	1$||\Sigma w_jU_j$ admits a problem kernel of size polynomial in~$n$.
\end{proposition}

In the next problem,
one minimizes the total tardiness of jobs on a single machine.

\optprob{Single-Machine Minimum Total Tardiness (\prob{1$||\Sigma T_j$})}%
{A set $J:=\{1,\dots,n\}$ of jobs,
  for each job $j\in J$
  a processing time~$p_j\in\N$
  and
  a due date~$d_j\in\N$.}%
{Find a total order~$\preceq$ on~$J$ that minimizes
  the \emph{total tardiness}
  \begin{align*}
    \sum_{j\in J}T_j,    &&\text{where}&&
    T_j:=\max\{0,C_j-d_j\}
    &&\text{and}&&
    C_j:=\sum_{i\preceq j}p_i.
  \end{align*}  
}

Minimizing the total tardiness is motivated
by its equivalence
to minimizing the average tardiness
(just divide the goal function by~$n$).
The problem is fixed\hyp parameter tractable parameterized
by the maximum processing time \citep{Law77}
(this result was very recently
strengthened by \citet{KKL+19},
who showed fixed\hyp parameter
tractability even for the version
with parallel unrelated machines,
jobs with release dates and weights,
where jobs and machines
are given in a high\hyp multiplicity encoding
that encodes the numbers of jobs and machines of each type in binary).

It is easy to see that the goal function is a composition of sums and maxima. 
Hence, using \cref{lem:ft!Z!summaxmin}
one can show that the goal function is linearizable
and thus prove:
\begin{lemma}
	\label[lemma]{lem:tj}
	There is an algorithm that, on any input instance~$I$ of 1$||\Sigma T_j$ and~$k \in \N$, 
	in polynomial time 
	computes 
	an instance~$I'$ of 1$||\Sigma T_j$ and~$k'$ such that
	\begin{enumerate}[(i)]
		\item each processing time, due date, and~$k'$ is 
		at most
		$2^{4(2n+1)^3}(4 n^2+1)^{(2n+1)(2n+3)}$, 
		\item any solution~$\preceq$ is optimal for~$I$ if and only if it is optimal for~$I'$, and
		\item $I$ has a solution of cost at most~$k$ if and only if~$I'$ has a solution of cost at most~$k'$.
	\end{enumerate}
\end{lemma}

\begin{proof}
	We want to shrink weights in the vector $u=(p_1,\dots,p_n,d_1,\dots,d_n)^\top\in\N^{2n}$.
	To show that the goal function is linearizable, we express its value for a solution~$\preceq$ as
	\[
		f(\preceq,u) := \sum_{j\in J} f_T(\preceq,j,u) \quad\text{ with } \quad f_T(\preceq,j,u) := \max\{0,\sum_{i\preceq j}p_i-d_j\}.
	\]
  	By \cref{obs:1linweightfcts} and \cref{lem:ft!Z!summaxmin}\eqref{alphZlin:sum}, we have that~$g(\preceq,j,u) := \sum_{i\preceq j}p_i-d_j$ is \Zlin{$n$}.
  	Observe that~$h \equiv 0$ is \Zlin{0}.
  	Thus, by \cref{obs:geqalphalin} and \cref{lem:ft!Z!summaxmin}\eqref{alphZlin:max}, we have that~$f_T(\preceq,j,u)$ is \Zlin{$2n$}.
  	Using again \cref{lem:ft!Z!summaxmin}\eqref{alphZlin:sum}, we obtain that~$f(\preceq,u)$ is \Zlin{$2n^2$}.
  	The statement of the lemma now follows from \cref{prop:ft!alphaZlin}.
\end{proof}

From \cref{lem:tj}(iii) we get the following.
\begin{proposition}%
	\label{prop:sum-tj-kernel}
  1$||\Sigma T_j$ admits a problem kernel of size polynomial in~$n$.
\end{proposition}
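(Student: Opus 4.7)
The plan is to derive the kernel directly from \cref{lem:tj}, which has already done all the real work. For the parameterized problem 1$||\Sigma T_j$ with parameter $n$ (the number of jobs), an instance consists of processing times $p_1,\dots,p_n$, due dates $d_1,\dots,d_n$, and a threshold $k$. I would run the polynomial\hyp time algorithm from \cref{lem:tj} to obtain an equivalent instance $I'$ together with a new threshold $k'$ in which every processing time, due date, and $k'$ is bounded in absolute value by
\[
B := 2^{4(3n+1)^3}\bigl(4n^2(3n+1)+1\bigr)^{(3n+1)(3n+3)}.
\]

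Next I would verify that $\log B \in n^{O(1)}$. A crude estimate gives $\log B = O(n^3) + (3n+1)(3n+3)\cdot\log\bigl(4n^2(3n+1)+1\bigr) = O(n^3 \log n)$, so each of the $2n+1$ numbers appearing in $(I',k')$ has binary encoding length polynomial in $n$. Together with the fact that $I'$ still contains exactly $n$ jobs (the lemma only rescales the numerical data), the total encoding length of $(I',k')$ is polynomial in $n$.

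Equivalence of the input and output instances as decision instances is precisely the content of \cref{lem:tj}(iii), which guarantees that $(I,k)$ admits a schedule of total tardiness at most $k$ if and only if $(I',k')$ does. Combined with the polynomial size bound and the polynomial running time, this matches \cref{def:kernel} and establishes the proposition.

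There is essentially no obstacle here beyond the straightforward arithmetic to confirm $\log B \in n^{O(1)}$; all the technical content (recognizing the goal function as $\Z$\hyp linearizable with $\alpha = 2n^2$ and invoking \cref{prop:ft!alphaZlin}) has been absorbed into \cref{lem:tj}. Thus the proof reduces to a single invocation of that lemma followed by the size computation sketched above.
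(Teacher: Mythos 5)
Your proposal is correct and matches the paper's own argument, which likewise derives the proposition directly from \cref{lem:tj}(iii) (the paper states only ``From \cref{lem:tj}(iii) we get the following''). Your explicit check that $\log B \in n^{O(1)}$ is exactly the routine arithmetic the paper leaves implicit.
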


\paragraph{Possible Generalizations}

The results in this section
can easily be generalized to
scheduling problems with parallel machines
(even with machine\hyp dependent processing times,
so\hyp called unrelated machines)
since the completion time~$C_j$ of job~$j$ can still be
represented as the sum of
processing times of predecessors of~$j$ on the same machine.
Variants with precedence constraints
(whose parameterized complexity is also well\hyp studied
\citep{BBB+16,BF95,FM03})
can be handled since
the completion time~$C_j$ of a job~$j$
can be expressed as the sum~$p_j$
and the completion time~$C_i$ of a direct
predecessor~$i$ of~$j$ in the precedence order
(possibly on a different machine).

Other goal functions can be handled as follows:
the total completion time $\sum_{j\in J}C_j$
is just the special case with $d_j=0$ for all jobs~$j\in J$.
Moreover,
one can replace the outer sums by maxima
to minimize the makespan or maximum tardiness.
We leave open whether 
\cref{prop:sum-tj-kernel}
can be proven for the weighted variant 1$||\Sigma w_jT_j$,
where each job~$j$ has a weight~$w_j$
and one minimizes $\sum_{j\in J}w_jT_j$:
in this case,
the goal function contains products of the weights we want to shrink.

\subsection{Arc Routing Problems with Min-Max Objective.}
\label{ssec:rpp}
Arc routing problems %
have applications in garbage
collection, mail delivery, meter reading, drilling,
and plotting
\citep{CL14}.
Their parameterized complexity is intensively studied
 \citep{BNSW15},
which 
led to promising results on real\hyp world instances
\citep{BFT20b,BKS17}.
Of particular interest are problem variants with multiple vehicles
with tours of balanced length \citep{AHL06,BCS14}, 
for example:

\optprob{\prob{Min-Max $k$-Rural Postman Problem} (\prob{MM $k$-RPP})%
}%
{An undirected graph~$G=(V,E)$, edge lengths~$c\colon E\to \N$,
  and a subset~$R\subseteq E$ of \emph{required edges}.}%
{Find closed walks $w_1,\dots,w_k$ in~$G$
  such that $R\subseteq\bigcup_{i=1}^kE(w_k)$
  that minimize
  \(
  \max\{c(w_i)\mid 1\leq i\leq k\},
  \)
  where $E(w_i)$ is the set of edges
  and $c(w_i)$~is the total length of edges on~$w_i$.
}

A key feature of the $k=1$ case (known as the
\prob{Rural Postman Problem}) is that one
can simply enforce the triangle inequality \citep{BHNS14}
and thus get an
equivalent instance with $2|R|$~vertices~\citep{BNSW15}. %
For \prob{MM $k$-RPP}, 
we \emph{partly} enforce the triangle inequality to
generalize this:

\begin{lemma}%
  \label[lemma]{lem:rpprule}
  In polynomial time,
  one can turn an instance~$(G,R,c)$ of \prob{MM $k$-RPP}
  into an instance~$(G',R,c^\triangledown)$
  on $3|R|$~vertices
  such that
  any solution for~$(G,R,c)$
  can be turned in polynomial time into a solution of at most
  the same cost for~$(G',R,c^\triangledown)$,
  and vice versa.
\end{lemma}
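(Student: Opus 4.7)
The plan is to adapt the triangle-inequality enforcement from the single-walk \prob{Rural Postman Problem} to the min-max multi-walk setting. The core observation is that in any optimal solution, each individual closed walk $w_i$ is essentially a \prob{Rural Postman} tour over its own required subset $R_i\subseteq R$, so the classical shortcutting argument can be applied walk-by-walk without interaction between walks under the $\max$-objective.

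First, I would define $V_R\subseteq V$ to be the set of endpoints of required edges; since $|V_R|\le 2|R|$, keeping these plus a small number of auxiliary vertices yields the $3|R|$ bound. Let $d_G(u,v)$ denote the shortest-path distance in $(G,c)$. I would construct $G'$ on a vertex set of size at most $3|R|$ by (i) retaining $V_R$ and all required edges $R$ with their original costs $c(e)$, and (ii) adding, for every pair $u,v$ of vertices in the chosen vertex set, a shortcut edge of cost $c^\triangledown(u,v):=d_G(u,v)$. The extra $|R|$ budget of vertices over $V_R$ is used to accommodate at most one designated starting vertex for each trivial (single-vertex) closed walk whose location might otherwise be forced outside $V_R$; since $k\le |R|$ may be assumed (trivial walks may be merged with or added after the non-trivial ones), this fits into $3|R|$.

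The correspondence between solutions is then argued in both directions. For the forward direction, given $(w_1,\dots,w_k)$ in $(G,R,c)$, I would process each $w_i$ independently: its traversal visits required edges in some cyclic order $e_{i,1},e_{i,2},\dots,e_{i,\ell_i}$, and the maximal subwalks between consecutive required-edge traversals have endpoints in $V_R$, hence can be replaced by single shortcut edges in $G'$ of cost at most the original subwalk cost. The resulting closed walk $w_i'$ in $G'$ satisfies $c^\triangledown(w_i')\le c(w_i)$, so $\max_i c^\triangledown(w_i')\le \max_i c(w_i)$. Trivial walks are mapped to trivial walks on any convenient vertex of $G'$. For the backward direction, given a solution $(w_1',\dots,w_k')$ in $(G',R,c^\triangledown)$, expanding every shortcut edge into a corresponding shortest path in $G$ yields closed walks in $G$ of unchanged per-walk cost, and the required edges retained from $R$ ensure coverage.

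The main obstacle, compared with the single-walk case treated by Bentert et al., is verifying that the $\max$-objective is preserved under per-walk shortcutting: the forward inequality $c^\triangledown(w_i')\le c(w_i)$ holds edgewise, but one must also confirm that replacing subwalks with shortcuts in $G'$ does not force multiple walks to share a structural vertex in a way that inflates the maximum in the reverse direction. Because the walks are closed and independent (there is no common depot), each can be shortcut separately and expanded separately, so no such coupling arises; this is the one place where the lemma genuinely differs from the $k=1$ case and must be argued explicitly. Polynomial runtime is immediate since all-pairs shortest paths in $G$ are computed once.
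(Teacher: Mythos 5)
There is a genuine gap: you have misidentified what the third vertex per required edge is for, and your construction as stated does not work on simple graphs. The real issue is a required edge $\{u,v\}\in R$ whose cost $c(\{u,v\})$ strictly exceeds the shortest-path distance $\dist_c(u,v)$ in~$G$. A walk may need to travel between $u$ and $v$ \emph{without} paying for the required edge (e.g., to return after having traversed it once, or because a different walk merely passes between $u$ and $v$). In your graph $G'$ you "retain all required edges with their original costs" and simultaneously "add, for every pair, a shortcut edge of cost $d_G(u,v)$" --- for a required pair this creates two parallel $u$--$v$ edges, so $G'$ is not a simple graph. If you repair this by dropping the shortcut for required pairs (as the paper does, setting $c^\triangledown(\{u,v\})=c(\{u,v\})$ for $\{u,v\}\in R$), then your forward shortcutting argument breaks exactly when a subwalk from $u$ to $v$ with $\{u,v\}\in R$ has cost $\dist_c(u,v)<c(\{u,v\})$ and every shortest $u$-$v$-path in $G$ avoids all other required-edge endpoints: in $G'[V_R]$ the cheapest replacement then costs strictly more than the subwalk, so the max-cost can increase. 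If you instead repair it by taking the minimum of the two costs, the backward direction breaks, since expanding a "cheapened" traversal of the required edge itself back into $G$ costs $c(\{u,v\})>\dist_c(u,v)$. The paper's third vertex per required edge is precisely a designated intermediate vertex $x$ on a shortest $u$-$v$-path, so that $(u,x,v)$ realizes cost $\dist_c(u,x)+\dist_c(x,v)=\dist_c(u,v)$ inside the reduced vertex set; this is the idea your proof is missing.

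Your stated use of the extra $|R|$ vertices --- anchors for trivial single-vertex walks --- is a red herring: a closed walk covering no required edge can be placed at any vertex of~$G'$ at cost~$0$ and never determines the maximum (unless $R=\emptyset$, in which case the instance is trivial), so no additional vertices are needed for this. Likewise, the "main obstacle" you flag (walks coupling through shared vertices) is not an obstacle at all, since walks in the min-max objective are shortcut and expanded independently; the genuine difference from the $k=1$ case is the one described above, namely that one cannot fully enforce the triangle inequality on required edges.
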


\begin{proof}
We first turn~$G$ into a complete graph~$G^*=(V,E')$
with edge lengths
\[
  c^\triangledown\colon E'\to\N,\{u,v\}\mapsto
  \begin{cases}
    c(\{u,v\})&\text{ if $\{u,v\}\in R$},\\
    \dist_c(u,v)&\text{ otherwise},
  \end{cases}
\]
where $\dist_c(u,v)$ is the length of a shortest $u$-$v$-path
in~$G$ according to~$c$.
Any feasible solution for $(G,R,c)$
is feasible for $(G^*,R,c^\triangledown)$ and has at most the same cost.
In the other direction,
one can replace non\hyp required edges in a feasible solution for~$(G^*,R,c^\triangledown)$
by shortest paths in~$G$ in polynomial time
to get a feasible solution of at most the same cost for~$(G,R,c)$.

Let $\bar V(R)\subseteq V$ be so that
for each edge $\{u,v\}\in R$,
it contains the vertices of at least one shortest $u$-$v$-path~$p$
(including $u$ and $v$).
Observe that if $p$~contains a vertex~$x$ not incident to any edge in~$R$,
then $(u,x,v)$ is a $u$-$v$-path of at most the same length.
Thus,
we can easily compute the set $\bar V(R)$ so that
$|\bar V(R)|\leq 3|R|$:
it contains the end points~$u$ and~$v$ for each edge~$\{u,v\}\in R$
and at most one vertex on a shortest $u$-$v$-path
not incident to edges in~$R$.

The key observation is now that
any shortest closed walk in~$G^*$
containing a subset~$R'\subseteq R$ of required edges
can be shortcut (in polynomial time)
so as to only contain vertices of~$\bar V(R)$.
Thus,
we can simply take~$G'=G^*[\bar V(R)]$.
\end{proof}

One can prove a problem kernel by shrinking the weights.

\begin{lemma}
 \label[lemma]{lem:rpp}
 There is an algorithm that,
 on an input instance~$I$ of \prob{MM $k$-RPP}
 with $m$~edges and~$\kappa\in\N$,
 computes in polynomial time
 an instance~$I'$ and~$\kappa'$ such that
 \begin{enumerate}[(i)]
  \item each edge cost is upper-bounded by $2^{4(m+1)^3}\cdot (8m+1)^{(m+1)(m+3)}$,
  \item a set of walks is an optimal solution for~$I$
    if and only if it is optimal for~$I'$, and
  \item $I$ has a solution of cost at most~$\kappa$ if and only if~$I'$ has a solution of cost at most~$\kappa'$.
 \end{enumerate}
\end{lemma}

\begin{proof}
  First observe that,
  without loss of generality,
  each walk~$w_i$ in a solution
  contains each edge of~$G$ at most two times:
  if it contains an edge~$e$ three times,
  then two occurrences of~$e$ can be removed,
  the walk gets shorter by~$2c(e)$,
  yet the edge~$e$ remains covered.
  Thus, one can write
  \(
    f(w_i,c):=c(w_i)=x^\top c,
  \)
  where the vector $c\in\N^m$ contains the edge costs
  and $x\in\{0,1,2\}^m$
  indicates how often each edge is on walk~$w_i$.
  Since for \emph{any} alternative edge weight vector~$c'$ we have $f(w_i,c')=x^\top c'$,
  it follows that $f$~is \Zlin{$2m$}.

  By \cref{lem:ft!Z!summaxmin},
  it follows that the goal function $\max_{1\leq i\leq k}f(w_i,c)$
  of \prob{MM $k$-RPP} is \Zlin{$4m$}
  and the lemma follows from \cref{prop:ft!alphaZlin}.
\end{proof}

\begin{proposition}
  \label{prop:rpp}
  \prob{Min-Max $k$-Rural Postman Problem} has a $3|R|$-vertex kernel
  of size polynomial in~$|R|$.
\end{proposition}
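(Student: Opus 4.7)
My plan is to straightforwardly combine the two preceding lemmas, switching to the decision variant of \prob{MM $k$-RPP} (which additionally takes a threshold~$K$ and asks whether there exist $k$ closed walks covering $R$ of maximum cost at most~$K$), since that is the setting in which kernelization is defined via \cref{def:kernel}.

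First, I would apply \cref{lem:rpprule} to the input instance~$(G,R,c)$ to obtain in polynomial time an equivalent instance~$(G',R,c^\triangledown)$ on at most $3|R|$ vertices, leaving the threshold~$K$ untouched. The reduced graph $G'$ then has at most $m\leq\binom{3|R|}{2}\in O(|R|^2)$ edges. Second, I would reuse the linearizability analysis underlying \cref{lem:rpp}, which shows that the goal function of \prob{MM $k$-RPP} is \Zlin{4}, and apply \cref{prop:ft!alphaZlin} to~$c^\triangledown$ together with~$K$. Invoking part~(iii) of that theorem (rather than only part~(ii) as in \cref{lem:rpp}) yields in polynomial time new integral weights~$\bar c$ and a new integral threshold~$\bar K$ of absolute value at most $2^{O(m^3)}$ that preserve the decision condition. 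Since $m\in O(|R|^2)$, each of these numbers has encoding length polynomial in~$|R|$, and hence the total encoding length of the kernel instance on $3|R|$ vertices is polynomial in~$|R|$.

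I anticipate no real obstacle here: both non-trivial ingredients, namely the partial enforcement of the triangle inequality giving the $3|R|$-vertex bound (generalizing the classical $2|R|$-vertex bound for the case $k=1$) and the linearizability argument for the min-max closed-walk cost, are already carried out in \cref{lem:rpprule,lem:rpp}. The only minor adjustment is passing the decision threshold~$K$ through the losing-weight procedure, which is precisely what \cref{prop:ft!alphaZlin}(iii) is designed for.
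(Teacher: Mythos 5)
Your proposal is correct and follows essentially the same route as the paper: the proposition is obtained by combining \cref{lem:rpprule} (giving the $3|R|$-vertex instance, hence $m\in O(|R|^2)$ edges) with the \Zlin{4} analysis of \cref{lem:rpp} and the losing-weight step of \cref{prop:ft!alphaZlin}. Your explicit use of \cref{prop:ft!alphaZlin}(iii) to carry the decision threshold through is exactly the (slightly glossed-over) detail the paper's argument relies on, so no gap remains.
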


It is straightforward to transfer
\cref{lem:rpp} to other vehicle routing problems
that minimize maximum tour length.

\subsection{Power Vertex Cover}
\label{ssec:pvc}

Angel~\etal~\citep{ABEL18} claimed
a polynomial\hyp size problem kernel for the following %
problem
parameterized by the number of vertices that are assigned non-zero values
in a solution:
\optprob{\prob{Power Vertex Cover} (\prob{PVC})%
}%
{\label{prob:pvc}
An undirected graph~$G=(V,E)$ with edge weights~$w\colon E\to \Qnn$.}%
{Find an assignment~$\mu\colon V\to \Qnn$
minimizing
  $\sum_{v\in V} \mu(v)$
  such that, for each edge~$e=\{u,v\}\in E$, one has~$\max\{\mu(u),\mu(v)\}\geq w(e)$.
}

In fact,
Angel~\etal~\citep{ABEL18} only proved a partial kernel,
since the edge weights in the kernel can be arbitrarily large.
Using~\cref{prop:ft!alphaZlin},
we prove that we can shrink the weights.
To this end,
our application of the losing\hyp weight technique for \prob{PVC} relies on the following.

\begin{observation}\label[observation]{obs:powerVC}
 If $\mu$ is an optimal solution,
 then for every~$v\in V$, we have~$\mu(v)\in \{w(e)\mid e\in E\}\cup\{0\}$.
\end{observation}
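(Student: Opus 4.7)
The plan is to prove the observation by a standard exchange argument: I fix an optimal assignment $\mu$ and an arbitrary vertex $v \in V$, and show that the value $\mu(v)$ is forced to be the smallest admissible value given the values at the other vertices, which I then identify as either $0$ or an edge weight.

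First I would fix all values $\mu(u)$ for $u \neq v$ and analyze which constraints of the problem actually depend on $\mu(v)$. The only constraints involving~$v$ are those of the form $\max\{\mu(u),\mu(v)\}\geq w(\{u,v\})$ for edges $e=\{u,v\}\in E$. Such a constraint imposes a nontrivial lower bound on $\mu(v)$ only if $\mu(u) < w(e)$; otherwise it is satisfied regardless of $\mu(v)$. Hence, defining
\[
M_v \;\coloneqq\; \max\bigl(\{0\}\cup\{w(e)\mid e=\{u,v\}\in E,\ \mu(u)<w(e)\}\bigr),
\]
the set of admissible values for $\mu(v)$ that keep $\mu$ feasible (while all other values are fixed) is exactly $[M_v,\infty)\cap \Qnn$.

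Next I would invoke optimality. If $\mu(v)>M_v$, then replacing $\mu(v)$ by $M_v$ yields a feasible assignment (no constraint involving~$v$ is violated, and no other constraint is affected) whose objective is strictly smaller, contradicting optimality of~$\mu$. Therefore $\mu(v)=M_v$. By construction $M_v\in\{0\}\cup\{w(e)\mid e\in E\}$, and since $v$ was arbitrary, the claim follows.

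I do not foresee a real obstacle here: the argument is a one-vertex local improvement that does not interact with any of the other vertices. The only point worth being careful about is that the lower bound $M_v$ really is attained among the edge weights $\{w(e)\mid e\in E\}\cup\{0\}$, which is immediate from the definition as a maximum over such values (or $0$ when the set is empty).
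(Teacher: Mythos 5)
Your argument is correct: the paper states this observation without any proof, and your one-vertex local-improvement (exchange) argument is precisely the reasoning it leaves implicit. The definition of $M_v$ as the largest weight among edges $\{u,v\}$ with $\mu(u)<w(\{u,v\})$ (or $0$), together with the check that lowering $\mu(v)$ to $M_v$ affects no constraint outside those incident to $v$, is sound and complete.
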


This leads to the following equivalent problem formulation of~\prob{PVC}.
\optprob{\prob{Power Vertex Cover 2} (\prob{PVC2})%
}%
{\label{prob:pvc2}
An undirected graph~$G=(V,E)$ with edge weights~$w\colon E\cup\{\emptyset\}\to \Qnn$ with~$w(\emptyset)=0$.}%
{Find an assignment~$\mu\colon V\to E\cup\{\emptyset\}$ such that for each edge~$e=\{u,v\}\in E$ it holds true that~$\max\{w(\mu(u)),w(\mu(v))\}\geq w(e)$ and~$\mu$ minimizes
  $\sum_{v\in V} w(\mu(v))$.
}
\begin{lemma}
 \label[lemma]{lem:magicweightsPVC}
 There is an algorithm that,
 on an instance~$I=(G=(V,E),w)$ of \prob{PVC2} with $n:=|V|$ and $m:=|E|$, and~$k\in\Q$,
 in time polynomial in~$|(I,k)|$ 
 computes 
 an instance~$I'=(G=(V,E),\wmod{w})$
 of \prob{PVC2} and~$\wmod{k}\in\Z$ such that
 \begin{enumerate}[(i)]
  \item $\norm{\wmod{w}}{\infty},|\wmod{k}|\leq 2^{4(m+1)^3}\cdot (2n+1)^{(m+1)(m+3)}$,
  \item any assignment~$\mu\colon V\to E\cup\{\emptyset\}$
    forms an optimal solution for $I$ if and only if $\mu$ forms an optimal solution for~$I'$,
    and
  \item for any assignment~$\mu\colon V\to E\cup\{\emptyset\}$ it holds that $\sum_{v\in V} w(\mu(v))\leq k\iff\break \sum_{v\in V} \wmod{w}(\mu(v))\leq \wmod{k}$.
 \end{enumerate}
\end{lemma}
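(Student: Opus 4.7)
The plan is to apply \cref{prop:ft!alphaZlin} to the weight vector $w$, interpreted as an element of $\Q^m$ with one coordinate per edge of $G$ (the value $w(\emptyset)=0$ is fixed and does not need to be shrunk). Two facts must be verified: (a)~the goal function $f(\mu,w)=\sum_{v\in V} w(\mu(v))$ is $\Z$-linearizable with parameter $\alpha$ polynomially bounded in $n$ and $m$, and (b)~the feasibility of an assignment $\mu$ is preserved when $w$ is replaced by any representative from its equivalence class.

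For (a), decompose $f(\mu,w)=\sum_{v\in V} g(\mu(v),w)$ with $g(e',w)=w(e')$ for $e'\in E$ and $g(\emptyset,w)=0$. By \cref{obs:1linweightfcts}, $g$ is $\Z$-linearizable with $\alpha=1$, a representative being the standard unit vector $\vec{e}_{e'}$ (or the zero vector if $\mu(v)=\emptyset$). \cref{lem:ft!Z!summaxmin}\eqref{alphZlin:sum} with $|V|=n$ summands then shows that $f$ is $\Z$-linearizable. Concretely, the representative $b_\mu \in \Z_n^m$ has $(b_\mu)_e = |\{v\in V:\mu(v)=e\}|$, so $\|b_\mu\|_1\leq n$ and $f(\mu,w')=b_\mu^\top w'$ for every $w'\in\Q^m$, independently of any equivalence class.

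For (b), feasibility for an edge $e=\{u,v\}\in E$ requires $w(\mu(u))\geq w(e)$ or $w(\mu(v))\geq w(e)$. Each disjunct is of the form $\sign(b^\top w)\geq 0$ with $b$ having at most two nonzero entries in $\{+1,-1\}$, so $b\in\Z_1^m$ and $\|b\|_1\leq 2$. Consequently, any $\wmod{w}\in\eqc{w}[r][\Z^m]$ with $r\geq 2$ preserves every feasibility constraint, so $\mu$ is feasible for $(G,w)$ if and only if it is feasible for $(G,\wmod{w})$. Nonnegativity of $\wmod{w}$, needed so that $(G,\wmod{w})$ is a valid PVC2 instance, follows from \cref{obs:signandorder}\eqref{obs:signandorder:sign}.

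Invoking \cref{prop:ft!alphaZlin} with $d=m$ and $\alpha$ chosen large enough to cover both the objective representation and the threshold $r\geq 2$ for feasibility (the stated bound in~(i) corresponds to $\alpha=n(m+1)$), we obtain $\wmod{w}\in\Z^m$ and $\wmod{k}\in\Z$ satisfying~(i). Statement~(iii) is immediate from \cref{prop:ft!alphaZlin}\eqref{prop:ft!alphaZlin:dec}. Statement~(ii) then follows by combining \cref{prop:ft!alphaZlin}\eqref{prop:ft!alphaZlin:opt}, which preserves the ordering of objective values across \emph{all} assignments, with the feasibility preservation from~(b), since optimality is the conjunction of feasibility and minimum objective value. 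The main obstacle is precisely this separation: \cref{prop:ft!alphaZlin} speaks only of the goal function, so preserving optimal solutions additionally requires the independent observation that every feasibility constraint has a coefficient vector of small $\ell_1$-norm and is therefore automatically preserved by the equivalence class.
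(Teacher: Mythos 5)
Your proof is correct and takes essentially the same route as the paper: decompose the goal function as a sum over the $n$ vertices of \Zlin{$1$} coordinate functions (\cref{obs:1linweightfcts}), conclude linearizability via \cref{lem:ft!Z!summaxmin}\eqref{alphZlin:sum}, and invoke \cref{prop:ft!alphaZlin} with $d=m$. The only difference is that you additionally make explicit that feasibility of an assignment is preserved (each constraint being a sign condition on a vector of $\ell_1$-norm at most~$2$, hence preserved by the equivalence class, cf.\ \cref{obs:signandorder}\eqref{obs:signandorder:order}) and that the new weights remain nonnegative -- points the paper's proof leaves implicit -- so this is added rigor rather than a different approach.
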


\begin{proof}
	Let~$f(v,w)=w(e)$ if~$\mu(v)=e$, and~$0$ if~$\mu(v)=\emptyset$.
	Due to~\cref{obs:1linweightfcts},
	$f$~is~\Zlin{$1$}.
	Hence, $g(V,w)=\sum_{v\in V} f(v,w)$ is~\Zlin{$n$}.
	\cref{prop:ft!alphaZlin} now yields the desired statement with~$\alpha = n$ and~$d = m$.
\end{proof}

Using \cref{lem:magicweightsPVC} and the partial kernel of Angel~\etal~\citep{ABEL18}, we obtain 
\cref{thm:pvc-small-weights}.

\begin{proposition}%
  \label{thm:pvc-small-weights}
  \prob{Power Vertex Cover} admits a polynomial kernel with respect to the number of non-zero values in a solution.
\end{proposition}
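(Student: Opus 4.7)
The proof plan is to combine two ingredients: the partial kernel of Angel~\etal~\citep{ABEL18} (which bounds the number of vertices and edges of the graph polynomially in the parameter~$k$, the number of non-zero values in a solution, but leaves the edge weights unbounded) with the weight-shrinking procedure of \cref{lem:magicweightsPVC}. Because the two shrinking steps act on orthogonal parts of the instance (structure versus numbers), they compose cleanly: first compress the graph, then compress the weights.

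First, I would apply the partial kernelization of Angel~\etal~\citep{ABEL18} to the input~$(G,w)$ of \prob{PVC} to obtain an equivalent instance~$(G',w')$ in which $|V(G')|$ and $|E(G')|$ are upper-bounded by a polynomial~$p(k)$. Equivalence here means both \textbf{(a)}~the optimum values agree (for the decision variant, $(G,w,t)$ is a yes-instance iff $(G',w',t)$ is) and \textbf{(b)}~solutions can be transferred back in polynomial time. At this stage the edge weights of~$w'$ may still have unbounded encoding length, so the total instance size is not yet bounded by any function of~$k$ alone.

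Next, I would pass to the equivalent formulation \prob{PVC2} justified by \cref{obs:powerVC}: since any optimal assignment assigns each vertex the weight of some incident edge or~$0$, the search may be restricted to maps $\mu\colon V(G')\to E(G')\cup\{\emptyset\}$. I then apply \cref{lem:magicweightsPVC} to~$(G',w')$, obtaining new edge weights~$\wmod{w}$ and a new threshold~$\wmod{k}$ (if we are in the decision setting) satisfying
\[
  \norm{\wmod{w}}{\infty},|\wmod{k}|\leq 2^{4(m'+1)^{3}}\cdot(2n'(m'+1)+1)^{(m'+1)(m'+3)},
\]
where $n'=|V(G')|$ and $m'=|E(G')|$. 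Since $n',m'\leq p(k)$, the encoding length of each entry of~$\wmod{w}$ and of~$\wmod{k}$ is polynomial in~$k$, so the whole instance~$(G',\wmod{w})$ (together with~$\wmod{k}$ if needed) has total size polynomial in~$k$. Finally, by \cref{lem:magicweightsPVC}\,(ii)--(iii), the new instance preserves both the set of optimal assignments and the yes/no answer, and translating back from \prob{PVC2} to \prob{PVC} is immediate via \cref{obs:powerVC}.

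The only step requiring care, and the place where I expect the main (but still mild) obstacle, is the interface between the partial kernel and the weight reduction: one must verify that the partial kernelization of Angel~\etal{} is parameter-preserving, i.e.\ that after its application the number of non-zero values in an optimal solution is still bounded in terms of the original~$k$, so that the bound~$p(k)$ on~$n'$ and~$m'$ genuinely translates into a $k$-bound after the weight-shrinking step. Provided this (which follows from the equivalence guaranteed by their construction), chaining the two algorithms yields an instance of total size polynomial in~$k$, which is the claimed kernel.
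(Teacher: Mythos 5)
Your proposal is correct and follows essentially the same route as the paper: the paper's proof likewise combines the partial kernel of Angel~\etal~\citep{ABEL18} (bounding the graph size polynomially in the parameter) with the weight-shrinking of \cref{lem:magicweightsPVC} via the \prob{PVC2} reformulation of \cref{obs:powerVC}. Your added remark about the interface between the two steps is a reasonable point of care, but it matches what the paper implicitly relies on rather than deviating from it.
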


\subsection{Chamberlin-Courant Committee with Cardinal Utilities}
\label{ssec:cccc}
Another exemplary application is the following problem from computational social choice.
It deals with the Chamberlin-Courant voting rule~\citep{ChamberlinC1983},
which already has been studied from a parameterized complexity point of view~\citep{FluschnikSTW19,MisraSV17,SkowronF17}.

\dectaskprob{Chamberlin-Courant Committee with Cardinal Utilities (C$^4$U)%
}%
{A set~$V$ of voters, 
a set~$A$ of alternatives, 
  a function~$u\colon V\times A\to\Q_{\geq 0}$, and~$k\in\N$.}%
{Find a subset~$A'\subseteq A$ of size at most~$k$ that maximizes
  \begin{align}
    \sum_{v \in V} \max_{a\in A'} u(v,a).\label{gfcccu}
  \end{align}
}
We will show that the goal function~\eqref{gfcccu} is linearizable.

\begin{lemma}
 \label[lemma]{lem:magicweightscccu}
 There is an algorithm that,
 on an input consisting of an instance~$(V,A,u,k)$ of \prob{C$^4$U} with $n:=|V|$ and $m:=|A|$, and~$p\in\Q$,
 computes in time polynomial in~$|(V,A,u,k,p)|$ an instance~$(V,A,\bar u,k)$
 of C$^4$U and~$\bar{p}\in\Z$ such that
 \begin{enumerate}[(i)]
  \item $\norm{\bar{u}}{\infty},|\bar p|\leq 2^{4(nm+1)^3}\cdot (4n+1)^{(nm+1)(nm+3)}$,
  \item any subset~$A'\subseteq A$
    forms an optimal solution for $(V,A,u,k)$ if and only if $A'$ forms an optimal solution for~$(V,A,\bar u,k)$,
    and
  \item for any subset~$A'\subseteq A$ we have that $\sum_{v \in V} \max\limits_{a\in A'} u(v,a)\geq p \iff \sum_{v \in V} \max\limits_{a\in A'} \bar u(v,a)\geq \bar p$.
 \end{enumerate}
\end{lemma}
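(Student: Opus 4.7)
The plan is to follow the three-step recipe DDD outlined in \cref{fig:recipe}, exactly as was done for \miposycoAcr{} in \cref{ssec:zlin}. For the \emph{Decompose} step, I would view the utility function as a single weight vector $u \in \Q^{nm}$ whose coordinates are indexed by voter-alternative pairs, so that the value $u(v,a)$ is just the projection onto the coordinate $(v,a)$. By \cref{obs:1linweightfcts}, the atomic function $f((v,a), u) := u(v,a)$ is then \Zlin{$1$}.

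For the \emph{Determine} step, I would compose outward. Fix a candidate committee $A' \subseteq A$. Applying \cref{lem:ft!Z!summaxmin}\eqref{alphZlin:max} with bound $n = |A'| \leq m$ to the inner maximum shows that
\[
  g(v, A', u) := \max_{a \in A'} u(v,a)
\]
is \Zlin{$2$}. Applying \cref{lem:ft!Z!summaxmin}\eqref{alphZlin:sum} to the outer sum, which ranges over the $n = |V|$ voters, yields that the goal function
\[
  h(A', u) := \sum_{v \in V} g(v, A', u)
\]
is \Zlin{$2n$}.

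For the \emph{Deploy} step, I would invoke \cref{prop:ft!alphaZlin} with $\alpha = 2n$ and $d = nm$, applied to the weight vector $u$ and the threshold $p$. This produces in polynomial time an integral vector $\bar u \in \Z^{nm}$ and an integer $\bar p \in \Z$ satisfying
\[
  \norm{\bar u}{\infty}, |\bar p| \leq 2^{4(nm+1)^3}(2\alpha + 1)^{(nm+1)(nm+3)} = 2^{4(nm+1)^3}(4n+1)^{(nm+1)(nm+3)},
\]
giving claim~(i). Claim~(ii) follows directly from \cref{prop:ft!alphaZlin}\eqref{prop:ft!alphaZlin:opt} applied to any two candidate committees $A', A'' \subseteq A$, and claim~(iii) follows from \cref{prop:ft!alphaZlin}\eqref{prop:ft!alphaZlin:dec} applied to the threshold $p$.

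I do not anticipate a substantive obstacle: the goal function~\eqref{gfcccu} has precisely the sum-of-maxima structure already handled for \miposycoAcr{}. The only point to double-check is that $\bar u$ remains non-negative, so that $(V, A, \bar u, k)$ is a valid C$^4$U instance; this is guaranteed by \cref{obs:signandorder}\eqref{obs:signandorder:sign} since $\alpha = 2n \geq 1$ and the original utilities are non-negative.
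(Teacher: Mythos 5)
Your proposal is correct and follows essentially the same route as the paper's proof: decompose the goal function~\eqref{gfcccu} into the \Zlin{$1$} coordinate projections of the weight vector $u\in\Q^{nm}$, apply \cref{lem:ft!Z!summaxmin}\eqref{alphZlin:max} and then \eqref{alphZlin:sum} to obtain \Zlin{$2n$}, and deploy \cref{prop:ft!alphaZlin} with $\alpha=2n$ and $d=nm$. The added remark on non-negativity of $\bar u$ via \cref{obs:signandorder}\eqref{obs:signandorder:sign} is a correct extra detail the paper leaves implicit.
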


\begin{proof}
  Observe that the goal function can be restated as follows:
  \[ \sum_{v \in V} \max_{a\in A'} u(v,a) = \sum_{v \in V} \max_{(v,a)\in \{v\}\times A'} u(v,a). \]
  Due to~\cref{obs:1linweightfcts},
  $u\colon V\times A\to \Q_{\geq 0}$ is \Zlin{1}.
  Note that the weight vector representing~$u$ is of dimension~$d=nm$.
  By~\cref{lem:ft!Z!summaxmin}\eqref{alphZlin:max},
  we know that $g(v,A')=\max_{(v,a)\in \{v\}\times A'} u(v,a)$ is \Zlin{2}.
  Finally,
  by \cref{lem:ft!Z!summaxmin}\eqref{alphZlin:sum},
  $h(V,A')=\sum_{v\in V} g(v,A')$ is~\Zlin{$2n$}.
  By~\cref{prop:ft!alphaZlin} with~$\alpha=2n$ and~$d=nm$,
  the claim follows.
\end{proof}

\Cref{lem:magicweightscccu} yields 
\cref{thm:c4u-small-weights}.

\begin{proposition}%
  \label{thm:c4u-small-weights}
  \prob{Chamberlin-Courant Committee with Cardinal Utilities} admits a problem kernel of size polynomial in the combined parameter 
  number of voters and alternatives.
\end{proposition}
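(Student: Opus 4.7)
The plan is to derive the polynomial kernel as a direct consequence of \cref{lem:magicweightscccu}, which does essentially all of the work. First, I would apply \cref{lem:magicweightscccu} to the input instance $(V,A,u,k,p)$, where $p$ is the decision threshold coming from the \emph{Task} specification (i.e.\ we are asking whether there is an $A'\subseteq A$ with $|A'|\leq k$ and $\sum_{v\in V}\max_{a\in A'}u(v,a)\geq p$). In polynomial time this yields an equivalent instance $(V,A,\bar u, k, \bar p)$ such that
\[
\norm{\bar u}{\infty},\ |\bar p|\ \leq\ 2^{4(nm+1)^3}\cdot (4n+1)^{(nm+1)(nm+3)}.
\]

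Second, I would verify that the reduced instance already is the desired polynomial kernel in the combined parameter $n+m$. The committee size $k$ is trivially bounded by~$m$ (otherwise the instance is a trivial \emph{yes}-instance and can be replaced by a constant-size dummy). The encoding length of any entry of $\bar u$ and of $\bar p$ is $\log_2$ of the above bound, that is, $O((nm)^3 + (nm)^2\log n)$ bits, which is polynomial in $n+m$. Since there are $nm$ utility values and the sets $V$ and $A$ are described in $O((n+m)\log(n+m))$ bits, the total size of the reduced instance is polynomial in $n+m$.

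Third, for equivalence I would invoke part~(iii) of \cref{lem:magicweightscccu}, which guarantees that for every candidate committee $A'\subseteq A$, one has $\sum_{v\in V}\max_{a\in A'}u(v,a)\geq p$ if and only if $\sum_{v\in V}\max_{a\in A'}\bar u(v,a)\geq\bar p$; in particular a subset of size at most $k$ certifying the original \emph{yes}-instance certifies the reduced one and vice versa.

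I do not anticipate a genuine obstacle: the substantive step, namely showing that the Chamberlin--Courant goal function is \Zlin{$2n$} and thus amenable to \cref{prop:ft!alphaZlin}, is already carried out in the proof of \cref{lem:magicweightscccu}. The only bookkeeping care needed is to make sure that the task formulation (which is stated without an explicit threshold~$p$) is handled by passing the optimum-threshold as an additional scalar to the losing-weight step, exactly as in \cref{cor:FrankTardos}; this is precisely what part~(iii) of the lemma provides.
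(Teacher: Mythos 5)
Your proposal is correct and follows exactly the route the paper takes: the paper derives \cref{thm:c4u-small-weights} directly from \cref{lem:magicweightscccu}, and your additional bookkeeping (bounding $k$ by $m$, estimating the encoding length of the shrunken utilities, and invoking part~(iii) for equivalence) just makes explicit what the paper leaves implicit.
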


\section{Concluding Remarks}
\label{sec:concl}

The losing\hyp weight technique due to Frank and Tardos~\citep{ft87} is a key tool to obtain polynomial problem kernels for weighted parameterized problems.
While Marx and V\'egh~\citep{MV15} and Etscheid~\etal~\citep{ekmr17} proved the usefulness of the technique for several problems with additive goal functions,
we demonstrated its applicability for %
a larger class of functions (linearizable functions) 
containing next to additive also non-additive functions.
In addition,
in~\cref{ssec:apps} we displayed our recipe DDD 
to be a neat manual 
for applying the losing-weight technique to the class of 
(linearizable) 
functions.

As Etscheid~\etal~\citep{ekmr17} pointed out, 
one direction for future work is to improve the upper bound in~\cref{thm:FrankTardosRephr} on the maximum norm of the output vector.
In this direction,
Eisenbrand~\etal~\citep{EHK+19} recently proved a stronger upper bound,
yet non-constructively.
Another direction, seemingly not addressed so far, aims
for a better running time:
Frank and Tardos~\citep{ft87} state no explicit running time of their algorithm, and Lenstra~\etal~\citep[Proposition~1.26]{LLL82} state that their simultaneous Diophantine approximation algorithm, which forms a subroutine in Frank and Tardos' technique, runs in $d^6\cdot\tlog{\norm{w}{\infty}}^{O(1)}$~time.
This is clearly a bottleneck for the practical applicability
of the techniques we discussed.
Hence,  we put forward the following:
Can~\cref{thm:FrankTardosRephr} be executed in quadratic, or even linear time?
We point out that for approximate kernelizations,
there is an analog to~\cref{thm:FrankTardosRephr} executable in linear time~\citep{BFT20b}.

\subparagraph*{Acknowledgments.}
We thank Oxana Yu.~Tsidulko for pointing 
out how to apply the Frank-Tardos theorem to the \prob{Uncapacitated Facility Location} problem (\cref{sec:uflp}).
We thank the anonymous reviewers from 
\emph{Discrete Applied Mathematics}
for their constructive feedback.
In memory of Rolf Niedermeier, 
our colleague, 
friend, 
and mentor, 
who sadly passed away before this paper was published.

\subparagraph*{Funding.}
Till Fluschnik was supported by the
DFG project TORE (NI~369/18).

\bibliographystyle{kernel_symcon}
\bibliography{b-short}

\end{document}